	\def\BibTeX{{\rm B\kern-.05em{\sc i\kern-.025em b}\kern-.08em
			T\kern-.1667em\lower.7ex\hbox{E}\kern-.125emX}}
\begin{document}
		
		\title{Multipair Two-Way DF Relaying with Cell-Free Massive MIMO}
		\author{Anastasios K. Papazafeiropoulos, Pandelis Kourtessis, Symeon Chatzinotas, John M. Senior,
			\thanks{A. Papazafeiropoulos is with the Communications and Intelligent Systems Research Group,
				University of Hertfordshire, Hatfield AL10 9AB, U. K., and with SnT at the University of Luxembourg, Luxembourg. P. Kourtessis and John M. Senior are with the Communications and Intelligent Systems Research Group, University of Hertfordshire, Hatfield AL10 9AB, U. K. S. Chatzinotas is with the SnT at the University of Luxembourg, Luxembourg. E-mails: tapapazaf@gmail.com, p.kourtessis@herts.ac.uk, symeon.chatzinotas@uni.lu.}}
		\IEEEtitleabstractindextext{\begin{abstract}
				We consider a two-way half-duplex decode-and-forward (DF) relaying system with multiple pairs of single-antenna users assisted by a cell-free (CF) massive multiple-input multiple-output (mMIMO) architecture with multiple-antenna access points (APs). Under the practical constraint of imperfect channel state information (CSI), we derive the achievable sum spectral efficiency (SE) for a finite number of APs with maximum ratio (MR) linear processing for both reception and transmission in closed-form. Notably, the proposed CF mMIMO relaying architecture, exploiting the spatial diversity, and providing better coverage, outperforms the conventional collocated mMIMO deployment. Moreover, we shed light on the power-scaling laws maintaining a specific SE as the number of APs grows. A thorough examination of the interplay between the transmit powers per pilot symbol and user/APs takes place, and useful conclusions are extracted. Finally, differently to the common approach for power control in CF mMIMO systems, we design a power allocation scheme maximizing the sum SE.
			\end{abstract}
			
			\begin{IEEEkeywords}
				Two-way relaying, cell-free massive MIMO systems, decode-and-forward, power-scaling law, beyond 5G MIMO.
			\end{IEEEkeywords}
			%
		}
		\maketitle
		
		\section{Introduction}
		\label{Intro}
		Massive multiple-input multiple-output (mMIMO) systems, where a large number of antennas in both collocated and distributed setups serves simultaneously a lower number of users, has become one of the key fifth-generation (5G) physical-layer technologies towards higher throughput and energy efficiency by means of simple linear signal processing \cite{Larsson2014,Marzetta2016}. Recently, a distributed mMIMO architecture under the name of cell-free (CF) mMIMO, enjoying the benefits of network MIMO, has emerged by providing higher coverage probability and exploiting the diversity against shadow fading \cite{Ngo2017}. In particular, a CF mMIMO system includes a large number of single-antenna access points (APs) that is  connected to a central processing unit (CPU) and serves jointly all  users by means of coherent joint transmission/reception. In this direction,  in  \cite{Ngo2018}, APs were equipped with multiple antennas to increase the array and diversity gains. Generally, CF mMIMO systems outperform small cells (SCs) and collocated deployments but
		given that this is an emerging promising architecture at its infancy, its study is limited  \cite{Ngo2018,Buzzi2017a,Bashar2019,Alonzo2019,Bjoernson2020,Papazafeiropoulos2020,Papazafeiropoulos2020a,Papazafeiropoulos2020b}. 
		For example, 
		 authors in~\cite{Buzzi2017a} achieved better data rates by means of a user-centric approach, where the APs serve a group of users instead of all of them, while in \cite{Papazafeiropoulos2020}, the realistic spatial randomness of the APs was taken into account by means of a Poisson point process (PPP) to obtain the coverage probability in CF mMIMO systems.

		Multipair relaying systems, where multiple pairs of users communicate simultaneously by means of a relay to enhance the network coverage, have been improved further by the introduction of the mMIMO characteristics, which enhance the spatial diversity and achieve an order of magnitude spectral efficiency (SE) improvement \cite{Ngo2014,Tan2017}. This technique, known as multipair mMIMO relaying, initially considered one-way transmission by means of amplify-and-forward (AF) as well as decode-and-forward (DF) protocols \cite{Ngo2014,Gao2016,Tan2017,Papazafeiropoulos2018a}. For instance, in the case of DF relaying, the authors in \cite{Ngo2014} examined the achievable SE in Rayleigh fading channels for both maximum-ratio (MR) and zero-forcing (ZF) linear processing, while in \cite{Tan2017}, optimization of the energy efficiency was performed. In reference to AF relaying, the power allocation and max-min user selection was investigated in \cite{Gao2016}. 
		
	Unfortunately, the one-way transmission strategy incurs an SE degradation by $ 50\% $ \cite{Tan2017,Rankov2007}. To reduce this loss, the two-way mMIMO relaying, where bidirectional simultaneous data transmission applies, has attracted significant attention \cite{Cui2014,Feng2017,Dai2016a,Zhang2016,Kong2018,Kong2019,Ngo2013a}. Hence, its main advantage in comparison to one-way relaying is the reduction of the time required  for information exchange between the user pairs into just two time slots. For instance, \cite{Cui2014} and \cite{Feng2017} characterized the power-scaling laws for half-duplex (HD) and full-duplex (FD) transmission, respectively. Still,  first works assumed only perfect channel state information (CSI), which is highly unrealistic since practical systems have the availability of only imperfect CSI.  Thus, a training phase with pilot transmission and minimum mean-square-error (MMSE) estimation at the relay was considered in \cite{Dai2016a} and \cite{Zhang2016} for AF and DF protocols, respectively. Apart from evaluating the impact of imperfect CSI on the system performance, several other important questions arise. For example, \cite{Kong2018} and \cite{Kong2019} investigated the interplay among the transmit powers of the pilot, user, and relay for DF and AF, respectively. Interestingly, distributed relaying for multipair two-way channels has been studied but only for AF and perfect CSI conditions \cite{Wang2011,Ngo2013a}.
				\subsection{Motivation}
		This work relies on vital observations: i) Collocated mMIMO relaying in \cite{Kong2018} has not fully taken advantage of the benefits of practical distributed mMIMO systems. In particular, CF mMIMO systems exploit the spatial diversity, provide better coverage since the APs are closer to the users, and suppress the inter-AP interference due to  the APs cooperation\footnote{Some other differences between the two are architectures follow. First, in CF mMIMO systems, each AP obtains its local CSI. In this way, we achieve to alleviate the computational burden at the CPU. Moreover, CF mMIMO systems achieve increased fairness with respect to the users since, in CF, it is more possible to have an AP close to a user. Also, CF achieves lower latency since the APs are closer to the users. Actually, their advantageous layout is more attractive for mobile edge computing and caching. Notably, in \cite{Bjoernson2020}, being one of the references pointing to the differences between CF mMIMO and single-cell collocated mMIMO systems, the authors highlighted the differences in the generation of the correlation matrices and the allocation of the pilots.}. ii) Two-way relaying is more advantageous than one-way transmission. iii) Given that AF undergoes noise amplification, DF may perform better at a low signal-to-noise ratio (SNR) \cite{Kim2011}. Thus, DF is a more attractive method for mMIMO relaying normally operating at the low power regime \cite{Larsson2014}. iv) DF is more flexible than AF in the case of two-way relaying since it can allow power allocation in both directions \cite{Gao2013}. v) Although some works have studied FD for CF mMIMO systems \cite{Vu2019,Wang2020,Nguyen2019}, FD may not be advisable for CF mMIMO systems due to the large power level difference of the transmit/received signals in the near/far-field while the APs might be quite close to the users. vi) The existing literature on CF mMIMO relaying has not studied at all the insightful power-scaling laws.
		\subsection{Contributions and Outcomes}
		The main contributions are summarized as follows.
		\begin{itemize}
			\item Motivated by the above observations, we   establish the theoretical framework for a two-way CF mMIMO relaying system with imperfect CSI by employing the DF protocol, where a large number of distributed APs plays the role of a relay\footnote{This framework will be the ground to study other interesting topics on CF mMIMO systems and two-way relaying such as the impact of imperfect backhaul links \cite{Bashar2019}.}. Notably, the two-way design enhances the CF mMIMO relaying performance by achieving a significant increase in the SE with comparison to one-way communication but also CF mMIMO is expected to enhance two-way transmission due its accompanying advantages compared to collocated mMIMO systems.
			\item Contrary to the existing work~\cite{Kong2018}, which has studied two-way mMIMO relaying with a large number of collocated antennas, we have accounted for the emerging CF mMIMO architecture at the position of the large collocated deployment. Notably, although~\cite{Kong2018} proposes asymptotic approximations of the SE for a large number of antennas in terms of  deterministic equivalents, we consider a finite number of APs, where the analysis is more general while we also cover the scenario, where the number of APs grows to infinity\footnote{Among other differences between this and~\cite{Kong2018}, we would like to highlight that the  channel model and the power constraints are different. In our work, the power constraints concern each AP and the channel model describes the link from each AP
			to a user. On the contrary, in~\cite{Kong2018}, the power constraints concern the single collocated relay, and the channel model describes the link from the collocated relay to a user. Moreover, our expressions include  summations over the number of APs, corresponding to the contributions from different APs. Such summations do not appear in~\cite{Kong2018} and result in simplified equations while in this work, equations are more complicated and require different manipulations which are relevant to the CF mMIMO analysis. Also, in this work, the channel estimation takes place at each AP. On the contrary, in~\cite{Kong2018}, the
			channel estimation is performed at the collocated relay.}. In addition, contrary to \cite{Ngo2013a}, which assumed the AF protocol and the unrealistic assumption of perfect CSI, we have considered imperfect CSI and the DF protocol, being more suitable for CF mMIMO systems. Also, differently to~\cite{Vu2019,Wang2020,Nguyen2019} which considered FD transmission for CF mMIMO systems, we focus on a totally different architecture being HD two-way transmission, and we investigate the problems of power-scaling laws and power allocation by maximizing the sum SE. Notably, previous works on CF considered only the max-min fairness method to obtain the power control coefficients.
			\item We derive the achievable sum SE of a two-way CF mMIMO relaying network employing the DF protocol with imperfect CSI and linear processing by means of maximum ratio combiner (MRC) for the uplink as well as maximum ratio transmission (MRT) for the downlink in closed form for a finite number of APs and we demonstrate better performance over the collocated deployment with a large number of antennas\footnote{The application of other linear techniques such as zero-forcing presents certain trade-offs between complexity and performance. For instance, they demand more backhaul, which might be prohibitive in the case of large distributed networks (CF mMIMO systems). In particular, the results could also be extended by incorporating the more robust MMSE processing, as suggested recently in \cite{Bjoernson2020}. Given the MMSE intractability, the extension can take place  by simulations or by the deterministic equivalent analysis \cite{Papazafeiropoulos2015a,Papazafeiropoulos2017a,Papazafeiropoulos2020}. The significance of these observations suggests them to be an interesting topic of future research.}. To the best of our knowledge, no other prior work has obtained similar expressions under the CF mMIMO relaying consideration. 
			\item 
We carry out an asymptotic analysis for this architecture to investigate the power-scaling laws maintaining a specific SE as the number of APs increases. Notably, this is the unique work on CF mMIMO relaying that obtains power scaling laws, which are different compared to the scaling laws in~\cite{Kong2018} since they include summations with 	respect to the number of APs and the relevant variables correspond to different APs instead of a  relay with collocated antennas. We observe a trade-off among the transmit power of each pilot symbol, user, and relay, and we shed light on the impact of the scaling parameters.
			\item We formulate an optimization problem for CF mMIMO systems, maximizing the sum SE by keeping constant the total transmit power, in order to obtain the necessary power control coefficients while previous works on CF were relied on a power control by using the max-min fairness approach. Numerical results show the improvement of the sum SE compared to uniform power allocation.
		\end{itemize}
		
		\subsection{Paper Outline} 
		The remainder of this paper is organized as follows. Section~\ref{System} presents the system model of a two-way CF mMIMO relaying system with multiple antennas APs employing the DF protocol. This section includes also the channel estimation and data transmission phases. Section~\ref{SEA} provides the SE analysis for a finite number of APs. Section~\ref{PE} presents the power-scaling laws under different power settings while Section~\ref{PA} addresses the optimal power allocation.
		The numerical results are discussed in Section~\ref{Numerical}, and Section~\ref{Conclusion} concludes the paper.
		
		\subsection{Notation}Vectors and matrices are denoted by boldface lower and upper case symbols, respectively. The symbols $(\cdot)^\T$,  $(\cdot)^\H$, and $(\cdot)^*$ express the transpose, Hermitian transpose, and conjugate operators, respectively. The expectation and variance operators are denoted by $\EE\left[\cdot\right]$ and $\var\left[\cdot\right]$, respectively. Also, $\bb \sim \cC\cN{(\b0,\mathbf{\Sigma})}$ represents a circularly symmetric complex Gaussian vector with {zero mean} and covariance matrix $\mathbf{\Sigma}$.
		
		\section{System Model}\label{System} 
		As illustrated in Fig.~\ref{Fig0}, we consider a CF mMIMO architecture, where a set of APs playing the role of distributed relays, assists the exchange of information in a multipair two-way relaying system. Specifically, we assume that a set $ \mathcal{W}=\{1,\ldots, W\}$ of $ W =|\mathcal{W}|$ communication user pairs, consisted of users $ \mathrm{T}_{\mathrm{A},i} $ and $ \mathrm{T}_{\mathrm{B},i} $, $ i=1,\ldots, W $, is served simultaneously by means of the set $ \mathcal{M}=\{1,\ldots, M\}$ of $ M=|\mathcal{M}| $ APs in the same time-frequency resources. Moreover, all APs connect to a CPU via perfect backhaul links. Certain conditions such as severe shadowing do not allow the existence of direct links between the user pairs. Also, each AP is equipped with $ N $ antennas, and each user has a single antenna. All the nodes, i.e., the user pairs and the APs are randomly distributed in a wide area and operate in the HD mode.

		\begin{figure*}[!h]
			\begin{center}
				\includegraphics[width=0.9\linewidth]{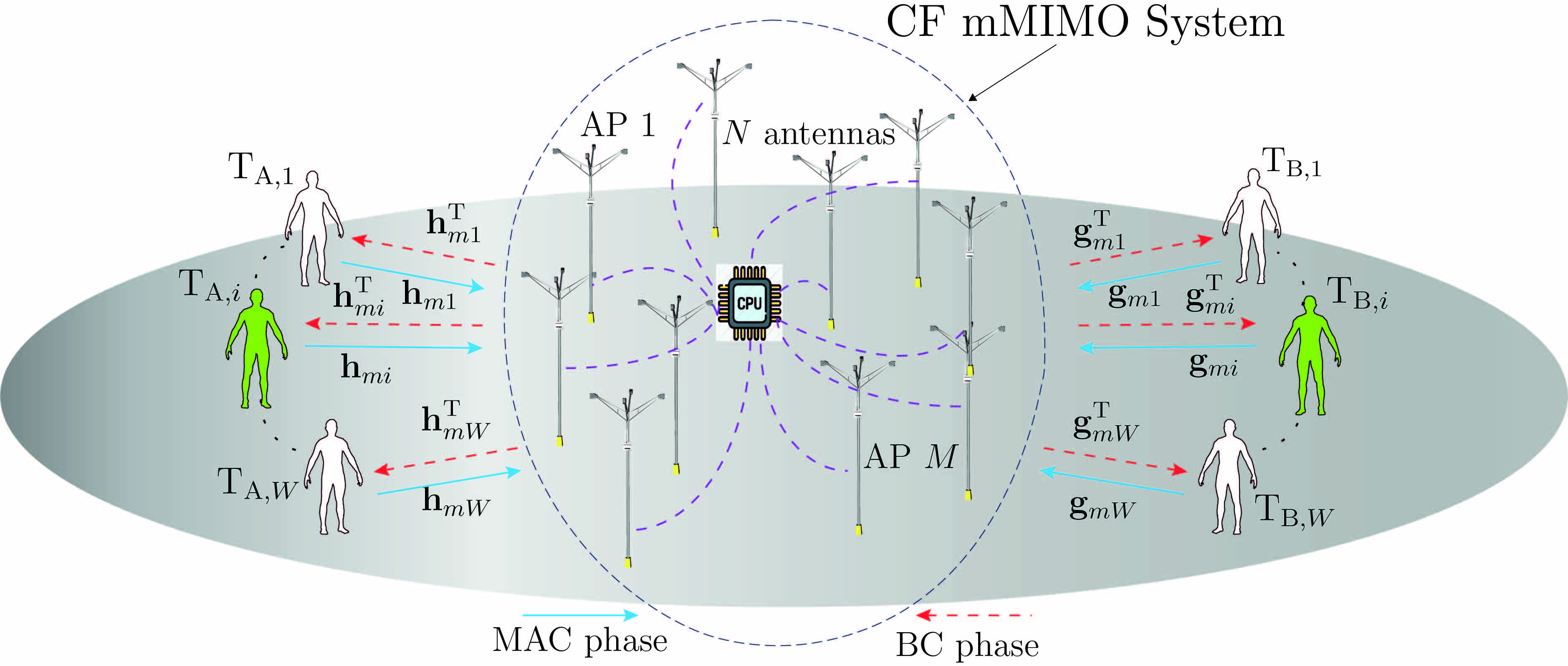}
				\caption{\footnotesize{A multipair two-way CF mMIMO relaying network with $ M $ multi-antenna APs and $ W $ user pairs.}}
				\label{Fig0}
			\end{center}
		\end{figure*}
		
		The system operation takes place within a coherence interval under the time division duplex (TDD) protocol where channel reciprocity is met \cite{Marzetta2016}. The data transmission phase includes two stages, namely, the multiple-access channel (MAC) and broadcasting (BC) stages. In the former, i.e., the MAC stage, all users transmit to the APs. In particular, the message, sent by user $i $, is decoded by means of joint processing at the CPU where all APs have sent their received signals. In a similar way, during the second stage, all APs transmit simultaneously to all user pairs.
		
		The channel model includes both small and large-scale fading. Especially, the large-scale fading describes the effects of shadowing and path-loss. Also, the large-scaling fading changes slowly, i.e., it can be assumed constant for several coherence intervals while the small-scale fading stays static during the duration of a coherence interval but it changes from one interval to the next. In mathematical terms, the channel vector between $ \mathrm{T}_{\mathrm{A},i}$ and $ m $th AP is given by $ \bh_{mi}\in \mathbb{C}^{N \times 1} \sim\mathcal{CN}\left(\b0, \al_{\mathrm{A},mi}\Id_{N}\right) $, known as uncorrelated Rayleigh
		fading, where $ \al_{\mathrm{A},mi} $ represents the large-scale fading of the corresponding link \cite{Ngo2017}. Similarly, the channel between $ \mathrm{T}_{\mathrm{B},i}$ and $ m$th AP is denoted by $ \bg_{mi}\in \mathbb{C}^{N \times 1}\sim\mathcal{CN}\left(\b0, \al_{\mathrm{B},mi}\Id_{N}\right) $ with $ \al_{\mathrm{B},mi} $ being the large-scale fading of this link. 
		
		\subsection{Channel Estimation}
		Given that the propagation channels are piece-wise constant over a coherence interval, both $ \bh_{mi}$ and $ \bg_{mi}$ need to be estimated in every interval by means of pilot transmission. Hence, let $ \tau_{\mathrm{c}} $ and $ \tau_{\mathrm{p}} $ be the durations of the coherence interval and the uplink training in symbols ($ \tau_{\mathrm{p}} <\tau_{\mathrm{c}}$) \cite{Ngo2017}. Both $ \mathrm{T}_{\mathrm{A},i}$ and $ \mathrm{T}_{\mathrm{B},i}$, $ i=1\dots,W $, send simultaneously orthogonal pilot sequences $ \bphi_{\mathrm{A},i}\in \mathbb{C}^{\tau_{\mathrm{p}}\times 1} $ and $ \bphi_{\mathrm{B},i}\in \mathbb{C}^{\tau_{\mathrm{p}}\times 1} $, respectively\footnote{Works on CF  mMIMO systems usually assume non-orthogonal pilots among the users. However, popular works in this area exist  that are based on the assumption of orthogonal pilots e.g., \cite{Nayebi2017}. In this work, which is the first one studying the multipair two-way transmission with CF mMIMO relaying, we have assumed orthogonal pilots among users to enable comparison with existing works on collocated mMIMO systems. Future studies on multipair two-way transmission could include in the design  the use of non-orthogonal pilots.}. Note that this mutual orthogonality requires $\tau_{\mathrm{p}} \ge 2 W $, $ \bphi_{\mathrm{A},i}^{\H}\bphi_{\mathrm{B},i} =0$, and $ \bphi_{\mathrm{A},i}^{\H}\bphi_{\mathrm{A},j}=\bphi_{\mathrm{B},i}^{\H}\bphi_{\mathrm{B},j}=0, ~\forall i\ne j $ \cite{Ngo2014}. In addition, we assume $ \|\bphi_{\mathrm{A},i}\|^{2} =\|\bphi_{\mathrm{B},i}\|^{2}=1$. Thus, the $ m $th AP receives
		\begin{align}
			\bY_{\mathrm{p},m}=\sqrt{\tau_{\mathrm{p}}p_{\mathrm{p}}}\sum_{i=1}^{W}\left(\bh_{mi} \bphi_{\mathrm{A},i}^{\H}+\bg_{mi}\bphi_{\mathrm{B},i}^{\H}\right)+\bW_{\mathrm{p},m},
		\end{align}
		where $ \bW_{\mathrm{p},m} $ is an $ N \times \tau_{\mathrm{p}} $ matrix describing the additive white Gaussian noise (AWGN) and having i.i.d elements distributed as $ \mathcal{CN}\left(0,1\right) $. Also, $ p_{\mathrm{p}} $ expresses the normalized transmit signal-to-noise ratio (SNR) of each pilot symbol.
		
		Following the approach in \cite{Ngo2014}, we obtain the MMSE estimated channels for the $ i $th pairbased on \cite{Verdu1998} as
		\begin{align}
			\bh_{mi}&=\hat{\bh}_{mi}+\tilde{\bh}_{mi}\label{estimated_H}\\
			\bg_{mi}&=\hat{\bg}_{mi}+\tilde{\bg}_{mi},\label{estimated_G}
		\end{align}
		where $ \hat{\bh}_{mi}\sim \mathcal{CN}\left(\b0,\phi_{\mathrm{A},mi}\Id_{N}\right) $ and $ \tilde{\bh}_{mi}\sim \mathcal{CN}\left(\b0,{e}_{\mathrm{A},mi}\Id_{N}\right) $ are the estimated and estimation error channel vectors, being mutually independent with $ \phi_{\mathrm{A},mi} =\frac{\tau_{\mathrm{p}}p_{\mathrm{p}}\al^{2}_{\mathrm{A},mi}}{1+\tau_{\mathrm{p}}p_{\mathrm{p}}\al_{\mathrm{A},mi}}$ and $ e_{\mathrm{A},mi} =\frac{\al_{\mathrm{A},mi}}{1+\tau_{\mathrm{p}}p_{\mathrm{p}}\al_{\mathrm{A},mi}}$, respectively. Similar expressions hold for the estimated channel and estimation error vectors in \eqref{estimated_G}, i.e., $ \hat{\bg}_{mi}\sim \mathcal{CN}\left(\b0,\phi_{\mathrm{B},mi}\Id_{N}\right) $ and $ \tilde{\bg}_{mi}\sim \mathcal{CN}\left(\b0,e_{\mathrm{B},mi}\Id_{N}\right) $ are mutually independent with $ \phi_{\mathrm{B},mi} =\frac{\tau_{\mathrm{p}}p_{\mathrm{p}}\al^{2}_{\mathrm{B},mi}}{1+\tau_{\mathrm{p}}p_{\mathrm{p}}\al_{\mathrm{B},mi}}$ and $e_{\mathrm{B},mi} =\frac{\al_{\mathrm{B},mi}}{1+\tau_{\mathrm{p}}p_{\mathrm{p}}\al_{\mathrm{B},mi}}$, respectively.

		\subsection{Data Transmission}
		The communication takes place in two phases, as described next.
		\subsubsection{Phase I}
		It is known as a MAC phase, where all user pairs simultaneously transmit their data to all relay nodes, being the APs. In other words, the received signal by AP $ m $ from the $ W $ pairs $ \mathrm{T}_{\mathrm{A},i}$, $ \mathrm{T}_{\mathrm{B},i}$ is given by
		\begin{align}
			\!\!\by_{m}\!=\!\sqrt{p_{\mathrm{u}}}\sum_{i=1}^{W}\!\left(\sqrt{\eta_{\mathrm{A},i}}\bh_{mi}q_{\mathrm{A},i}+\sqrt{\eta_{\mathrm{B},i}}\bg_{mi}q_{\mathrm{B},i}\right)+\bn_{\mathrm{u},m},\label{received1}
		\end{align}
		where $ q_{\mathrm{A},i} $ and $ q_{\mathrm{B},i} $ are the data by the $ i $th user pair weighted by the power control coefficients $ {\eta_{\mathrm{A},i}} $ and $ {\eta_{\mathrm{B},i}} $, respectively. Note that $ \EE\{|q_{\mathrm{A},i}|^{2}\}=\EE\{|q_{\mathrm{B},i}|^{2}\}=1 $ while $ 0\le \eta_{\mathrm{A},i}\le 1 $ and $ 0\le \eta_{\mathrm{B},i}\le 1$. Also, $ p_{\mathrm{u}} $ denotes the normalized uplink SNR and $ \bn_{\mathrm{u},m}\sim \mathcal{CN}\left(\b0,\Id_{N}\right) $ is the AWGN vector at AP $ m $.
		
		By accounting for linear detection by means of the linear receiver matrix $ \bW_{m} =\left[\bW_{m,\mathrm{A}}, \bW_{m,\mathrm{B}}\right]\in \mathbb{C}^{N\times 2W}$, the $ m $th AP multiplies its received signal $ \by_{m} $ with the transpose of the linear detector and obtains the post-processed signals as
		\begin{align}
			\br_{m}=\begin{bmatrix}
				\bW_{m,\mathrm{A}}^{\H} \by_{m}\\
				\bW_{m,\mathrm{B}}^{\H} \by_{m}
			\end{bmatrix}.
		\end{align}
		
		Herein, we assume application of MRC due to its low complexity. Also, it is suggested for implementation in a distributed fashion (locally at the APs), and it results in simplified expressions as suggested initially in \cite{Ngo2017}. Note that the top $ W $ rows of $ \br_{m} $ correspond to $ W $ signals from $ \mathrm{T}_{\mathrm{A}}$ while the remaining $ W $ bottom rows stand for the signals from $ \mathrm{T}_{\mathrm{B},i}$ $ \left(i=1,\ldots,W\right) $. Next, all APs sent their processed signals to the CPU through perfect backhaul links and the CPU obtains
	\begin{align}
				\br&= \sum_{m=1}^{M}\br_{m} \\
				&=\begin{bmatrix}
					\sum_{m=1}^{M}	\bW_{m,\mathrm{A}}^{\H} \by_{m}\\
					\sum_{m=1}^{M} \bW_{m,\mathrm{B}}^{\H} \by_{m}
				\end{bmatrix}.
			\end{align}
			Now, we focus on the detection of the transmitted symbols of the $ i $th user pair at the CPU. Specifically, we denote 
			\begin{align}
				r_{m_{i}}&=	 \sum_{m=1}^{M}(\bW_{m,\mathrm{A}}^{\H})_{i} \by_{m}\\
				r_{m_{W+i}}&=\sum_{m=1}^{M} 	(\bW_{m,\mathrm{B}}^{\H})_{i} \by_{m}
			\end{align}
			where $ r_{m_{i}} $ and $ r_{m_{W+i}} $ are the detected symbols from $ \mathrm{T}_{\mathrm{A},i}$ and $ \mathrm{T}_{\mathrm{B},i}$ , respectively. Note that $ (\bX)_{i} $ denotes the $ i $th row of $ \bX $. Thus, the total detected symbols at the CPU from both $ \mathrm{T}_{\mathrm{A},i}$ and $ \mathrm{T}_{\mathrm{A},i}$ are given by 
			\begin{align}
				\tilde{r}_{i}&=\sum_{m=1}^{M}r_{m_{i}}+r_{m_{W+i}}\\
				&=\sum_{m=1}^{M}\left(\hatvh^{\H}_{mi}+\hatvg^{\H}_{mi}\right)\by_{m}.\label{MacReceived}
			\end{align}
		\subsubsection{Phase II}
		This phase includes encoding of the received information and broadcasting it to all user pairs. Specifically, AP $ m $ applies linear precoding matrices in terms of MRT given by $ \hat{\bG}_{m}$ and $ \hat{\bH}_{m} $ to transmit the signals $ \bq_{\mathrm{A}}$ and $ \bq_{\mathrm{B}} $ to $ \mathrm{T}_{\mathrm{A},i} $ and $ \mathrm{T}_{\mathrm{B},i} $ by using the power control coefficients $\etav_{\mathrm{A},m}$ and $ \etav_{\mathrm{B},m} $, respectively. Thus, the transmit signal is written as
		\begin{align}
			\bx_{m}=\sqrt{p_{\mathrm{d}}}\left(\hat{\bG}_{m}^{*}\etav_{\mathrm{A},m}^{1/2}\bq_{\mathrm{A}}+\hat{\bH}_{m}^{*}\etav_{\mathrm{B},m}^{1/2}\bq_{\mathrm{B}}\right),
		\end{align}
		where $p_{\mathrm{d}}$ is the maximum normalized transmit power while for notational convenience, we have denoted $ \hat{\bG}_{m}=\left[\hat{\bg}_{m1}, \ldots, \hat{\bg}_{mW}\right] \in\mathbb{C}^{N \times W}$, $ \hat{\bH}_{m}=\left[\hatvh_{m1}, \ldots, \hatvh_{mW}\right] \in\mathbb{C}^{N \times W}$, $ \etav_{\mathrm{A},m}=\diag\left(\eta_{\mathrm{A},m1}, \ldots, \eta_{\mathrm{A},mW}\right)$, $ \etav_{\mathrm{B},m}=\diag\left(\eta_{\mathrm{B},m1}, \ldots, \eta_{\mathrm{B},mW}\right)$, $ \bq_{\mathrm{A}}=\left[q_{\mathrm{A},1}, \ldots, q_{\mathrm{A},W}\right]^{\T}$, and $ \bq_{\mathrm{B}}=\left[q_{\mathrm{B},1}, \ldots, q_{\mathrm{B},W}\right]^{\T}$. Note that the power control coefficients are chosen to satisfy the power constraint, $ \EE\{\|\bx_{m}\|^{2}\}\le p_{\mathrm{d}}$, which gives
		\begin{align}
			N\sum_{i=1}^{W} \left(\eta_{\mathrm{A},mi} \phi_{\mathrm{B},mi}+\eta_{\mathrm{B},mi} \phi_{\mathrm{A},mi}\right)\le 1.\label{prPower1}
		\end{align}
		
		Also, we have a total power constraint for all APs acting as relay nodes, i.e., $ \EE\{\sum^{M}_{m=1}\|\bx_{m}\|^{2}\}= p_{\mathrm{r}}$, which results in
		\begin{align}
			p_{\mathrm{d}}=\frac{p_{\mathrm{r}}}{N\sum_{i=1}^{W}\sum^{M}_{m=1}\left(\eta_{\mathrm{A},mi} \phi_{\mathrm{B},mi}+\eta_{\mathrm{B},mi} \phi_{\mathrm{A},mi}\right)}.\label{prPower}
		\end{align}

		The received signal at $ \mathrm{T}_{\mathrm{\mathrm{A}},i} $ by all APs is written as
		\begin{align}
			&z_{\mathrm{A},i}=\sum_{m=1}^{M}\bh_{mi}^{\T}\bx_{m}+n_{\mathrm{A},i} \nn\\
			&\!=\!\sqrt{p_{\mathrm{d}}}\!\sum_{m=1}^{M}\!\sum_{j=1}^{W}\nn\bh_{mi}^{\T}\!\left(\!\eta_{\mathrm{A},mj}^{1/2}\hat{\bg}_{mj}^{*}q_{\mathrm{A},j}\!+\!\eta_{\mathrm{B},mj}^{1/2}\hat{\bh}_{mj}^{*}q_{\mathrm{B},j}\!\right)\!+\!n_{\mathrm{A},i}. \end{align}
		
		Similarly, $ \mathrm{T}_{\mathrm{\mathrm{B}},i} $ receives by all APs
		\begin{align}
			z_{\mathrm{B},i}&\!=\!\sqrt{p_{\mathrm{d}}}\!\sum_{m=1}^{M}\!\sum_{i=1}^{W}\!\bg_{mi}^{\T}\!\left(\!\eta_{\mathrm{A},mj}^{1/2}\hat{\bg}_{mj}^{*}q_{\mathrm{A},j}\!+\!\eta_{\mathrm{B},mj}^{1/2}\hat{\bh}_{mj}^{*}q_{\mathrm{B},j}\!\right)\!\nn\\
			&+\!n_{\mathrm{B},i}.
		\end{align}
		Note that $ \bn_{\mathrm{X},i} \sim\mathcal{CN}\left(\b0,\Id_{N}\right)$ is the additive noise at $ \mathrm{T}_{\mathrm{\mathrm{X}},i} $ ($ \mathrm{X}\in\{\mathrm{A},\mathrm{B}\} $).

		\section{SE Analysis}\label{SEA}
		This section presents the SE performance analysis of the DF two-way CF mMIMO with MRC/MRT linear processing by means of exact and closed-form expressions. 
		
		\begin{figure*}
			\begin{align}
				\tilde{r}_{i}&=\sqrt{p_{\mathrm{u}}}\bigg(\underbrace{\sqrt{\eta_{\mathrm{A},i}}\EE\Big\{\sum_{m=1}^{M}\left(\hatvh^{\H}_{mi}\bh_{mi}+\hatvg^{\H}_{mi}\bh_{mi}\right)\Big\}q_{\mathrm{A},i}+\sqrt{\eta_{\mathrm{B},i}}\EE\Big\{\sum_{m=1}^{M}\left(\hatvh^{\H}_{mi}\bg_{mi}+\bg^{\H}_{mi}\hatvg_{mi}\right)\Big\}q_{\mathrm{B},i}}_\text{desired signal}\nn\\
				&+\underbrace{\sqrt{\eta_{\mathrm{A},i}}\sum_{m=1}^{M}\left(\hatvh^{\H}_{mi}\bh_{mi}+\hatvg^{\H}_{mi}\bh_{mi}\right)q_{\mathrm{A},i}-\sqrt{\eta_{\mathrm{A},i}}\EE\Big\{\sum_{m=1}^{M}\left(\hatvh^{\H}_{mi}\bh_{mi}+\hatvg^{\H}_{mi}\bh_{mi}\right)\Big\}q_{\mathrm{A},i}}_\text{estimation error,~A}\nn\\
				&+\sqrt{\eta_{\mathrm{B},i}}\underbrace{\sum_{m=1}^{M}\sum_{m=1}^{M}\left(\hatvh^{\H}_{mi}\bg_{mi}+\hatvg^{\H}_{mi}\bg_{mi}\right)q_{\mathrm{B},i}-\sqrt{\eta_{\mathrm{B},i}}\EE\Big\{\sum_{m=1}^{M}\left(\hatvh^{\H}_{mi}\bg_{mi}+\hatvg^{\H}_{mi}\bg_{mi}\right)\Big\}q_{\mathrm{B},i}}_\text{estimation error,~B}\nn\\
				&+\underbrace{\sum_{j\ne i}^{W}\sum_{m=1}^{M}\sqrt{\eta_{\mathrm{A},j}}\left(\hatvh^{\H}_{mi}\bh_{mj}+\hatvg^{\H}_{mi}\bh_{mj}\right)q_{\mathrm{A},j}+\sum_{j\ne i}^{W}\sum_{m=1}^{M}\sqrt{\eta_{\mathrm{B},j}}\left(\hatvh^{\H}_{mi}\bg_{mj}+\hatvg^{\H}_{mi}\bg_{mj}\right)q_{\mathrm{B},j}}_\text{inter-user interference}\bigg)\nn\\
				&+\underbrace{\sum_{m=1}^{M}\left(\hatvh^{\H}_{mi}+\hatvg^{\H}_{mi}\right)\bn_{\mathrm{u},m}}_\text{post-pocessed noise}.\label{received2}
			\end{align}
			\hrulefill
		\end{figure*}
		\subsubsection{Phase I}
		The received signal, given by \eqref{MacReceived} can be written as in~\eqref{received2} at the top of next page after substituting the received signal given by \eqref{received1}. Also, we have used \eqref{estimated_H} and \eqref{estimated_G} since the $ m $th AP has imperfect CSI and considers the estimated channels as its true channels. Taking as a reference the $ i $th user pair, we obtain its achievable SE in the MAC phase by means of the
		use-and-then-forget capacity bounding technique where the CPU uses only statistical knowledge
		of the channel when performing the detection and the unknown terms are treated as uncorrelated additive noise~\cite[Ch. 3]{Marzetta2016}, \cite{Bjoernson2017}. Note that, in the case of mMIMO, this bound exploits channel hardening and becomes tighter as the number of antennas increases. Relied on this assumption, many works in CF mMIMO exploited that channel hardening appears in the case of a large number of APs. However, in~\cite{Chen2018}, it was shown that, in general, channel hardening is not met in CF mMIMO systems with single-antenna APs, but it appears if multi-antenna APs are considered. Numerical results in Section~\ref{Numerical}, relied on this assumption, verify the tightness of this bound. Thus, the achievable SE is given by \begin{align}
			R_{i}^\text{MAC}=\frac{\tau_{\mathrm{c}}-\tau_{\mathrm{p}}}{2 \tau_{\mathrm{c}}}\log_{2}\left(1+\gamma_{i}^\text{MAC}\right),\label{received6}
		\end{align}
		where the corresponding SINR is given by 
		\begin{align}
			\gamma_{i}^\text{MAC}=\frac{\mathrm{DS}_{A_{i}}^\text{MAC}+\mathrm{DS}_{B_{i}}^\text{MAC}}{\mathrm{EE}_{\mathrm{A},i}^\text{MAC}+\mathrm{EE}_{\mathrm{B},i}^\text{MAC}+\mathrm{IUI}_{i}^\text{MAC}+\mathrm{N}_{i}^\text{MAC}}\label{gammaMac}
		\end{align}
		with the various terms provided by
		\begin{align}
			\mathrm{DS}_{A_{i}}^\text{MAC}&=\eta_{\mathrm{A},i}\Big|\EE\Big\{\sum_{m=1}^{M}\left(\hatvh^{\H}_{mi}\bh_{mi}+\hatvg^{\H}_{mi}\bh_{mi}\right)\!\!\Big\}\Big|^{2},\\
			\mathrm{DS}_{B_{i}}^\text{MAC}&=\eta_{\mathrm{B},i}\Big|\EE\Big\{\sum_{m=1}^{M}\left(\hatvh^{\H}_{mi}\bg_{mi}+\hatvg^{\H}_{mi}\bg_{mi}\right)\!\!\Big\}\Big|^{2},\\
			\mathrm{EE}_{\mathrm{A},i}^\text{MAC}&=\eta_{\mathrm{A},i}\var\Big\{\sum_{m=1}^{M}\left(\hatvh^{\H}_{mi}\bh_{mi}+\hatvg^{\H}_{mi}\bh_{mi}\right)\!\!\Big\},\nn\\
			\mathrm{EE}_{\mathrm{B},i}^\text{MAC}&=\eta_{\mathrm{B},i}\var\Big\{\sum_{m=1}^{M}\sqrt{\eta_{\mathrm{B},i}}\left(\hatvh^{\H}_{mi}\bg_{mi}+\hatvg^{\H}_{mi}\bg_{mi}\right)\!\!\Big\},\nn\\
			\mathrm{IUI}_{i}^\text{MAC}&=\sum_{j\ne i}^{W} \eta_{\mathrm{A},j}
			\EE\Big\{\Big|\sum_{m=1}^{M}\hatvh^{\H}_{mi}\bh_{mj}\Big|^{2}+\Big|\sum_{m=1}^{M}\hatvg^{\H}_{mi}\bh_{mj}\Big|^{2}\Big\}\nn\\
			&\!\!\!+\sum_{j\ne i}^{W} \eta_{\mathrm{B},j}
			\EE\Big\{\Big|\sum_{m=1}^{M}\hatvh^{\H}_{mi}\bg_{mj}\Big|^{2}+\Big|\sum_{m=1}^{M}\hatvg^{\H}_{mi}\bg_{mj}\Big|^{2}\Big\},\\
			\mathrm{N}_{i}^\text{MAC}&=\frac{1}{p_{\mathrm{u}}}\EE\Big\{\|\sum_{m=1}^{M}\hatvh_{mi}\|^{2}+\|\sum_{m=1}^{M}\hatvg_{mi}\|^{2}\Big\}.
		\end{align}
	Note that $ \mathrm{DS}$, $ \mathrm{EE} $, $ \mathrm{IUI}_{ik} $, and $ \mathrm{N} $ express the desired signal (DS) part, the estimation error (EE) part, the inter-user interference (IUI), and the thermal noise and can refer to both MAC and BC phases as well as users $ \mathrm{T}_{\mathrm{A},i}$ and $ \mathrm{T}_{\mathrm{B},i}$.
		Moreover, the achievable SE of the link $ T_{\mathrm{X},i} $ where $ \mathrm{X}\in\{\mathrm{A},\mathrm{B}\} $, is written as
		\begin{align}
			R_{\mathrm{X},i}^\text{MAC}=\frac{\tau_{\mathrm{c}}-\tau_{\mathrm{p}}}{2 \tau_{\mathrm{c}}}\log_{2}\left(1+\gamma_{\mathrm{X},i}^\text{MAC}\right)\label{received8}
		\end{align}
		with signal-to-interference-plus-noise ratio (SINR) given by
		\begin{align}
			\gamma_{\mathrm{X},i}^\text{MAC}=\frac{\mathrm{DS}_{\mathrm{X}_{i}}^\text{MAC}}{\mathrm{EE}_{\mathrm{A},i}^\text{MAC}+\mathrm{IUI}_{i}^\text{MAC}+\mathrm{N}_{i}^\text{MAC}}.
		\end{align}

		\subsubsection{Phase II}
		Since in practice, the users are not aware of the instantaneous CSI, we  take advantage again of the channel hardening and the use-and-then-forget bound from the massive MIMO (mMIMO) literature \cite{Bjoernson2017}. Thus, we assume that user $ \mathrm{T}_{\mathrm{A},i} $ has knowledge only of its statistics
		and performs partial self-interference cancellation to obtain
		
		\begin{align}
			& \hat{z}_{\mathrm{A},i}=z_{\mathrm{A},i}-\sqrt{p_{\mathrm{d}}}\EE\Big\{\sum_{m=1}^{M}\sqrt{ \eta_{\mathrm{A},mi}}\bh_{mi}^{\T}\hat{\bg}_{mi}^{*}\Big\}q_{\mathrm{A},i}\\
			&=\sqrt{p_{\mathrm{d}}}\EE\Big\{\sum_{m=1}^{M}\sqrt{ \eta_{\mathrm{B},mi}}\bh_{mi}^{\T}\hat{\bh}_{mi}^{*}\Big\}q_{\mathrm{B},i}\nn\\
			&\!+\!\sqrt{p_{\mathrm{d}}}\!\left(\sum_{m=1}^{M}\!\!\!\sqrt{ \eta_{\mathrm{B},mi}}\bh_{mi}^{\T}\hat{\bh}_{mi}^{*}\!-\!\EE\Bigg\{\!\sum_{m=1}^{M}\!\!\sqrt{ \eta_{\mathrm{B},mi}}\bh_{mi}^{\T}\hat{\bh}_{mi}^{*}\!\!\Bigg\}\!\!\right)
			\!\!q_{\mathrm{B},i} \nn \\
			&\!+\!\sqrt{p_{\mathrm{d}}}\!\left(\sum_{m=1}^{M}\!\!\!\sqrt{ \eta_{\mathrm{A},mi}}\bh_{mi}^{\T}\hat{\bg}_{mi}^{*}\!-\!\EE\Bigg\{\!\sum_{m=1}^{M}\!\!\sqrt{ \eta_{\mathrm{A},mi}}\bh_{mi}^{\T}\hat{\bg}_{mi}^{*}\!\!\Bigg\}\!\!\right)
			\!\!q_{\mathrm{A},i}\nn\\
			&+\sqrt{p_{\mathrm{d}}}\sum_{j\ne i}^{W}\!\!\sum_{m=1}^{M}\left(\!\sqrt{\eta_{\mathrm{A},mj}}\bh_{mi}^{\T}\hat{\bg}_{mj}^{*}
			q_{\mathrm{A},j}+\sqrt{\eta_{\mathrm{B},mj}}\bh_{mi}^{\T}\hat{\bh}_{mj}^{*}
			q_{\mathrm{B},j}\!\right)\nn\\
			&+n_{\mathrm{A},i},\label{received3}
		\end{align}
		where the first term expresses the desired signal, the second and third terms express the gain uncertainty, the fourth and fifth terms represent the residual self-interference, the sixth and seventh terms describe the inter-pair interference while the last term denotes the noise.
		The achievable SE of $ \mathrm{T}_{\mathrm{A},i} $ during the BC phase is obtained by
		\begin{align}
			R_{\mathrm{A},i}^\text{BC}=\frac{\tau_{\mathrm{c}}-\tau_{\mathrm{p}}}{2 \tau_{\mathrm{c}}}\log_{2}\left(1+\gamma_{\mathrm{A},i}^\text{BC}\right)\!,
		\end{align}
		where $ \gamma_{\mathrm{A},i}^\text{BC} $ is given by
		\begin{align}
			\!\!\!\gamma_{\mathrm{A},i}^\text{BC}=\frac{\mathrm{DS}_{A_{i}}^\text{BC}}{ \mathrm{BU}_{A_{i}}^\text{BC}+ \mathrm{BU}_{B_{i}}^\text{BC}+\sum_{j\ne i}\!\left( \mathrm{IUI}_{A_{j}}^\text{BC}+ \mathrm{IUI}_{B_{j}}^\text{BC}\right)+\frac{1}{p_{\mathrm{d}}}}\label{received5}
		\end{align}
		with
		\begin{align}
			\mathrm{DS}_{A_{i}}^\text{BC}&=\Big|\EE\Big\{\sum_{m=1}^{M}\sqrt{ \eta_{\mathrm{B},mi}}\bh_{mi}^{\T}\hat{\bh}_{mi}^{*}\Big\}\Big|^{2},\label{BC1}\\
			\mathrm{BU}_{A_{i}}^\text{BC}&=\var\Big\{\sum_{m=1}^{M}\sqrt{ \eta_{\mathrm{B},mi}}\bh_{mi}^{\T}\hat{\bh}_{mi}^{*}\Big\},\label{BC2}\\
			\mathrm{BU}_{B_{i}}^\text{BC}&=\var\Big\{\sum_{m=1}^{M}\sqrt{ \eta_{\mathrm{A},mi}}\bh_{mi}^{\T}\hat{\bg}_{mi}^{*}\Big\},\label{BC3}\\
			\mathrm{IUI}_{A_{i}}^\text{BC}&=\EE\Big\{\Big|\sum_{m=1}^{M}\sqrt{ \eta_{\mathrm{A},mj}}\bh_{mi}^{\T}\hat{\bg}_{mj}^{*}\Big|^{2}\Big\},\label{BC4}\\
			\mathrm{IUI}_{B_{i}}^\text{BC}&=\EE\Big\{\Big|\sum_{m=1}^{M}\sqrt{ \eta_{\mathrm{B},mj}}\bh_{mi}^{\T}\hat{\bh}_{mj}^{*}\Big|^{2}\Big\}.\label{BC5}
		\end{align}
		Note that $ \mathrm{BU} $ refers to the beamforming gain uncertainty (BU). Similarly, we obtain the achievable SE of $ \mathrm{T}_{\mathrm{B},i}$, $ R_{\mathrm{B},i}^\text{BC} $, after obtaining the post-processed signal at user $ \mathrm{T}_{\mathrm{B},i} $, $ \hat{z}_{\mathrm{B},i} $, by means of a similar expression to \eqref{received3}. Hence, the achievable SE of the $ i $th pair $ \mathrm{T}_{\mathrm{A},i}$ to $ \mathrm{T}_{\mathrm{B},i}$ is given by $ \min\left(R_{\mathrm{A},i}^\text{MAC}, R_{\mathrm{B},i}^\text{BC}\right) $ while the SE for the opposite direction is $ \min\left(R_{\mathrm{B},i}^\text{MAC}, R_{\mathrm{A},i}^\text{BC}\right) $ with the individual SEs obtained previously. As a result, the achievable SE of of the $ i $th pair during BC is given by the sum
		\begin{align}
			R_{i}^\text{BC}= \min\left(R_{\mathrm{A},i}^\text{MAC}, R_{\mathrm{B},i}^\text{BC}\right) +\min\left(R_{\mathrm{B},i}^\text{MAC}, R_{\mathrm{A},i}^\text{BC}\right).\label{received7}
		\end{align}
		
		According to \cite{Rankov2007,Gao2013}, the achievable SE of the $ i $th pair over both phases is given by \eqref{received6} and \eqref{received7} as
		\begin{align}
			R_{i}=\min\left(R_{i}^\text{MAC},R_{i}^\text{BC}\right)\!,
		\end{align}
		and the achievable sum SE of a multipair two-way CF mMIMO relaying system is given by
		\begin{align}
			R=\sum_{i=1}^{W}R_{i}.\label{TotalSE}
		\end{align}

		\begin{theorem}\label{theoremTotalSE} 
			The achievable sum SE of a multipair two-way CF mMIMO relaying system with DF protocol and MRC/MRT linear processing, for any finite $ 	M $ and $ W $, is given by \eqref{TotalSE} including the SEs provided by \eqref{MAC11}-\eqref{MAC1111} at the top of the next page with $ \bar{\mathrm{X}} $ being the complement of $ \mathrm{X}\in\{\mathrm{A},\mathrm{B}\} $.
			\begin{figure*}
				\begin{align}
					R_{i}^\text{MAC}&=\frac{\tau_{\mathrm{c}}-\tau_{\mathrm{p}}}{2 \tau_{\mathrm{c}}}\log_{2}\left(1+\frac{p_{\mathrm{u}}N\left(\left(\sum_{m=1}^{M}\sqrt{\eta_{\mathrm{A},i}}\phi_{\mathrm{A},mi}\right)^{2}+\left(\sum_{m=1}^{M}\sqrt{\eta_{\mathrm{B},i}}\phi_{\mathrm{B},mi}\right)^{2}\right)}{\sum_{m=1}^{M}\left(\sum_{j=1 }^{W}p_{\mathrm{u}}\left( \eta_{\mathrm{A},j}\al_{\mathrm{A},mj}+ \eta_{\mathrm{B},j}\al_{\mathrm{B},mj}\right)+1\right) \left(\phi_{\mathrm{A},mi}+\phi_{\mathrm{B},mi}\right)}\right)\!,\label{MAC11}\\
					R_{\mathrm{X},i}^\text{MAC}&=\frac{\tau_{\mathrm{c}}-\tau_{\mathrm{p}}}{2 \tau_{\mathrm{c}}}\log_{2}\left(1+\frac{p_{\mathrm{u}}\eta_{\mathrm{A},i}N\left(\sum_{m=1}^{M}\phi_{\mathrm{X},mi}\right)^{2}}{\sum_{m=1}^{M}\left(\sum_{j=1 }^{W}p_{\mathrm{u}}\left( \eta_{\mathrm{A},j}\al_{\mathrm{A},mj}+ \eta_{\mathrm{B},j}\al_{\mathrm{B},mj}\right)+1\right) \left(\phi_{\mathrm{A},mi}+\phi_{\mathrm{B},mi}\right)}\right)\!,\\
					R_{\mathrm{X},i}^\text{BC}&=\frac{\tau_{\mathrm{c}}-\tau_{\mathrm{p}}}{2 \tau_{\mathrm{c}}}\log_{2}\left(1+\frac{Np_{\mathrm{r}}\left(\sum_{m=1}^{M}\sqrt{\eta_{\mathrm{\bar{X}},mi}}\phi_{\mathrm{X},mi}\right)^{2}}{\sum_{m=1}^{M}\sum_{j=1}^{W}\left( p_{\mathrm{r}}\al_{\mathrm{X},mi}+1\right) \left(\eta_{\mathrm{A},mj}\phi_{\mathrm{B},mj}+\eta_{\mathrm{B},mj}\phi_{\mathrm{A},mj}\right)}\right)\!.\label{MAC1111}
				\end{align}
				\hrulefill
			\end{figure*}
		\end{theorem}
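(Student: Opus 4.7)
The plan is to derive all three expressions \eqref{MAC11}--\eqref{MAC1111} by substituting $\bh_{mi}=\hatvh_{mi}+\tilde{\bh}_{mi}$ and $\bg_{mi}=\hatvg_{mi}+\tilde{\bg}_{mi}$ into the DS/EE/BU/IUI/N quantities already isolated in \eqref{gammaMac} and \eqref{received5}, and then evaluating each moment in closed form. Three structural properties will do essentially all the work: (i) MMSE orthogonality, so that $\hatvh_{mi}$ and $\tilde{\bh}_{mi}$ are independent zero-mean circularly symmetric complex Gaussian vectors with covariances $\phi_{\mathrm{A},mi}\Id_N$ and $e_{\mathrm{A},mi}\Id_N$ satisfying $\phi_{\mathrm{A},mi}+e_{\mathrm{A},mi}=\al_{\mathrm{A},mi}$; (ii) orthogonality of the $2W$ training sequences, which makes $\hatvh_{mj}$, $\hatvg_{mk}$ and any foreign true channel mutually independent whenever the user indices differ; and (iii) independence of all channels across different APs. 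Property (iii), combined with zero-mean summands, repeatedly forces the cross-AP terms inside any ``variance of a sum over $m$'' to vanish, collapsing such sums to a single $\sum_{m}$, which is exactly the shape of the denominators in \eqref{MAC11}--\eqref{MAC1111}.

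For the MAC SINR, the desired-signal coefficient reduces to $\EE\{\hatvh_{mi}^{\H}\bh_{mi}\}=N\phi_{\mathrm{A},mi}$ per AP (the $\hatvg^{\H}\bh$ cross vanishes by (ii)), giving $\mathrm{DS}_{A_i}^{\mathrm{MAC}}=N^{2}\eta_{\mathrm{A},i}\bigl(\sum_{m}\phi_{\mathrm{A},mi}\bigr)^{2}$ and analogously for $B_i$. For $\mathrm{EE}_{\mathrm{A},i}^{\mathrm{MAC}}$ I will invoke the Gaussian moment identities $\EE\{\|\hatvh_{mi}\|^{4}\}=N(N+1)\phi_{\mathrm{A},mi}^{2}$ and $\EE\{|\hatvh_{mi}^{\H}\tilde{\bh}_{mi}|^{2}\}=N\phi_{\mathrm{A},mi}e_{\mathrm{A},mi}$, and for every IUI contribution the generic identity $\EE\{|\mathbf{a}^{\H}\mathbf{b}|^{2}\}=N\sigma_{a}^{2}\sigma_{b}^{2}$ holds thanks to (ii). The key simplification is that the per-AP variance of $\hatvh_{mi}^{\H}\bh_{mi}+\hatvg_{mi}^{\H}\bh_{mi}$ reduces, using $\phi+e=\al$, to $N\al_{\mathrm{A},mi}(\phi_{\mathrm{A},mi}+\phi_{\mathrm{B},mi})$, so the self-interference ($j=i$) term hidden in EE naturally absorbs into the IUI sum and yields $\mathrm{EE}_{\mathrm{A},i}+\mathrm{EE}_{\mathrm{B},i}+\mathrm{IUI}_{i}^{\mathrm{MAC}}=N\sum_{m}(\phi_{\mathrm{A},mi}+\phi_{\mathrm{B},mi})\sum_{j=1}^{W}(\eta_{\mathrm{A},j}\al_{\mathrm{A},mj}+\eta_{\mathrm{B},j}\al_{\mathrm{B},mj})$. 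Adding $\mathrm{N}_{i}^{\mathrm{MAC}}=\tfrac{N}{p_{\mathrm{u}}}\sum_{m}(\phi_{\mathrm{A},mi}+\phi_{\mathrm{B},mi})$ and multiplying numerator and denominator by $p_{\mathrm{u}}/N$ yields \eqref{MAC11}; retaining only the $\mathrm{X}$-half of the desired signal produces the single-link expression.

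For the BC phase I will evaluate \eqref{BC1}--\eqref{BC5} by splitting $\bh_{mi}^{\T}\hatvh_{mi}^{*}=\|\hatvh_{mi}\|^{2}+\tilde{\bh}_{mi}^{\T}\hatvh_{mi}^{*}$. The two summands are uncorrelated, so the per-AP variance becomes $N\phi_{\mathrm{A},mi}^{2}+N\phi_{\mathrm{A},mi}e_{\mathrm{A},mi}=N\phi_{\mathrm{A},mi}\al_{\mathrm{A},mi}$; all other BU and IUI moments are of the form $\EE\{|\bh_{mi}^{\T}\hatvx_{mj}^{*}|^{2}\}=N\al_{\mathrm{A},mi}\phi_{\mathrm{X},mj}$ by (ii), and cross-AP contributions again vanish because each summand is zero mean. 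Bundling the $j=i$ BU terms back with the $j\ne i$ IUI terms then collapses the full interference into $N\sum_{m}\al_{\mathrm{A},mi}\sum_{j=1}^{W}(\eta_{\mathrm{A},mj}\phi_{\mathrm{B},mj}+\eta_{\mathrm{B},mj}\phi_{\mathrm{A},mj})$. Finally, multiplying numerator and denominator of \eqref{received5} by $p_{\mathrm{d}}$ and substituting $p_{\mathrm{d}}$ from \eqref{prPower} turns the $1/p_{\mathrm{d}}$ noise term into precisely the ``$+1$'' inside the bracket of \eqref{MAC1111}, while the $p_{\mathrm{d}}N^{2}$ in the numerator becomes $p_{\mathrm{r}}N$ after the sum in the denominator of $p_{\mathrm{d}}$ cancels one power of $D=\sum_{i}\sum_{m}(\eta_{\mathrm{A},mi}\phi_{\mathrm{B},mi}+\eta_{\mathrm{B},mi}\phi_{\mathrm{A},mi})$. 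The main obstacle throughout is bookkeeping rather than any deep computation: tracking carefully which $j=i$ self-interference contributions are absorbed into the IUI sum, and, in the BC case, following how the $p_{\mathrm{d}}\to p_{\mathrm{r}}$ substitution reshapes the noise term into the $(j,m)$ double sum appearing in \eqref{MAC1111}; once the per-AP moments are computed, the remaining algebra is mechanical.
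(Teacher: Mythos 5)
Your proposal is correct and follows essentially the same route as the paper's Appendix A: decompose into DS/EE(BU)/IUI/noise terms, exploit independence across APs and across user/estimate pairs to kill cross terms, evaluate the per-AP Gaussian moments via $\EE\{\|\hatvh_{mi}\|^{4}\}=N(N+1)\phi_{\mathrm{A},mi}^{2}$ and $e_{\mathrm{A},mi}=\al_{\mathrm{A},mi}-\phi_{\mathrm{A},mi}$, and fold the $j=i$ variance terms into the interference sum before substituting the power normalization \eqref{prPower}. Your bookkeeping (including the $N/p_{\mathrm{u}}$ noise term, which the paper's \eqref{MAC385} states without the factor $N$ but implicitly uses) is consistent with the final expressions.
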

		\begin{proof}
			See Appendix~\ref{TotalSEproof}.
		\end{proof}
		
		Notably, our procedure results in exact closed-form results while other similar works rely on approximations. Also, regarding the dependence of the individual SEs with respect to the transmit power, we observe that they are interference-limited as expected \cite{Kong2018}. Furthermore, we notice an increase of the SEs with the number of APs by taking advantage of the CF mMIMO architecture, which is studied in depth below.
		
		\section{Power Efficiency}\label{PE}
		Herein, we present a detailed study concerning the achievable power savings by letting the number of APs grow large, i.e., $ M\to \infty $. These savings are known as power-scaling laws that allow preserving a specific SE while reducing the transmit powers. 
		
		We assume that all the users have the same transmit power, i.e., no power control is considered. In particular, let $ \eta_{\mathrm{A},i}=\eta_{\mathrm{B},i}=1$ for the sake of simplicity, we are going to shed light on the impact of the transmit power per user, pilot symbol, and relay on the separate SEs in the large APs limit, i.e., $ M \to \infty $.
		\subsection{Scenario A: $ p_{\mathrm{p}}= \frac{E_{\mathrm{p}}}{M^{\al} }$, and fixed $ p_{\mathrm{r}}, p_{\mathrm{u}} $}
		This scenario concerns the study of the power efficiency in the training phase. 
		\begin{proposition}\label{PropositionPEproof}
			For fixed $ p_{\mathrm{r}}, p_{\mathrm{u}} $, and $ E_{\mathrm{p}} $, when $ p_{\mathrm{p}}= \frac{E_{\mathrm{p}}}{M^{\al} }$ with $ \al>0 $ and $ M \to \infty $, we obtain
			\begin{align}
				\gamma_{i}^\text{MAC}&\!=\!\frac{p_{\mathrm{u}}N\frac{E_{\mathrm{p}}}{M^{\al} }\!\!\left(\!\!\left(\!\sum_{m=1}^{M}\!\al_{\mathrm{A},mi}\right)^{\!2}\!+\!\left(\sum_{m=1}^{M}\!\al_{\mathrm{B},mi}\right)^{\!2}\right)}{\!\!\!\displaystyle\sum_{m=1}^{M}\!\!\!\left( \sum_{j=1}^{W}\!\!\left(\!\al_{\mathrm{A},mj}\!+\! \al_{\mathrm{B},mj}\!\right)\!+\!1\!\!\right)\!\! \left(\!\al_{\mathrm{A},mi}\!+\!\al_{\mathrm{B},mi}\!\right)}\label{gammaMac1},\!\!\!\\
				\gamma_{X,i}^\text{MAC}&= \frac{p_{\mathrm{u}}N\frac{E_{\mathrm{p}}}{M^{\al} }\left(\sum_{m=1}^{M}\al_{\mathrm{X},mi}\right)^{2}}{\!\!\!\displaystyle\sum_{m=1}^{M}\!\!\!\left( \sum_{j=1}^{W}\!\!\left(\!\al_{\mathrm{A},mj}\!+\! \al_{\mathrm{B},mj}\!\right)\!+\!1\!\!\right)\!\! \left(\!\al_{\mathrm{A},mi}\!+\!\al_{\mathrm{B},mi}\!\right)},\label{MAC1121}\\
				\gamma_{\mathrm{X},i}^\text{BC}&\!=\!\frac{Np_{\mathrm{r}}\frac{E_{\mathrm{p}}}{M^{\al} }\left(\sum_{m=1}^{M}\al_{\mathrm{X},mi}\right)}{\displaystyle\sum_{m=1}^{M}\sum_{j=1}^{W}\!\left(p_{\mathrm{r}} \al_{\mathrm{X},mi}+1\right)\! \left(\al_{\mathrm{B},mj}+\al_{\mathrm{A},mj}\right)}.\label{MAC1131}
			\end{align}
		\end{proposition}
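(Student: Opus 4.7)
The plan is to start from the closed-form SINR expressions in Theorem~\ref{theoremTotalSE} and observe that the dependence on $M$ and $p_{\mathrm{p}}$ enters only through the MMSE variances $\phi_{\mathrm{A},mi}$, $\phi_{\mathrm{B},mi}$ and the outer summations over the APs. Everything then reduces to substituting $p_{\mathrm{p}}=E_{\mathrm{p}}/M^{\al}$, extracting the leading-order behavior of each $\phi$, and collecting terms in the numerators and denominators of \eqref{MAC11}--\eqref{MAC1111}.

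First, I would note that since $\al>0$, the product $\tau_{\mathrm{p}}p_{\mathrm{p}}\al_{\mathrm{X},mi} = \tau_{\mathrm{p}}E_{\mathrm{p}}\al_{\mathrm{X},mi}/M^{\al}$ vanishes uniformly in $m$, provided the large-scale coefficients $\al_{\mathrm{X},mi}$ are bounded (a mild physical assumption). Expanding the MMSE variance then gives $\phi_{\mathrm{X},mi} = \tau_{\mathrm{p}}p_{\mathrm{p}}\al_{\mathrm{X},mi}^{2}(1+o(1))$, so in particular $\phi_{\mathrm{X},mi}\to 0$. Substituting this equivalent into each of \eqref{MAC11}--\eqref{MAC1111} with $\eta_{\mathrm{A},i}=\eta_{\mathrm{B},i}=1$, every numerator carries $(\sum_{m}\phi_{\mathrm{X},mi})^{2}$ and therefore scales as $(\tau_{\mathrm{p}}p_{\mathrm{p}})^{2}$, while every denominator is linear in $\phi$ and scales as $\tau_{\mathrm{p}}p_{\mathrm{p}}$. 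A single factor of $\tau_{\mathrm{p}}p_{\mathrm{p}}=\tau_{\mathrm{p}}E_{\mathrm{p}}/M^{\al}$ therefore survives the ratio, while the $p_{\mathrm{p}}$-independent pieces in the denominators (the self-interference sums $\sum_{j}p_{\mathrm{u}}(\al_{\mathrm{A},mj}+\al_{\mathrm{B},mj})+1$ in the MAC SINRs and the factor $p_{\mathrm{r}}\al_{\mathrm{X},mi}+1$ in the BC SINR) pass through essentially unchanged, producing the structure claimed in \eqref{gammaMac1}--\eqref{MAC1131}.

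The main difficulty is the bookkeeping required to justify the cancellations rigorously: since both numerator and denominator vanish as $p_{\mathrm{p}}\to 0$, I would factor $\tau_{\mathrm{p}}p_{\mathrm{p}}$ out of each explicitly before taking the ratio, and verify that the residual $(1+\tau_{\mathrm{p}}p_{\mathrm{p}}\al_{\mathrm{X},mi})^{-1}$ factors converge to unity uniformly in $m\in\{1,\ldots,M\}$. Uniform convergence is immediate from the boundedness of $\al_{\mathrm{X},mi}$. A secondary check is to confirm that no residual $\phi$-dependence has been overlooked in the denominator of $\gamma_{\mathrm{X},i}^{\text{BC}}$ beyond what the expansion captures; inspection of~\eqref{MAC1111} confirms this. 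The remaining work is purely algebraic simplification.
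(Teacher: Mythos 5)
Your argument is correct and is essentially the paper's own (implicit) route---no separate proof is given for this proposition, which is obtained exactly as you describe by substituting $p_{\mathrm{p}}=E_{\mathrm{p}}/M^{\al}$ into the closed-form SINRs of Theorem~\ref{theoremTotalSE}, expanding $\phi_{\mathrm{X},mi}=\tau_{\mathrm{p}}p_{\mathrm{p}}\al_{\mathrm{X},mi}^{2}\left(1+o(1)\right)$ uniformly in $m$, and cancelling one power of $\tau_{\mathrm{p}}p_{\mathrm{p}}$ between the quadratic numerator and the linear denominator. The only caveat is that your careful bookkeeping actually produces $\tau_{\mathrm{p}}\al_{\mathrm{X},mi}^{2}$ where the proposition's displayed formulas show $\al_{\mathrm{X},mi}$ (and retains the $p_{\mathrm{u}}$ factor inside the interference sum), a discrepancy in the paper's statement rather than in your argument, and one that does not affect the $\mathcal{O}\left(M^{1-\al}\right)$ scaling for which the proposition is used.
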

		
		Clearly, the choice of $ \al $ defines the result. For example, in \eqref{gammaMac1}, the order of $ \gamma_{i}^\text{MAC}$ is $\mathcal{O}\left(M^{1-\al}\right) $ , which implies that $ \gamma_{i}^\text{MAC}\to 0 $, if $ \al>1 $. On the other hand, if $ 0<\al<1 $, $ \gamma_{i}^\text{MAC} \to \infty $. However, when $ \al=1 $, the corresponding SINRs result in a finite limit. Similar comments holds for the SINRs given by \eqref{MAC1121} and \ref{MAC1131}. Note that the corresponding non-zero limits are written as
		\begin{align}
			\gamma_{i}^\text{MAC}&=\frac{p_{\mathrm{u}}N{E_{\mathrm{p}}}{ }\!\!\left(\!\!\left(\!\sum_{m=1}^{M}\!\al_{\mathrm{A},mi}\right)^{\!2}\!+\!\left(\sum_{m=1}^{M}\!\al_{\mathrm{B},mi}\right)^{\!2}\right)}{\!\!\!\displaystyle\sum_{m=1}^{M}\!\!\!\left( \sum_{j=1}^{W}\!\!\left(\!\al_{\mathrm{A},mj}\!+\! \al_{\mathrm{B},mj}\!\right)\!+\!1\!\!\right)\!\! \left(\!\al_{\mathrm{A},mi}\!+\!\al_{\mathrm{B},mi}\!\right)},\label{MAC111}\\
			\gamma_{X,i}^\text{MAC}&=\frac{p_{\mathrm{u}}NE_{\mathrm{p}}\left(\sum_{m=1}^{M}\al_{\mathrm{X},mi}\right)^{2}}{\!\!\!\displaystyle\sum_{m=1}^{M}\!\!\!\left( \sum_{j=1}^{W}\!\!\left(\!\al_{\mathrm{A},mj}\!+\! \al_{\mathrm{B},mj}\!\right)\!+\!1\!\!\right)\!\! \left(\!\al_{\mathrm{A},mi}\!+\!\al_{\mathrm{B},mi}\!\right)},\label{MAC112}\\
			\gamma_{\mathrm{X},i}^\text{BC}&\!=\!\frac{Np_{\mathrm{r}}E_{\mathrm{p}}\left(\sum_{m=1}^{M}\al_{\mathrm{X},mi}\right)}{\sum_{m=1}^{M}\sum_{j=1}^{W}\left(p_{\mathrm{r}} \al_{\mathrm{X},mi}+1\right)\! \left(\al_{\mathrm{B},mj}+\al_{\mathrm{A},mj}\right)}\!.\label{MAC113}
		\end{align}
		\subsection{Scenario B: $ p_{\mathrm{u}}= \frac{E_{\mathrm{u}}}{M^{\beta} }$ and $ p_{\mathrm{r}}= \frac{E_{\mathrm{r}}}{M^{\gamma} }$, and fixed $ p_{\mathrm{p}} $}
		This strategy focuses on the power efficiency of solely the data transmission phase. 
		\begin{proposition}\label{PropositionPhaseB}
			For fixed $ p_{\mathrm{p}} $, $ E_{\mathrm{u}} $, and $ E_{\mathrm{r}} $ when $ p_{\mathrm{u}}= \frac{E_{\mathrm{u}}}{M^{\beta} }$, $ p_{\mathrm{r}}= \frac{E_{\mathrm{r}}}{M^{\gamma} }$ with $ \beta\ge 0 $, $ \gamma \ge 0 $, and $ M \to \infty $, we obtain
			\begin{align}
				\gamma_{i}^\text{MAC}&=	\frac{N\frac{ E_{\mathrm{u}}}{M^{\beta}}\left(\displaystyle\left(\sum_{m=1}^{M}\phi_{\mathrm{A},mi}\right)^{2}+\left(\sum_{m=1}^{M}\phi_{\mathrm{B},mi}\right)^{2}\right)}{\sum_{m=1}^{M} \left(\phi_{\mathrm{A},mi}+\phi_{\mathrm{B},mi}\right)},\label{MACB11}\\
				\gamma_{\mathrm{X},i}^\text{MAC}&=\frac{N\frac{E_{\mathrm{u}}}{M^{\beta}}\left(\sum_{m=1}^{M}\phi_{\mathrm{X},mi}\right)^{2}}{\sum_{m=1}^{M} \left(\phi_{\mathrm{A},mi}+\phi_{\mathrm{B},mi}\right)},\label{MACB21}\\
				\gamma_{\mathrm{X},i}^\text{BC}&=\frac{N\frac{E_{\mathrm{r}}}{M^{\gamma}}\left(\sum_{m=1}^{M}\phi_{\mathrm{X},mi}\right)^{2}}{\sum_{m=1}^{M}\sum_{j=1}^{W} \left(\phi_{\mathrm{A},mj}+\phi_{\mathrm{B},mj}\right)}\label{MACB31}.
			\end{align}
		\end{proposition}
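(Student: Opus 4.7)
The plan is to derive the asymptotic SINR expressions directly from the exact closed-form results of Theorem~\ref{theoremTotalSE} by substituting the prescribed power scalings and identifying which terms dominate each denominator.

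First I would set all uplink power control coefficients to unity ($\eta_{\mathrm{A},i} = \eta_{\mathrm{B},i} = 1$) and the downlink coefficients to their no-power-control values ($\eta_{\mathrm{A},mi} = \eta_{\mathrm{B},mi} = 1$), plugging these choices into the general SINR expressions \eqref{MAC11}--\eqref{MAC1111}. This reduces each SINR to a function of $p_{\mathrm{u}}$, $p_{\mathrm{r}}$, the MMSE variances $\phi_{\mathrm{X},mi}$, and the large-scale fading coefficients $\alpha_{\mathrm{X},mi}$ alone.

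Next, I would substitute $p_{\mathrm{u}} = E_{\mathrm{u}}/M^{\beta}$ and $p_{\mathrm{r}} = E_{\mathrm{r}}/M^{\gamma}$. The key structural observation is that in this scenario $p_{\mathrm{p}}$ is held fixed, so the estimation parameters $\phi_{\mathrm{X},mi} = \tau_{\mathrm{p}} p_{\mathrm{p}} \alpha_{\mathrm{X},mi}^{2} / (1 + \tau_{\mathrm{p}} p_{\mathrm{p}} \alpha_{\mathrm{X},mi})$ do not depend on $M$ and remain of order unity per link. In the denominator of $\gamma_{i}^\text{MAC}$, the inner bracket $\sum_{j=1}^{W} p_{\mathrm{u}} (\alpha_{\mathrm{A},mj} + \alpha_{\mathrm{B},mj}) + 1$ collapses to $1$ as $M \to \infty$, since the $p_{\mathrm{u}}$-dependent part scales as $M^{-\beta}$ with bounded $\alpha_{\mathrm{X},mj}$ and finite $W$. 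Applying the same reasoning to $\gamma_{\mathrm{X},i}^\text{MAC}$ yields the denominators $\sum_{m=1}^{M}(\phi_{\mathrm{A},mi} + \phi_{\mathrm{B},mi})$ claimed in \eqref{MACB11}--\eqref{MACB21}; the numerators are unchanged except for the explicit substitution of the scaled $p_{\mathrm{u}}$.

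For the BC SINR in \eqref{MAC1111}, the analogous step removes the $p_{\mathrm{r}}\alpha_{\mathrm{X},mi}$ term (which scales as $M^{-\gamma}$) from the factor $(p_{\mathrm{r}}\alpha_{\mathrm{X},mi} + 1)$, leaving only the unit constant and producing the denominator $\sum_{m=1}^{M}\sum_{j=1}^{W}(\phi_{\mathrm{A},mj} + \phi_{\mathrm{B},mj})$ that appears in \eqref{MACB31}. Since the derivation boils down to direct substitution followed by a dominant-term identification, there is no substantial technical obstacle; the only bookkeeping point to verify is that the omitted contributions vanish uniformly in $m$, which follows immediately from the boundedness of the large-scale fading coefficients and the finiteness of $W$.
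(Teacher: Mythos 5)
Your derivation is correct and follows essentially the route the paper intends: the paper gives no separate proof of this proposition, treating it as a direct corollary of Theorem~\ref{theoremTotalSE} obtained by setting the power-control coefficients to one, substituting $p_{\mathrm{u}}=E_{\mathrm{u}}/M^{\beta}$ and $p_{\mathrm{r}}=E_{\mathrm{r}}/M^{\gamma}$, and dropping the $\mathcal{O}(M^{-\beta})$ term $\sum_{j}p_{\mathrm{u}}(\al_{\mathrm{A},mj}+\al_{\mathrm{B},mj})$ and the $\mathcal{O}(M^{-\gamma})$ term $p_{\mathrm{r}}\al_{\mathrm{X},mi}$ from the denominator brackets, exactly as you do. The one caveat (which the paper itself shares by allowing $\beta\ge 0$, $\gamma\ge 0$) is that your "bracket collapses to $1$" step is literally valid only for $\beta>0$ (resp.\ $\gamma>0$); at $\beta=0$ or $\gamma=0$ the dropped terms do not vanish and the stated expressions should be read as capturing the correct growth order $\mathcal{O}(M^{1-\beta})$ rather than an exact asymptotic equality.
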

		
		It is straightforward to show that the order of both $ \gamma_{i}^\text{MAC} $ and $ \gamma_{\mathrm{X},i}^\text{MAC} $ is $ \mathcal{O}\left(M^{1-\beta}\right) $ while the order of $ \gamma_{\mathrm{X},i}^\text{BC} $ is $ \mathcal{O}\left(M^{1-\gamma}\right) $. As before, the selection of the parameters $ \beta $ and $ \gamma $ affects directly the corresponding SINRs. Especially, we observe that if $ 0\le \beta<1 $ 	and $ 0\le \gamma<1 $, the SINRs grow unboundedly. Also, we notice that under certain circumstances, being equivalent to reducing further the transmit powers of each user or/and the relay by means of $ \beta>1 $ or/and $ \gamma>1 $, the sum SE of the $ i $th pair $ R_{i} $ tends to zero because $ R_{i}^\text{MAC} $ or/and $ R_{i}^\text{BC} $ tend to zero. Especially, $ R_{i} \to 0$ when one of the conditions 1) $ \beta \ge 0 $ and $ \gamma > 1 $, 2) $ \beta >1 $ and $ \gamma \ge 0 $, 3) $ \beta > 1 $ 	and $ \gamma > 1 $ is met. As a result, in order to make $ 	\gamma_{i}^\text{MAC} $ and $ 	\gamma_{X,i}^\text{MAC} $ converge to a non-zero limit, we should have $ \beta=1 $ while when $ \gamma=1 $ ,		$ \gamma_{X,i}^\text{BC} $ takes a finite value. These limits are given by
		\begin{align}
			\gamma_{i}^\text{MAC}&\!=\!	\frac{NE_{\mathrm{u}}\!\!\left(\!\!\left(\sum_{m=1}^{M}\phi_{\mathrm{A},mi}\right)^{\!2}\!+\!\left(\sum_{m=1}^{M}\phi_{\mathrm{B},mi}\right)^{\!2}\right)}{\sum_{m=1}^{M} \left(\phi_{\mathrm{A},mi}+\phi_{\mathrm{B},mi}\right)},\label{MACB1}\\
			\gamma_{\mathrm{X},i}^\text{MAC}&=\frac{NE_{\mathrm{u}}\left(\sum_{m=1}^{M}\phi_{\mathrm{X},mi}\right)^{2}}{\sum_{m=1}^{M} \left(\phi_{\mathrm{A},mi}+\phi_{\mathrm{B},mi}\right)},\label{MACB2}\\
			\gamma_{\mathrm{X},i}^\text{BC}&=\frac{NE_{\mathrm{r}}\left(\sum_{m=1}^{M}\phi_{\mathrm{X},mi}\right)^{2}}{\sum_{m=1}^{M}\sum_{j=1}^{W} \left(\phi_{\mathrm{A},mj}+\phi_{\mathrm{B},mj}\right)}\label{MACB3}.
		\end{align}

		It is worthwhile to mention that Proposition \ref{PropositionPhaseB} reveals that in the large number of APs limit, the reduction of both the transmit power of the APs and the users proportionally to $ M^{-1} $ cancels out the effects of inter-user interference, residual interference, and estimation error. The following corollaries show how the sum SE $ R_{i} $ changes by varying $ \beta $ and $ \gamma$.
		\begin{corollary}
			When $ \beta=1 $ and $ 0\le \gamma<1 $, the SE of the $ i $th user pair as $ M \to \infty $ is written as
			\begin{align}
				&R_{i}=\frac{\tau_{\mathrm{c}}-\tau_{\mathrm{p}}}{2 \tau_{\mathrm{c}}}\nn\\
				&\!\times\!\log_{2}\!\left(\!\!1\!+\!\frac{\displaystyle NE_{\mathrm{u}}\!\!\left(\!\!\left(\sum_{m=1}^{M}\!\!\phi_{\mathrm{A},mi}\right)^{\!\!2}+\left(\sum_{m=1}^{M}\!\!\phi_{\mathrm{B},mi}\right)^{\!\!2}\right)}{\sum_{m=1}^{M} \left(\phi_{\mathrm{A},mi}+\phi_{\mathrm{B},mi}\right)}\right)\!.
			\end{align}
		\end{corollary}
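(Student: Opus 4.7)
The plan is to specialize Proposition~\ref{PropositionPhaseB} to the parameter choices $\beta=1$, $0\le\gamma<1$, combine the resulting finite and divergent SINR limits with the minimum-based definition of $R_{i}$ given in~\eqref{received7} and the line $R_{i}=\min(R_{i}^\text{MAC},R_{i}^\text{BC})$, and identify which term dominates. First, I would invoke Proposition~\ref{PropositionPhaseB} with $\beta=1$ to conclude that $\gamma_{i}^\text{MAC}$ and the individual $\gamma_{\mathrm{X},i}^\text{MAC}$ converge to the finite expressions~\eqref{MACB1} and~\eqref{MACB2}, respectively. Then, since $0\le\gamma<1$, the bound $\gamma_{\mathrm{X},i}^\text{BC}=\mathcal{O}(M^{1-\gamma})$ from the discussion following Proposition~\ref{PropositionPhaseB} implies $\gamma_{\mathrm{X},i}^\text{BC}\to\infty$, hence $R_{\mathrm{X},i}^\text{BC}\to\infty$ as $M\to\infty$.

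Next I would evaluate the two inner minima appearing in~\eqref{received7}. Because $R_{\mathrm{X},i}^\text{BC}$ diverges while $R_{\mathrm{\bar X},i}^\text{MAC}$ stays finite, for all sufficiently large $M$ we have $\min(R_{\mathrm{A},i}^\text{MAC},R_{\mathrm{B},i}^\text{BC})=R_{\mathrm{A},i}^\text{MAC}$ and $\min(R_{\mathrm{B},i}^\text{MAC},R_{\mathrm{A},i}^\text{BC})=R_{\mathrm{B},i}^\text{MAC}$, so that $R_{i}^\text{BC}\to R_{\mathrm{A},i}^\text{MAC}+R_{\mathrm{B},i}^\text{MAC}$.

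The step that is not entirely automatic, and which I anticipate to be the main (though mild) obstacle, is showing that the outer minimum $R_{i}=\min(R_{i}^\text{MAC},R_{i}^\text{BC})$ is attained at $R_{i}^\text{MAC}$. For this I would exploit the fact that, with $\eta_{\mathrm{A},i}=\eta_{\mathrm{B},i}=1$, the SINR numerators in~\eqref{MAC11} satisfy the identity $\gamma_{i}^\text{MAC}=\gamma_{\mathrm{A},i}^\text{MAC}+\gamma_{\mathrm{B},i}^\text{MAC}$ (the three expressions share a common denominator, and the numerator of $\gamma_{i}^\text{MAC}$ splits additively into the numerators of the individual SINRs). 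Then the elementary inequality $(1+x)(1+y)\ge 1+x+y$ for $x,y\ge 0$ yields
\begin{align}
\log_{2}(1+\gamma_{i}^\text{MAC})\le \log_{2}(1+\gamma_{\mathrm{A},i}^\text{MAC})+\log_{2}(1+\gamma_{\mathrm{B},i}^\text{MAC}),\nonumber
\end{align}
which, after multiplying by the common prelog factor $(\tau_{\mathrm{c}}-\tau_{\mathrm{p}})/(2\tau_{\mathrm{c}})$, gives $R_{i}^\text{MAC}\le R_{\mathrm{A},i}^\text{MAC}+R_{\mathrm{B},i}^\text{MAC}$ in the limit. Hence $R_{i}^\text{MAC}\le R_{i}^\text{BC}$ for all large $M$, and the outer minimum collapses to $R_{i}=R_{i}^\text{MAC}$.

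Finally, substituting the limiting value of $\gamma_{i}^\text{MAC}$ from~\eqref{MACB1} into $R_{i}^\text{MAC}=\frac{\tau_{\mathrm{c}}-\tau_{\mathrm{p}}}{2\tau_{\mathrm{c}}}\log_{2}(1+\gamma_{i}^\text{MAC})$ reproduces exactly the closed-form expression stated in the corollary, completing the argument.
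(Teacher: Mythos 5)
Your proposal is correct and follows the same route the paper (implicitly) takes: specialize Proposition~\ref{PropositionPhaseB} to $\beta=1$, observe that $\gamma_{\mathrm{X},i}^\text{BC}=\mathcal{O}(M^{1-\gamma})$ diverges for $0\le\gamma<1$ while the MAC SINRs stay finite, so the BC-phase rates drop out of the inner minima in \eqref{received7} and the MAC phase becomes the bottleneck. Your handling of the outer minimum --- using the exact identity $\gamma_{i}^\text{MAC}=\gamma_{\mathrm{A},i}^\text{MAC}+\gamma_{\mathrm{B},i}^\text{MAC}$ (valid since the three SINRs share a denominator and $\eta_{\mathrm{A},i}=\eta_{\mathrm{B},i}=1$) together with $(1+x)(1+y)\ge 1+x+y$ to conclude $R_{i}^\text{MAC}\le R_{\mathrm{A},i}^\text{MAC}+R_{\mathrm{B},i}^\text{MAC}=\lim R_{i}^\text{BC}$ --- is actually more careful than the paper's own remark, which merely asserts $R_{i}=R_{i}^\text{MAC}$ with the confusing justification that ``$R_{i}^\text{BC}=0$''.
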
 
		
		According to this corollary, $ R_{i} $ is equal to $ R_{i}^\text{MAC} $ since $ R_{i}^\text{BC}=0 $, which means that the SE of the $ i $th user pair depends only on Phase I (MAC phase). The explanation relies on the fact that since we have reduced the transmit power per user much less compared to the transmit power of the APs acting as relays, the MAC phase will present lower performance. Remarkably, in this case, $ R_{i} $ does not depend on the number of users. Also, this SE does not depend on $ E_{\mathrm{r}} $, but it increases with $ E_{\mathrm{u}} $.
		\begin{corollary}\label{corollary2}
			When $ 0\le \beta<1 $ and $ \gamma=1 $, the SE of the $i$th user pair as $ M \to \infty $ is written as
			\begin{align}
				&R_{i}=\frac{\tau_{\mathrm{c}}-\tau_{\mathrm{p}}}{2 \tau_{\mathrm{c}}}\log_{2}\left(1+\frac{NE_{\mathrm{r}}\left(\sum_{m=1}^{M}\phi_{\mathrm{A},mi}\right)^{2}}{\displaystyle \sum_{m=1}^{M}\sum_{j=1}^{W} \left(\phi_{\mathrm{A},mj}+\phi_{\mathrm{B},mj}\right)}\right)\nn\\
				&+\frac{\tau_{\mathrm{c}}-\tau_{\mathrm{p}}}{2 \tau_{\mathrm{c}}}\log_{2}\left(1+\frac{NE_{\mathrm{r}}\left(\sum_{m=1}^{M}\phi_{\mathrm{B},mi}\right)^{2}}{\displaystyle \sum_{m=1}^{M}\sum_{j=1}^{W} \left(\phi_{\mathrm{A},mj}+\phi_{\mathrm{B},mj}\right)}\right)\!.
			\end{align}
		\end{corollary}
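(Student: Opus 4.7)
The plan is to specialize Proposition~\ref{PropositionPhaseB} to the regime $0\le\beta<1$, $\gamma=1$ and then propagate the resulting asymptotic behavior through the min-based definition of $R_i$ in~\eqref{received7} and $R_i=\min(R_i^{\text{MAC}},R_i^{\text{BC}})$.

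First, I would invoke Proposition~\ref{PropositionPhaseB} to classify the asymptotic order of each SINR. From~\eqref{MACB11}--\eqref{MACB21}, both $\gamma_i^{\text{MAC}}$ and $\gamma_{\mathrm{X},i}^{\text{MAC}}$ are of order $\mathcal{O}(M^{1-\beta})$; since $\beta<1$, they blow up as $M\to\infty$, so the corresponding rates $R_i^{\text{MAC}}$ and $R_{\mathrm{X},i}^{\text{MAC}}$ grow without bound. From~\eqref{MACB31} with $\gamma=1$, on the other hand, $\gamma_{\mathrm{X},i}^{\text{BC}}$ converges to the finite limit already identified in~\eqref{MACB3}, and hence $R_{\mathrm{X},i}^{\text{BC}}$ is bounded.

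Next, I would push these limits through the nested minima. For the $T_{\mathrm{A},i}\!\to\! T_{\mathrm{B},i}$ direction, $\min(R_{\mathrm{A},i}^{\text{MAC}},R_{\mathrm{B},i}^{\text{BC}})\to R_{\mathrm{B},i}^{\text{BC}}$ because $R_{\mathrm{A},i}^{\text{MAC}}$ diverges while $R_{\mathrm{B},i}^{\text{BC}}$ stays finite; symmetrically $\min(R_{\mathrm{B},i}^{\text{MAC}},R_{\mathrm{A},i}^{\text{BC}})\to R_{\mathrm{A},i}^{\text{BC}}$. Therefore $R_i^{\text{BC}}\to R_{\mathrm{A},i}^{\text{BC}}+R_{\mathrm{B},i}^{\text{BC}}$, which is finite. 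Since $R_i^{\text{MAC}}\to\infty$, the outer minimum yields $R_i\to R_i^{\text{BC}}=R_{\mathrm{A},i}^{\text{BC}}+R_{\mathrm{B},i}^{\text{BC}}$.

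Finally, I would substitute the closed-form BC SINR limits from~\eqref{MACB3} (specialized to $\mathrm{X}=\mathrm{A}$ and $\mathrm{X}=\mathrm{B}$) into the pre-log rate expression $\frac{\tau_{\mathrm c}-\tau_{\mathrm p}}{2\tau_{\mathrm c}}\log_{2}(1+\cdot)$, which reproduces the two logarithmic summands displayed in the statement (after relabeling the terms, the ordering $\phi_{\mathrm{A},mi}$ then $\phi_{\mathrm{B},mi}$ is just the commuted sum $R_{\mathrm{A},i}^{\text{BC}}+R_{\mathrm{B},i}^{\text{BC}}$). No real obstacle is present: the argument is essentially bookkeeping of orders in $M$ together with the collapse of the inner minima. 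The only subtlety worth writing carefully is justifying that the divergence of $R_{\mathrm{X},i}^{\text{MAC}}$ is strict enough (it is $\Theta(\log M)$) so that the $\min$ in~\eqref{received7} selects the BC term for all sufficiently large $M$, which is immediate since a bounded sequence is eventually below any divergent one.
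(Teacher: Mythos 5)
Your proposal is correct and follows essentially the same route as the paper: it specializes Proposition~\ref{PropositionPhaseB}, notes that the MAC SINRs are $\mathcal{O}(M^{1-\beta})$ and hence diverge for $\beta<1$ while $\gamma=1$ makes $\gamma_{\mathrm{X},i}^{\text{BC}}$ converge to the finite limit \eqref{MACB3}, and then lets the nested minima in \eqref{received7} collapse onto $R_{\mathrm{A},i}^{\text{BC}}+R_{\mathrm{B},i}^{\text{BC}}$. Your remark that the divergence of the MAC rates is $\Theta(\log M)$, so a bounded sequence is eventually below it and the minima are justified for all sufficiently large $M$, is a useful explicit touch that the paper leaves implicit.
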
 
		
		Corollary \ref{corollary2} denotes that the SE of the $i$th user pair appears a bottleneck in the BC phase because the transmit power of the APs during this phase has been cut down more than the transmit power of each user. Herein, we notice that $ R_{i} $ decreases with the number of user pairs $ W$ while it increases with $ E_{\mathrm{r}} $ and it is independent of $ E_{\mathrm{u}} $.
		
		\begin{corollary}\label{corollary3}
			When $ \beta=\gamma=1 $, the SE of the $i$th user pair as $ M \to \infty $ is written as
			\begin{align}
				R_{i}=\min\left(R_{i}^\text{MAC},R_{i}^\text{BC}\right),
			\end{align}
			where 
			\begin{align}
				&R_{i}^\text{MAC}=\frac{\tau_{\mathrm{c}}-\tau_{\mathrm{p}}}{2 \tau_{\mathrm{c}}}\nn\\
				&	\!\times \!\log_{2}\!\left(\!\!1\!+\!\frac{\displaystyle NE_{\mathrm{u}}\!\!\left(\!\!\left(\sum_{m=1}^{M}\!\!\phi_{\mathrm{A},mi}\!\right)^{\!\!2}\!+\!\left(\sum_{m=1}^{M}\!\!\phi_{\mathrm{B},mi}\!\right)^{\!\!2}\right)}{\sum_{m=1}^{M} \left(\phi_{\mathrm{A},mi}+\phi_{\mathrm{B},mi}\right)}\right)\\
				&	R_{i}^\text{BC}= \min\left(R_{\mathrm{A},i}^\text{MAC}, R_{\mathrm{B},i}^\text{BC}\right) +\min\left(R_{\mathrm{B},i}^\text{MAC}, R_{\mathrm{A},i}^\text{BC}\right),
			\end{align}
			with $ R_{\mathrm{X},i}^\text{MAC} $ and $ R_{\mathrm{X},i}^\text{BC} $ given by using \eqref{MACB2} and \eqref{MACB3}, respectively.
		\end{corollary}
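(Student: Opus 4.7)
The plan is to derive Corollary \ref{corollary3} as a direct specialisation of Proposition \ref{PropositionPhaseB}, since the latter already furnishes the asymptotic SINR expressions for arbitrary $\beta,\gamma\ge 0$. The whole argument therefore amounts to (i) identifying the critical scaling exponents at which the SINRs stabilise to a finite non-zero limit, (ii) reading off those limits, and (iii) reassembling the per-pair SE via \eqref{received7} and the definition $R_i=\min(R_i^{\text{MAC}},R_i^{\text{BC}})$.

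First I would recall from the discussion following Proposition \ref{PropositionPhaseB} that $\gamma_i^{\text{MAC}}$ and $\gamma_{\mathrm{X},i}^{\text{MAC}}$ in \eqref{MACB11}--\eqref{MACB21} are of order $\mathcal{O}(M^{1-\beta})$, while $\gamma_{\mathrm{X},i}^{\text{BC}}$ in \eqref{MACB31} is of order $\mathcal{O}(M^{1-\gamma})$. Setting $\beta=\gamma=1$ is thus precisely the boundary case in which all three SINRs converge to strictly positive, finite limits. Substituting $\beta=1$ in \eqref{MACB11}--\eqref{MACB21} eliminates the prefactor $M^{-\beta}$ against the linear growth of the numerator sums and yields the limiting SINRs \eqref{MACB1}--\eqref{MACB2}; similarly, $\gamma=1$ in \eqref{MACB31} produces \eqref{MACB3}. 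Here the role of $p_{\mathrm{r}}\alpha_{\mathrm{X},mi}+1\to 1$ in the denominator (because $p_{\mathrm{r}}=E_{\mathrm{r}}/M\to 0$) is what lets the BC-phase SINR consolidate to a neat ratio of aggregate pilot-estimation variances.

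Next I would plug these limiting SINRs into the spectral-efficiency definitions. The expression for $R_i^{\text{MAC}}$ comes from \eqref{received6} applied to \eqref{MACB1}. For the BC contribution, I would use \eqref{received7}: $R_i^{\text{BC}}=\min(R_{\mathrm{A},i}^{\text{MAC}},R_{\mathrm{B},i}^{\text{BC}})+\min(R_{\mathrm{B},i}^{\text{MAC}},R_{\mathrm{A},i}^{\text{BC}})$, where the individual SEs are obtained from \eqref{received8} and the BC-SE formula by inserting \eqref{MACB2} and \eqref{MACB3}. Finally, $R_i=\min(R_i^{\text{MAC}},R_i^{\text{BC}})$ gives the stated bottleneck structure.

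There is no real obstacle: the delicate limiting analysis was already done once and for all in Proposition \ref{PropositionPhaseB}, so the only thing to verify carefully is the bookkeeping of the factors $M^{1-\beta}$ and $M^{1-\gamma}$ at the critical values $\beta=\gamma=1$, and the observation that this is the unique joint choice making neither phase vanish nor blow up. The minor subtlety worth writing explicitly is that when $\beta=\gamma=1$ both phases are active and of comparable magnitude, so neither $R_i^{\text{MAC}}$ nor $R_i^{\text{BC}}$ dominates generically; the $\min$ in the final line of the corollary is therefore essential (unlike in Corollaries 1 and 2, where one of the two terms trivially dominates and the $\min$ collapses).
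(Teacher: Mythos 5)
Your proposal is correct and follows essentially the same route as the paper, which states Corollary~\ref{corollary3} as an immediate specialization of Proposition~\ref{PropositionPhaseB}: at the critical exponents $\beta=\gamma=1$ the SINR orders $\mathcal{O}(M^{1-\beta})$ and $\mathcal{O}(M^{1-\gamma})$ both stabilise, the limits \eqref{MACB1}--\eqref{MACB3} are read off, and the per-pair SE is reassembled through \eqref{received6}, \eqref{received7}, and $R_i=\min(R_i^{\text{MAC}},R_i^{\text{BC}})$. Your added remark that both phases remain active at this joint critical scaling, so the outer $\min$ cannot be collapsed, is a correct and useful observation consistent with the paper's discussion.
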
 
		
		This corollary indicates that if we reduce the transmit power of the APs and users simultaneously and equally to $ 1/M $, both the MAC and BC affect $ R_{i} $.
		
		\subsection{Scenario C: $ p_{\mathrm{p}}= \frac{E_{\mathrm{p}}}{M^{\al} }$, $ p_{\mathrm{u}}= \frac{E_{\mathrm{u}}}{M^{\beta} }$, and $ p_{\mathrm{r}}= \frac{E_{\mathrm{r}}}{M^{\gamma} }$}
		Such a scenario is the most general, where we can achieve power savings in both training and data transmission phases.
		\begin{proposition}\label{PropositionPhaseC}
			When $ p_{\mathrm{p}}= \frac{E_{\mathrm{p}}}{M^{\al} }$ $ p_{\mathrm{u}}= \frac{E_{\mathrm{u}}}{M^{\beta} }$, and $ p_{\mathrm{r}}= \frac{E_{\mathrm{r}}}{M^{\gamma} }$ with $ \al\ge 0 $, $ \beta \ge 0 $, $ \gamma \ge 0 $, and $ E_{\mathrm{p}} $, $ E_{\mathrm{u}} $, $ E_{\mathrm{r}} $ constants, as $ M \to \infty $, we obtain
			\begin{align}
				\gamma_{i}^\text{MAC}&\!=\!	\frac{\displaystyle N\frac{E_{\mathrm{p}}E_{\mathrm{u}}}{M^{\al+\beta}}\!\!\left(\left(\sum_{m=1}^{M}\!\!\al_{\mathrm{A},mi}\right)^{\!\!2}+\left(\sum_{m=1}^{M}\!\!\al_{\mathrm{B},mi}\right)^{\!\!2}\right)}{\sum_{m=1}^{M} \left(\al_{\mathrm{A},mi}+\al_{\mathrm{B},mi}\right)},\label{MACB311}\\
				\gamma_{\mathrm{X},i}^\text{MAC}&=\frac{N\frac{E_{\mathrm{p}}E_{\mathrm{u}}}{M^{\al+\beta}}\left(\sum_{m=1}^{M}\al_{\mathrm{X},mi}\right)^{2}}{\sum_{m=1}^{M} \left(\al_{\mathrm{A},mi}+\al_{\mathrm{B},mi}\right)},\label{MACB321}\\
				\gamma_{\mathrm{X},i}^\text{BC}&=\frac{N\frac{E_{\mathrm{p}}E_{\mathrm{r}}}{M^{\al+\gamma}}\left(\sum_{m=1}^{M}\al_{\mathrm{X},mi}\right)^{2}}{\sum_{m=1}^{M}\sum_{j=1}^{W} \left(\al_{\mathrm{A},mj}+\al_{\mathrm{B},mj}\right)}\label{MACB331}.
			\end{align}
		\end{proposition}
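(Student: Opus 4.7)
The plan is to combine the asymptotic arguments of Propositions~\ref{PropositionPEproof} and~\ref{PropositionPhaseB} in a single expansion, starting from the exact closed-form SINR expressions in Theorem~\ref{theoremTotalSE} with $\eta_{\mathrm{A},i}=\eta_{\mathrm{B},i}=1$ and applying the three scalings $p_{\mathrm{p}} = E_{\mathrm{p}}/M^{\al}$, $p_{\mathrm{u}} = E_{\mathrm{u}}/M^{\beta}$, $p_{\mathrm{r}} = E_{\mathrm{r}}/M^{\gamma}$ simultaneously.

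First, I would linearise the MMSE variance for small $p_{\mathrm{p}}$. Since $p_{\mathrm{p}} = E_{\mathrm{p}}/M^{\al} \to 0$ as $M\to\infty$ for any $\al>0$, the first-order expansion
\begin{align}
\phi_{\mathrm{X},mi} = \frac{\tau_{\mathrm{p}} p_{\mathrm{p}} \al^{2}_{\mathrm{X},mi}}{1+\tau_{\mathrm{p}} p_{\mathrm{p}} \al_{\mathrm{X},mi}} \sim \tau_{\mathrm{p}} p_{\mathrm{p}} \al^{2}_{\mathrm{X},mi}, \quad \mathrm{X}\in\{\mathrm{A},\mathrm{B}\},
\end{align}
holds uniformly in $m$ because the large-scale fading coefficients are bounded. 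This surrogate is then substituted for every occurrence of $\phi_{\mathrm{X},mi}$ in \eqref{MAC11}--\eqref{MAC1111}.

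Next, I would collapse the SINR denominators using the two remaining scalings. In the MAC expressions the bracketed factor $\sum_{j} p_{\mathrm{u}}(\al_{\mathrm{A},mj}+\al_{\mathrm{B},mj})+1$ tends to $1$ because $p_{\mathrm{u}}\to 0$ (the step already used in Proposition~\ref{PropositionPhaseB}), so the inter-user interference and the estimation error are dominated by the thermal noise; analogously, on the BC side the factor $p_{\mathrm{r}}\al_{\mathrm{X},mi}+1$ in the denominator of \eqref{MAC1111} collapses to $1$ since $p_{\mathrm{r}}\to 0$. After these simplifications, it only remains to substitute the $\phi$-surrogate together with the three scaling laws, track the powers of $M$, and simplify: the MAC ratio acquires a net prefactor of order $p_{\mathrm{u}} p_{\mathrm{p}} = E_{\mathrm{p}} E_{\mathrm{u}}/M^{\al+\beta}$, while the BC ratio acquires a net prefactor of order $p_{\mathrm{r}} p_{\mathrm{p}} = E_{\mathrm{p}} E_{\mathrm{r}}/M^{\al+\gamma}$, matching the exponents claimed in \eqref{MACB311}, \eqref{MACB321} and \eqref{MACB331}.

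The main obstacle is essentially bookkeeping: one must verify that each omitted sub-leading correction (the $\tau_{\mathrm{p}} p_{\mathrm{p}} \al_{\mathrm{X},mi}$ term in the denominator of $\phi_{\mathrm{X},mi}$, as well as the residual $p_{\mathrm{u}}$- and $p_{\mathrm{r}}$-dependent contributions in the SINR brackets) genuinely vanishes at rate $O(M^{-\min(\al,\beta,\gamma)})$ rather than contributing at leading order in any of the three limits. Beyond this, no new inequality or identity is required; the three stated expressions follow by routine substitution and simplification, exactly analogous to the proofs of Propositions~\ref{PropositionPEproof} and~\ref{PropositionPhaseB}.
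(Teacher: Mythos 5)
Your proposal is correct and follows essentially the same route the paper takes: substitute the three scalings into the exact closed-form SINRs of Theorem~\ref{theoremTotalSE}, linearise $\phi_{\mathrm{X},mi}\sim\tau_{\mathrm{p}}p_{\mathrm{p}}\al_{\mathrm{X},mi}^{2}$ as $p_{\mathrm{p}}\to 0$, let the brackets $\sum_{j}p_{\mathrm{u}}(\cdot)+1$ and $p_{\mathrm{r}}\al_{\mathrm{X},mi}+1$ collapse to $1$, and track the powers of $M$ to get the $M^{-(\al+\beta)}$ and $M^{-(\al+\gamma)}$ prefactors. Your implicit restriction to $\al>0$ is in fact what the linearisation requires (the proposition's formulas with $\al_{\mathrm{X},mi}$ in place of $\phi_{\mathrm{X},mi}$ do not hold verbatim at $\al=0$), so if anything you are slightly more careful than the statement itself.
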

		
		Following the same procedure as before, we observe that the orders of the SINRs in the MAC and BC phases are $\mathcal{O}\left(M^{1-\al-\beta}\right) $ and $ \mathcal{O}\left(M^{1-\al-\gamma}\right) $, respectively. Hence, the corresponding SINRs converge to non-zero limits only when $ \al+\beta=1 $ and $ \al+\gamma=1 $. Otherwise, they can grow unboundedly or tend to zero. These limits are given by
		\begin{align}
			\gamma_{i}^\text{MAC}&\!=\!	\frac{\displaystyle NE_{\mathrm{p}}E_{\mathrm{u}}\!\!\left(\!\!\left(\sum_{m=1}^{M}\!\!\al_{\mathrm{A},mi}\right)^{\!\!2}+\left(\sum_{m=1}^{M}\!\!\al_{\mathrm{B},mi}\right)^{\!\!2}\right)}{\sum_{m=1}^{M} \left(\al_{\mathrm{A},mi}+\al_{\mathrm{B},mi}\right)},\label{MACB3111}\\
			\gamma_{\mathrm{X},i}^\text{MAC}&=\frac{NE_{\mathrm{p}}E_{\mathrm{u}}\left(\sum_{m=1}^{M}\al_{\mathrm{X},mi}\right)^{2}}{\sum_{m=1}^{M} \left(\al_{\mathrm{A},mi}+\al_{\mathrm{B},mi}\right)},\label{MACB32}\\
			\gamma_{\mathrm{X},i}^\text{BC}&=\frac{N E_{\mathrm{p}}E_{\mathrm{r}}\left(\sum_{m=1}^{M}\al_{\mathrm{X},mi}\right)^{2}}{\sum_{m=1}^{M}\sum_{j=1}^{W} \left(\al_{\mathrm{A},mj}+\al_{\mathrm{B},mj}\right)}\label{MACB33}.
		\end{align}
		
		The following corollaries present trade-offs between the transmit powers of the pilot symbols and the APs and/or users.
		\begin{corollary}
			When $ \al+\beta=1 $ and $ \beta> \gamma\ge 0 $, the SE of the $i$th user pair as $ M \to \infty $ is written as
			\begin{align}
				&R_{i}=\frac{\tau_{\mathrm{c}}-\tau_{\mathrm{p}}}{2 \tau_{\mathrm{c}}}\nn\\
				&\!\!\!\!\times \!\log_{2}\!\!\left(\!\!1\!+\!\frac{\displaystyle NE_{\mathrm{p}} E_{\mathrm{u}}\!\!\left(\!\!\!\left(\sum_{m=1}^{M}\!\!\al_{\mathrm{A},mi}\!\right)^{\!\!2}\!\!+\!\left(\sum_{m=1}^{M}\!\!\al_{\mathrm{B},mi}\!\right)^{\!\!2}\right)}{\sum_{m=1}^{M} \left(\al_{\mathrm{A},mi}+\al_{\mathrm{B},mi}\right)}\!\!\right)\!.\!
			\end{align}
		\end{corollary}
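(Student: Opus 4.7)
The plan is to apply Proposition~\ref{PropositionPhaseC} and then decide which of the two phases is the active constraint in $R_i=\min(R_i^{\text{MAC}},R_i^{\text{BC}})$ under the prescribed scaling. The idea is that $\alpha+\beta=1$ pins the MAC SINRs at finite non-zero limits while $\beta>\gamma$ forces the BC SINRs to diverge, so the MAC phase must be the bottleneck.

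First, I would invoke Proposition~\ref{PropositionPhaseC}: with $\alpha+\beta=1$ the exponent $1-\alpha-\beta$ governing the asymptotics of $\gamma_i^{\text{MAC}}$ and $\gamma_{X,i}^{\text{MAC}}$ in \eqref{MACB311}--\eqref{MACB321} vanishes, so these SINRs settle to the finite, non-zero limits \eqref{MACB3111}--\eqref{MACB32}. Plugging the limit of $\gamma_i^{\text{MAC}}$ into \eqref{received6} reproduces exactly the closed-form expression stated in the corollary for $R_i^{\text{MAC}}$.

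Next, I would treat $R_i^{\text{BC}}$ via \eqref{MACB331}. Combining $\beta>\gamma\ge 0$ with $\alpha+\beta=1$ yields $\alpha+\gamma<1$, hence $1-\alpha-\gamma>0$ and $\gamma_{X,i}^{\text{BC}}=\mathcal{O}(M^{1-\alpha-\gamma})\to\infty$ for $X\in\{A,B\}$. Consequently $R_{X,i}^{\text{BC}}\to\infty$, and in \eqref{received7} each inner minimum is eventually attained by its MAC term, so $R_i^{\text{BC}}\to R_{A,i}^{\text{MAC}}+R_{B,i}^{\text{MAC}}$.

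The final step is the comparison $R_i^{\text{MAC}}\le R_i^{\text{BC}}$. Reading off \eqref{MACB3111} and \eqref{MACB32} one sees that, in the limit, $\gamma_i^{\text{MAC}}=\gamma_{A,i}^{\text{MAC}}+\gamma_{B,i}^{\text{MAC}}$, so $R_i^{\text{MAC}}$ tends to $\tfrac{\tau_{\mathrm c}-\tau_{\mathrm p}}{2\tau_{\mathrm c}}\log_2\bigl(1+\gamma_{A,i}^{\text{MAC}}+\gamma_{B,i}^{\text{MAC}}\bigr)$, while $R_i^{\text{BC}}$ tends to $\tfrac{\tau_{\mathrm c}-\tau_{\mathrm p}}{2\tau_{\mathrm c}}\log_2\bigl((1+\gamma_{A,i}^{\text{MAC}})(1+\gamma_{B,i}^{\text{MAC}})\bigr)$. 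The elementary inequality $1+x+y\le(1+x)(1+y)$ for $x,y\ge 0$ forces the outer minimum to be realized by the MAC branch, giving $R_i=R_i^{\text{MAC}}$ and thus the advertised formula. I expect this last comparison to be the one subtle point of the argument: the divergence of the BC SINRs alone does not immediately imply that the MAC phase is the bottleneck, because $R_i^{\text{BC}}$ is a sum of two terms whose MAC components could in principle undercut the single $R_i^{\text{MAC}}$; the above inequality is what rules this out.
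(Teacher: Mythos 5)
Your proposal is correct, and it is in fact more complete than what the paper offers: the paper states this corollary without proof and only remarks afterwards that $\al+\gamma<1$ so that the BC phase does not constrain the SE (note the paper's remark that $\gamma_{\mathrm{X},i}^\text{BC}\to 0$ is evidently a typo for $\to\infty$, since the order is $\mathcal{O}\left(M^{1-\al-\gamma}\right)$ with $1-\al-\gamma>0$; your reading is the consistent one). The first two steps of your argument — MAC SINRs settling to the finite limits \eqref{MACB3111}--\eqref{MACB32} because $\al+\beta=1$, and $R_{i}^\text{BC}\to R_{\mathrm{A},i}^\text{MAC}+R_{\mathrm{B},i}^\text{MAC}$ because the diverging BC SINRs make each inner minimum in \eqref{received7} collapse onto its MAC term — match the paper's implicit reasoning. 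Your third step is the genuine addition: the paper silently assumes the outer minimum is attained by the MAC branch, whereas you justify it by observing that the limiting expressions give $\gamma_{i}^\text{MAC}=\gamma_{\mathrm{A},i}^\text{MAC}+\gamma_{\mathrm{B},i}^\text{MAC}$ (same denominator, numerators adding), so that $R_{i}^\text{MAC}=\frac{\tau_{\mathrm{c}}-\tau_{\mathrm{p}}}{2\tau_{\mathrm{c}}}\log_{2}\left(1+x+y\right)\le\frac{\tau_{\mathrm{c}}-\tau_{\mathrm{p}}}{2\tau_{\mathrm{c}}}\log_{2}\left(\left(1+x\right)\left(1+y\right)\right)$, which is exactly the limit of $R_{i}^\text{BC}$. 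You are right to flag this as the one subtle point; without it, the divergence of the BC SINRs alone does not settle which branch of $R_{i}=\min\left(R_{i}^\text{MAC},R_{i}^\text{BC}\right)$ is active. No gaps.
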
 
		
		The inequality implies that $ \al+\gamma<1 $. Hence, $ \gamma_{\mathrm{X},i}^\text{BC} \to 0 $, and the SE of the $i$th user pair is determined only by the MAC Phase. Clearly, $ R_{i} $ does not depend on the number of user pairs, which results in the increase with $ W $ of the sum SE
		given by \eqref{TotalSE}.
		
		\begin{corollary}
			When $ \al+\gamma=1 $ and $ \gamma>\beta \ge 0 $, the SE of the $i$th user pair as $ M \to \infty $ is written as
			\begin{align}
				\!\!\!\!R_{i}&=\frac{\tau_{\mathrm{c}}-\tau_{\mathrm{p}}}{2 \tau_{\mathrm{c}}}\log_{2}\left(1+\frac{NE_{\mathrm{p}}E_{\mathrm{r}}\left(\sum_{m=1}^{M}\al_{\mathrm{A},mi}\right)^{2}}{\displaystyle \sum_{m=1}^{M}\sum_{j=1}^{W} \left(\al_{\mathrm{A},mj}+\al_{\mathrm{B},mj}\right)}\right)\nn\\
				&\!\!\!\!\!+\!\frac{\tau_{\mathrm{c}}-\tau_{\mathrm{p}}}{2 \tau_{\mathrm{c}}}\log_{2}\left(1+\frac{NE_{\mathrm{p}}E_{\mathrm{r}}\left(\sum_{m=1}^{M}\al_{\mathrm{B},mi}\right)^{2}}{\displaystyle\sum_{m=1}^{M}\sum_{j=1}^{W} \left(\al_{\mathrm{A},mj}+\al_{\mathrm{B},mj}\right)}\right)\!.
			\end{align}
		\end{corollary}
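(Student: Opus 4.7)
The plan is to derive the claim as a direct consequence of Proposition~\ref{PropositionPhaseC}, combined with the hierarchical structure $R_i=\min(R_i^\text{MAC},R_i^\text{BC})$ and $R_i^\text{BC}=\min(R_{\mathrm{A},i}^\text{MAC},R_{\mathrm{B},i}^\text{BC})+\min(R_{\mathrm{B},i}^\text{MAC},R_{\mathrm{A},i}^\text{BC})$ established in \eqref{received7}. The hypothesis $\alpha+\gamma=1$ together with $\gamma>\beta\ge 0$ forces $\alpha+\beta<1$, so the asymptotic orders $\mathcal{O}(M^{1-\alpha-\beta})$ for the MAC SINRs and $\mathcal{O}(M^{1-\alpha-\gamma})=\mathcal{O}(1)$ for the BC SINRs in Proposition~\ref{PropositionPhaseC} immediately split the behavior: all MAC SINRs diverge as $M\to\infty$, while the BC SINRs converge to the finite non-zero limits in \eqref{MACB33}.

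Given this splitting, the second step is a bookkeeping exercise on the nested minima. Since $R_{\mathrm{A},i}^\text{MAC}\to\infty$ and $R_{\mathrm{B},i}^\text{BC}$ is bounded, the first minimum collapses to $R_{\mathrm{B},i}^\text{BC}$; symmetrically, the second minimum collapses to $R_{\mathrm{A},i}^\text{BC}$. Hence $R_i^\text{BC}\to R_{\mathrm{A},i}^\text{BC}+R_{\mathrm{B},i}^\text{BC}$. At the outer level, $R_i^\text{MAC}\to\infty$ while $R_i^\text{BC}$ stays finite, so $R_i=\min(R_i^\text{MAC},R_i^\text{BC})=R_i^\text{BC}=R_{\mathrm{A},i}^\text{BC}+R_{\mathrm{B},i}^\text{BC}$. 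Substituting the explicit limit \eqref{MACB33} with $X\in\{\mathrm{A},\mathrm{B}\}$ into $R_{\mathrm{X},i}^\text{BC}=\frac{\tau_\mathrm{c}-\tau_\mathrm{p}}{2\tau_\mathrm{c}}\log_2(1+\gamma_{\mathrm{X},i}^\text{BC})$ and summing the two logarithms yields precisely the two-term expression stated in the corollary.

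I do not anticipate a substantive obstacle here; no new estimation is required beyond what Proposition~\ref{PropositionPhaseC} already provides. The only place that demands care is tracking which direction (A or B) appears in which slot of each minimum, because the MAC and BC phases pair opposite indices inside \eqref{received7}; a careless match would flip the $\alpha_{\mathrm{A},mi}$ and $\alpha_{\mathrm{B},mi}$ roles in the final expression. Once the correct matching is fixed and the divergent MAC terms are discarded, the claim follows by direct substitution.
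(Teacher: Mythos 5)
Your argument is correct and is essentially the paper's intended one: the hypotheses force $\al+\beta<1$, so the MAC SINRs of order $\mathcal{O}(M^{1-\al-\beta})$ diverge while the BC SINRs settle to the limits in \eqref{MACB33}, the nested minima in \eqref{received7} collapse onto $R_{\mathrm{A},i}^\text{BC}+R_{\mathrm{B},i}^\text{BC}$, and direct substitution gives the stated two-term formula. In fact your reasoning is more careful than the paper's own one-line justification, which asserts $\gamma_{\mathrm{X},i}^\text{MAC}\to 0$; that would drive $R_i\to 0$ through the minima and contradict the corollary, so the divergence of the MAC SINRs that you identify is the correct mechanism.
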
 
		
		Herein, the inequality suggests that $ \al+\beta<1 $, which means that $ \gamma_{\mathrm{X},i}^\text{MAC} \to 0 $, and only the BC phase defines the SE of the $i$th user pair.
		
		\begin{corollary}\label{corollary6}
			When $ \al+\beta=1 $ and $ \beta=\gamma \ge 0 $, the SE of the $i$th user pair as $ M \to \infty $ is written as
			\begin{align}
				R_{i}=\min\left(R_{i}^\text{MAC},R_{i}^\text{BC}\right),
			\end{align}
			where 
			\begin{align}
				&\!\!\!R_{i}=\frac{\tau_{\mathrm{c}}-\tau_{\mathrm{p}}}{2 \tau_{\mathrm{c}}}\nn\\
				&\!\!\!\!\times\!\log_{2}\!\!\left(\!\!1+\frac{\displaystyle NE_{\mathrm{u}}\!\!\left(\!\!\left(\sum_{m=1}^{M}\phi_{\mathrm{A},mi}\right)^{\!\!2}+\left(\sum_{m=1}^{M}\phi_{\mathrm{B},mi}\right)^{\!\!2}\right)}{\sum_{m=1}^{M} \left(\phi_{\mathrm{A},mi}+\phi_{\mathrm{B},mi}\right)}\!\!\right)\!\!\\
				&R_{i}^\text{BC}= \min\left(R_{\mathrm{A},i}^\text{MAC}, R_{\mathrm{B},i}^\text{BC}\right) +\min\left(R_{\mathrm{B},i}^\text{MAC}, R_{\mathrm{A},i}^\text{BC}\right),
			\end{align}
			with $ R_{\mathrm{X},i}^\text{MAC} $ and $ R_{\mathrm{X},i}^\text{BC} $ given in terms of \eqref{MACB32} and \eqref{MACB33}, respectively.
			
		\end{corollary}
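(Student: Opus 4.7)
The plan is to show that the two hypotheses $\alpha+\beta=1$ and $\beta=\gamma\ge 0$ place us simultaneously in the balanced regime for both phases identified in Proposition~\ref{PropositionPhaseC}. Specifically, $\alpha+\beta=1$ combined with $\beta=\gamma$ immediately yields $\alpha+\gamma=1$, so both of the exponents $1-\alpha-\beta$ and $1-\alpha-\gamma$ governing the asymptotic orders $\gamma_i^\text{MAC}=\mathcal{O}(M^{1-\alpha-\beta})$ and $\gamma_{\mathrm{X},i}^\text{BC}=\mathcal{O}(M^{1-\alpha-\gamma})$ vanish. Consequently, all three SINRs \eqref{MACB311}--\eqref{MACB331} converge to the non-zero, finite limits \eqref{MACB3111}--\eqref{MACB33}, so neither $R_i^\text{MAC}$ nor $R_i^\text{BC}$ collapses in the limit and the outer minimum in $R_i=\min(R_i^\text{MAC},R_i^\text{BC})$ remains genuinely active.

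With both limits in hand, I would simply substitute the asymptotic SINR expressions \eqref{MACB3111}--\eqref{MACB33} into \eqref{received6} and \eqref{received7} using the working assumption $\eta_{\mathrm{A},i}=\eta_{\mathrm{B},i}=1$. The expression for $R_i^\text{MAC}$ follows directly from \eqref{MACB3111}, while $R_i^\text{BC}=\min(R_{\mathrm{A},i}^\text{MAC},R_{\mathrm{B},i}^\text{BC})+\min(R_{\mathrm{B},i}^\text{MAC},R_{\mathrm{A},i}^\text{BC})$ is obtained by plugging the limiting individual-user SINRs from \eqref{MACB32} and \eqref{MACB33} into the corresponding rate expressions. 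The key algebraic fact used along the way is that $\phi_{\mathrm{X},mi}=\tau_\mathrm{p}p_\mathrm{p}\alpha_{\mathrm{X},mi}^{2}/(1+\tau_\mathrm{p}p_\mathrm{p}\alpha_{\mathrm{X},mi})\sim\tau_\mathrm{p}p_\mathrm{p}\alpha_{\mathrm{X},mi}^{2}$ as $p_\mathrm{p}\to 0$, which explains the appearance of the product $E_\mathrm{p}E_\mathrm{u}$ (respectively $E_\mathrm{p}E_\mathrm{r}$) in the limit and distinguishes this statement from its Scenario~B counterpart Corollary~\ref{corollary3}.

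Unlike the two preceding corollaries in Scenario~C, where the strict inequality $\beta>\gamma$ or $\gamma>\beta$ forced exactly one phase to dominate the $\min$ and dictate the SE, here neither phase is washed out and both contributions must be retained. I do not foresee a genuine technical obstacle: the argument reduces to carefully tracking the two simultaneous balance conditions $\alpha+\beta=1$ and $\alpha+\gamma=1$ and reading off the asymptotic sum SE from Proposition~\ref{PropositionPhaseC}. The only point requiring some care is to verify that the inner minima inside the definition of $R_i^\text{BC}$ are non-degenerate, which follows from the fact that $R_{\mathrm{X},i}^\text{MAC}$ and $R_{\mathrm{X},i}^\text{BC}$ both inherit finite positive limits from \eqref{MACB32} and \eqref{MACB33}.
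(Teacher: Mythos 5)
Your proposal is correct and follows the paper's (implicit) argument exactly: the corollary is a direct specialization of Proposition~\ref{PropositionPhaseC}, since $\al+\beta=1$ together with $\beta=\gamma$ forces $\al+\gamma=1$, so the MAC and BC SINRs both converge to the finite nonzero limits \eqref{MACB3111}--\eqref{MACB33} and both phases survive the outer minimum. One remark: your derivation (via $\phi_{\mathrm{X},mi}\sim\tau_{\mathrm{p}}p_{\mathrm{p}}\al_{\mathrm{X},mi}^{2}$) correctly yields $R_{i}^\text{MAC}$ in the Scenario-C form of \eqref{MACB3111}, with $NE_{\mathrm{p}}E_{\mathrm{u}}$ and the $\al_{\mathrm{X},mi}$, which is consistent with the corollary's cross-reference to \eqref{MACB32}--\eqref{MACB33} but not with the first displayed expression in the statement, which appears to carry over the $\phi$-based Scenario-B form from Corollary~\ref{corollary3} by a typographical slip.
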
 
		
		In other words, both conditions $ \al+\beta=1 $ and $ \al+\gamma=1 $ are fulfilled. By shedding further light on this corollary, we observe an interplay appearing among the transmit powers. For example, a reduction of the pilot transmit power would result in a degradation of the estimated channel that could be balanced by an increase of the transmit power of the users/APs during the transmission phase to preserve the performance with respect to the SE.
		
		\begin{remark}
			In all corollaries above, the corresponding SE could be boosted by increasing the involved $ E_{\mathrm{i}} $, $ i\mathrm{=p, u, r} $. For example, Corollary \ref{corollary2} suggests that $ R_{i} $ can be increased with $ E_{\mathrm{r}} $ by increasing the transmit power of the APs.
		\end{remark}
		
		\section{Power Allocation}\label{PA}
		Different from the previous section, where the transmit powers of all users were assumed equal for the sake of exposition of the scaling laws, in this section, we elaborate on the optimal power allocation among the users and the APs during both MAC and BC phases, respectively. We assume that the power allocation takes place during the data transmission phase while the power design of the training phase in terms of $ p_{\mathrm{p}} $ has previously being determined. In particular, we follow the procedure in \cite{Ngo2014,Kong2018} and adapt it according to our system architecture having $ M $ distributed APs as relay nodes.
		
		We focus on the maximization of the sum SE constrained to a total power $ P $, i.e., $ p_{\mathrm{u}} \sum_{i=1}^{W} \left(\eta_{\mathrm{A},i}+\eta_{\mathrm{B},i}\right)+p_{\mathrm{r}}\le P$. In particular, the formulation of the power allocation optimization is described by
		\begin{subequations}
			\begin{alignat}{2}
				&\max_{\etav_{\mathrm{A}},\etav_{\mathrm{B}},p_{\mathrm{r}}}&\quad&\sum_{i=1}^{W} {R}_{i}\label{opt1}\\
				&\mathrm{subject}~\mathrm{to}& & p_{\mathrm{u}} \sum_{i=1}^{W} \left(\eta_{\mathrm{A},i}+\eta_{\mathrm{B},i}\right)+p_{\mathrm{r}}\le P\nn\\
				& &&\etav_{\mathrm{A}}\ge\b0, \etav_{\mathrm{B}}\ge\b0, p_{\mathrm{d}}\ge 0, p_{\mathrm{r}}\ge 0\nn\\
				& &&{R}_{i}\ge R_{\mathrm{min}}, i \in \mathcal{W}\nn
			\end{alignat}
		\end{subequations}
		where we have denoted $ \etav_{\mathrm{A}}=\left[\etav_{\mathrm{A},1},\ldots, \etav_{\mathrm{A},W}\right]^{\T} $ and $ \etav_{\mathrm{B}}=\left[\etav_{\mathrm{B},1},\ldots, \etav_{\mathrm{B},W}\right]^{\T} $, while $ R_{\mathrm{min}} $ expresses the minimum SE of the $ i $th pair. Given that the logarithm is an increasing function, the optimization, given by \eqref{opt1}, can be written as
		\begin{subequations}
			\begin{alignat}{2}
				&\min_{\substack{\tilde{\etav}_{\mathrm{A}},\tilde{\etav}_{\mathrm{B}},p_{\mathrm{r}}\\ \gamma_{i}, \gamma_{\mathrm{A},i}, \gamma_{\mathrm{B},i}
				}}&\quad&\prod_{i=1}^{W}\left(1+{\gamma}_{i}\right)^{-1}\label{opt2}\\
				&\mathrm{subject}~\mathrm{to}& &{\gamma}_{i}\le \frac{a_{1,i}\tilde{\eta}_{\mathrm{A},i}+a_{2,i}\tilde{\eta}_{\mathrm{B},i}}{\sum_{j=1}^{W}\left(a_{3,ij}\tilde{\eta}_{\mathrm{A},j}+a_{4,ij}\tilde{\eta}_{\mathrm{B},j}\right)+1}\label{opt21}\\
				& && \gamma_{\mathrm{A},i}\le \min\{\frac{a_{1,i}\tilde{\eta}_{\mathrm{A},i}}{c_{i}}, \frac{p_{\mathrm{r}}}{p_{\mathrm{r}}b_{B,i}+c_{B,i}}\}, i \in \mathcal{W}\\
				& && \gamma_{\mathrm{B},i}\le \min\{\frac{a_{2,i}\tilde{\eta}_{\mathrm{B},i}}{c_{i}}, \frac{p_{\mathrm{r}}}{p_{\mathrm{r}}b_{A,i}+c_{A,i}}\}, i \in \mathcal{W}\\
				& &&\gamma_{i}\le \gamma_{\mathrm{A},i}+ \gamma_{\mathrm{B},i}+\gamma_{\mathrm{A},i}\gamma_{\mathrm{B},i}, i \in \mathcal{W}\label{opt22}\\
				& &&\sum_{i=1}^{W} \left(\tilde{\eta}_{\mathrm{A},i}+\tilde{\eta}_{\mathrm{B},i}\right)+p_{\mathrm{r}}\le P\\
				& &&\tilde{\etav}_{\mathrm{A}}\ge\b0, \tilde{\etav}_{\mathrm{B}}\ge\b0, p_{\mathrm{d}}\ge 0, p_{\mathrm{r}}\ge 0\\
				& &&\gamma_{i}^{-1}\left(2^{\frac{2 \tau_{\mathrm{c}}R_{\mathrm{min}}}{\tau_{\mathrm{c}}-\tau_{\mathrm{p}}}}-1\right)\le 1, i \in \mathcal{W}
			\end{alignat}
		\end{subequations}
		where we have defined $\tilde{ \eta}_{\mathrm{A},i}=p_{\mathrm{u}}\eta_{\mathrm{A},i} $, $\tilde{ \eta}_{\mathrm{B},i}=p_{\mathrm{u}}\eta_{\mathrm{B},i} $, $a_{1,i}=\frac{N \left(\sum_{m=1}^{M}\phi_{\mathrm{A},mi}\right)^{2}}{\sum_{m=1}^{M}\left(\phi_{\mathrm{A},mi}+\phi_{\mathrm{B},mi}\right)} $, $a_{2,i}=\frac{N \left(\sum_{m=1}^{M}\phi_{\mathrm{B},mi}\right)^{2}}{\sum_{m=1}^{M}\left(\phi_{\mathrm{A},mi}+\phi_{\mathrm{B},mi}\right)} $, $a_{3,ij}=\frac{\sum_{m=1}^{M}\al_{\mathrm{A},mj}}{\sum_{m=1}^{M}\left(\phi_{\mathrm{A},mi}+\phi_{\mathrm{B},mi}\right)} $, $a_{4,ij}=\frac{\sum_{m=1}^{M}\al_{\mathrm{B},mj}}{\sum_{m=1}^{M}\left(\phi_{\mathrm{A},mi}+\phi_{\mathrm{B},mi}\right)} $, $c_{i}= \sum_{j=1}^{W}\left(a_{3,ij}\tilde{\eta}_{\mathrm{A},j}+a_{4,ij}\tilde{\eta}_{\mathrm{B},j}\right)+1 $, $ b_{X,i}=\frac{\sum_{m=1}^{M}\sum_{j=1}^{W}\al_{\mathrm{X},mi}\left(\eta_{\mathrm{A},mj}\phi_{\mathrm{B},mj}+\eta_{\mathrm{B},mj}\phi_{\mathrm{A},mj}\right)}{N\left(\sum_{m=1}^{M}\sqrt{\eta_{\mathrm{\bar{X}},mi}}\phi_{\mathrm{X},mi}\right)^{2}} $, and $ c_{X,i}=\frac{\sum_{m=1}^{M}\sum_{j=1}^{W}\left(\eta_{\mathrm{A},mj}\phi_{\mathrm{B},mj}+\eta_{\mathrm{B},mj}\phi_{\mathrm{A},mj}\right)}{N\left(\sum_{m=1}^{M}\sqrt{\eta_{\mathrm{\bar{X}},mi}}\phi_{\mathrm{X},mi}\right)^{2}} $. Moreover, $ \gamma_{i} $, $ \gamma_{\mathrm{A},i} $, and $ \gamma_{\mathrm{B},i} $ correspond to the SINRs of $ 	R_{i} $, $ \min\left(R_{\mathrm{A},i}^\text{MAC}, R_{\mathrm{B},i}^\text{BC}\right) $ and $ \min\left(R_{\mathrm{B},i}^\text{MAC}, R_{\mathrm{A},i}^\text{BC}\right) $, respectively.
		
		The latter optimization problem is nonconvex since it falls to the category of complementary geometric programming (CGP). However, its solution can be obtained by solving a sequence of convex GP problems \cite{Chiang2007,Weeraddana2011,Ngo2014,Kong2018}. Initially, as in \cite[Lem. 1]{Weeraddana2011}, we approximate the objective function $ \left(1+{\gamma}_{i}\right)$ by the monomial function $ \delta_{i}{\gamma}_{i}^{\mu_{i}} $, where $ \delta_{i}=\left({\gamma}_{i}^{\circ}\right)^{-\mu_{i}} \left(1+{\gamma}_{i}^{\circ}\right)$ and $ \mu_{i}=\frac{{\gamma}_{i}^{\circ}}{{\gamma}_{i}^{\circ}+1} $ \cite{Ngo2014}. Next, we transform the two inequalities described by \eqref{opt21} and \eqref{opt22} into posynomials, in order to result in a GP problem. Specifically, the first inequality is written as
		\begin{align}
			\!\!a_{1,i}\tilde{\eta}_{\mathrm{A},i}\!+\!a_{2,i}\tilde{\eta}_{\mathrm{B},i}\! \ge\! \left(
			\frac{a_{1,i}\tilde{\eta}_{\mathrm{A},i}}{\phi_{\mathrm{A},i}}\right)^{\!\phi_{\mathrm{A},i}}\!\!\left(\frac{a_{2,i}\tilde{\eta}_{\mathrm{B},i}}{\phi_{\mathrm{B},i}}\right)^{\!\phi_{\mathrm{B},i}}, \label{firstIneq}
		\end{align}
		where $ \phi_{\mathrm{A},i}=\frac{a_{1,i}\tilde{\eta}^{\circ}_{\mathrm{A},i}}{a_{1,i}\tilde{\eta}^{\circ}_{\mathrm{A},i}+a_{2,i}\tilde{\eta}^{\circ}_{\mathrm{B},i} } $, $ \phi_{\mathrm{B},i}=\frac{a_{2,i}\tilde{\eta}^{\circ}_{\mathrm{B},i}}{a_{1,i}\tilde{\eta}^{\circ}_{\mathrm{A},i}+a_{2,i}\tilde{\eta}^{\circ}_{\mathrm{B},i} } $ with $ \tilde{\eta}^{\circ}_{\mathrm{A},i} $ and $ \tilde{\eta}^{\circ}_{\mathrm{B},i} $ denoting the initialization values. In \eqref{firstIneq}, we have applied a known property, expressing that, for any set of positive numbers, the geometric mean is no larger than the arithmetic mean \cite{He2014}. Hence, substitution of \eqref{firstIneq} into \eqref{opt21} gives
		\begin{align}
			{\gamma}_{i}\le \frac{\left(
				\frac{a_{1,i}\tilde{\eta}_{\mathrm{A},i}}{q_{\mathrm{A},i}}\right)^{q_{\mathrm{A},i}}\left(\frac{a_{2,i}\tilde{\eta}_{\mathrm{B},i}}{q_{\mathrm{B},i}}\right)^{q_{\mathrm{B},i}}}{\sum_{j=1}^{W}\left(a_{3,ij}\tilde{\eta}_{\mathrm{A},j}+a_{4,ij}\tilde{\eta}_{\mathrm{B},j}\right)+1}, i \in \mathcal{W}.
		\end{align}
		
		For the second inequality, given by \eqref{opt22}, we follow the procedure in \cite{Kong2018} to approximate $ f\left(x,y\right)=x+y+xy $ near an arbitrary point $ x^{\circ}, y^{\circ}>0 $ by means of the monomial function $ g\left(x,y\right) = \zeta x^{\lambda_{1}}y^{\lambda_{2}}$ in terms of $ \zeta $, $ \lambda_{1} $, and $ \lambda_{2} $. These parameters are obtained in \cite{Kong2018} as $ \lambda_{1}=\frac{x^{\circ}\left(1+y^{\circ}\right)}{x^{\circ}+y^{\circ}+x^{\circ}y^{\circ}} $, $ \lambda_{2}=\frac{y^{\circ}\left(1+x^{\circ}\right)}{x^{\circ}+y^{\circ}+x^{\circ}y^{\circ}} $, $ \zeta=\left(x^{\circ}+y^{\circ}+x^{\circ}y^{\circ}\right)\left(x^{\circ}\right)^{-\lambda_{1}} \left(y^{\circ}\right)^{-\lambda_{2}}$. Thus, we have
		\begin{align}
			\gamma_{i}\le \zeta_{i} \gamma_{\mathrm{A},i}^{\lambda_{\mathrm{A},i}}\gamma_{\mathrm{B},i}^{\lambda_{\mathrm{B},i}}, i \in \mathcal{W}
		\end{align}
		where $ \zeta_{i}=\left(\gamma_{\mathrm{A},i}^{\circ}+\gamma_{\mathrm{B},i}^{\circ}+\gamma_{\mathrm{A},i}^{\circ}\gamma_{\mathrm{B},i}^{\circ}\right)\left(\gamma_{\mathrm{A},i}^{\circ}\right)^{-\lambda_{\mathrm{A},i}}\left(\gamma_{\mathrm{B},i}^{\circ}\right)^{-\lambda_{\mathrm{B},i}}$, $ \lambda_{\mathrm{A},i}=\frac{\gamma_{\mathrm{A},i}^{\circ}\left(1+\gamma_{\mathrm{B},i}^{\circ}\right)}{\gamma_{\mathrm{A},i}^{\circ}+\gamma_{\mathrm{B},i}^{\circ}+\gamma_{\mathrm{A},i}^{\circ}\gamma_{\mathrm{B},i}^{\circ}} $, and $ \lambda_{\mathrm{B},i}=\frac{\gamma_{\mathrm{B},i}^{\circ}\left(1+\gamma_{\mathrm{A},i}^{\circ}\right)}{\gamma_{\mathrm{A},i}^{\circ}+\gamma_{\mathrm{B},i}^{\circ}+\gamma_{\mathrm{A},i}^{\circ}\gamma_{\mathrm{B},i}^{\circ}} $ with $ \gamma_{\mathrm{A},i}^{\circ} $, $ \gamma_{\mathrm{B},i}^{\circ} $ being the initialization values. The algorithm steps are provided by Algorithm 1, where the parameter $ \theta>1 $ defines the desired accuracy as a trade-off with convergence speed. Especially, as $ \theta $ approaches $ 1 $, we result in better accuracy while the convergence speed is slow.
		\begin{algorithm}\label{Algoa1}
			\caption{Successive approximation algorithm}
			1.				 \textit{Initialisation}: Define the parameter $ \theta $ and the tolerance $ \epsilon $. Set $ k=1 $, $\tilde{\eta}_{\mathrm{A},i}=\tilde{\eta}_{\mathrm{B},i}=\frac{P}{4 W} $, $ p_{\mathrm{r}}=\frac{P}{2} $ while $ \gamma_{i}^{\circ} $, $ \gamma_{\mathrm{A},i}^{\circ} $, and $ \gamma_{\mathrm{B},i}^{\circ} $ are chosen by means of Theorem~\ref{theoremTotalSE}.\\
			2. Iteration $ k $: Evaluate $ \mu_{i}=\frac{\gamma_{i}^{\circ}}{\gamma_{i}^{\circ}+1} $, $\phi_{\mathrm{A},i}=\frac{a_{1,i}\tilde{\eta}^{\circ}_{\mathrm{A},i}}{a_{1,i}\tilde{\eta}^{\circ}_{\mathrm{A},i}+a_{2,i}\tilde{\eta}^{\circ}_{\mathrm{B},i} } $, $ \phi_{\mathrm{B},i}=\frac{a_{2,i}\tilde{\eta}^{\circ}_{\mathrm{B},i}}{a_{1,i}\tilde{\eta}^{\circ}_{\mathrm{A},i}+a_{2,i}\tilde{\eta}^{\circ}_{\mathrm{B},i} } $, $ \zeta_{i}=\left(\gamma_{\mathrm{A},i}^{\circ}+\gamma_{\mathrm{B},i}^{\circ}+\gamma_{\mathrm{A},i}^{\circ}\gamma_{\mathrm{B},i}^{\circ}\right)\left(\gamma_{\mathrm{A},i}^{\circ}\right)^{-\lambda_{\mathrm{A},i}}\left(\gamma_{\mathrm{B},i}^{\circ}\right)^{-\lambda_{\mathrm{B},i}}$, $ \lambda_{\mathrm{A},i}=\frac{\gamma_{\mathrm{A},i}^{\circ}\left(1+\gamma_{\mathrm{B},i}^{\circ}\right)}{\gamma_{\mathrm{A},i}^{\circ}+\gamma_{\mathrm{B},i}^{\circ}+\gamma_{\mathrm{A},i}^{\circ}\gamma_{\mathrm{B},i}^{\circ}} $, $ \lambda_{\mathrm{B},i}=\frac{\gamma_{\mathrm{B},i}^{\circ}\left(1+\gamma_{\mathrm{A},i}^{\circ}\right)}{\gamma_{\mathrm{A},i}^{\circ}+\gamma_{\mathrm{B},i}^{\circ}+\gamma_{\mathrm{A},i}^{\circ}\gamma_{\mathrm{B},i}^{\circ}} $. Then, solve the GP problem:

			\begin{subequations}
				\begin{alignat}{2}
					&\min_{\substack{\tilde{\etav}_{\mathrm{A}},\tilde{\etav}_{\mathrm{B}},p_{\mathrm{r}}\\ \gamma_{i}, \gamma_{\mathrm{A},i}, \gamma_{\mathrm{B},i}
					}}&\quad&\prod_{i=1}^{W}{\gamma}_{i}^{-\mu_{i}}\label{opt3}\\
					&\mathrm{subject}~\mathrm{to}& & \theta^{-1}\tilde{\eta}_{\mathrm{A},i}\le\tilde{\eta}_{\mathrm{A},i}\le \theta\tilde{\eta}_{\mathrm{A},i}, i \in \mathcal{W} \\
					& && \theta^{-1}\tilde{\eta}_{\mathrm{B},i}\le\tilde{\eta}_{\mathrm{B},i}\le \theta\tilde{\eta}_{\mathrm{B},i}, i \in \mathcal{W}\\
					& && \theta^{-1}\gamma_{i}^{\circ}\le \gamma_{i}\le \theta\gamma_{i}^{\circ}, i \in \mathcal{W}\\
					& && \theta^{-1}\gamma_{\mathrm{A},i}^{\circ}\le \gamma_{\mathrm{A},i}\le \theta\gamma_{\mathrm{A},i}^{\circ}, i \in \mathcal{W}\\
					& && \theta^{-1}\gamma_{\mathrm{B},i}^{\circ}\le \gamma_{\mathrm{B},i}\le \theta\gamma_{\mathrm{B},i}^{\circ}, i \in \mathcal{W}\\
					& &&	c_{i}{\gamma}_{i}\!\left(\!
					\frac{a_{1,i}\tilde{\eta}_{\mathrm{A},i}}{\phi_{\mathrm{A},i}}\!\right)^{\!-\phi_{\mathrm{A},i}}\!\!\left(\!\frac{a_{2,i}\tilde{\eta}_{\mathrm{B},i}}{\phi_{\mathrm{B},i}}\!\right)^{\!\!-\phi_{\mathrm{B},i}} \!\!\le 1, i \in \mathcal{W}\\
					& &&\gamma_{i}\zeta_{i}^{-1} \gamma_{\mathrm{A},i}^{-\lambda_{\mathrm{A},i}}\gamma_{\mathrm{B},i}^{-\lambda_{\mathrm{B},i}}\le 1, i \in \mathcal{W}\\
					& && 			 \gamma_{\mathrm{A},i}c_{i} \left(a_{1,i}\tilde{\eta}_{\mathrm{A},i}\right)^{-1} \le 1 i \in \mathcal{W}\\
					& && 			 \gamma_{\mathrm{B},i}c_{i} \left(a_{2,i}\tilde{\eta}_{\mathrm{B},i}\right)^{-1} \le 1 i \in \mathcal{W}\\
					& && \gamma_{\mathrm{A},i}p_{\mathrm{r}}^{-1} \left(p_{\mathrm{r}}b_{B,i}+c_{B,i}\right)\le 1, i \in \mathcal{W}\\
					& && \gamma_{\mathrm{B},i}p_{\mathrm{r}}^{-1} \left(p_{\mathrm{r}}b_{A,i}+c_{A,i}\right)\le 1, i \in \mathcal{W}\\
					& &&\sum_{i=1}^{W} \left(\tilde{\eta}_{\mathrm{A},i}+\tilde{\eta}_{\mathrm{B},i}\right)+p_{\mathrm{r}}\le P\\
					& &&\tilde{\etav}_{\mathrm{A}}\ge\b0, \tilde{\etav}_{\mathrm{B}}\ge\b0,p_{\mathrm{d}}\ge 0, p_{\mathrm{r}}\ge 0\\
					& &&\gamma_{i}^{-1}\left(2^{\frac{2 \tau_{\mathrm{c}}R_{\mathrm{min}}}{\tau_{\mathrm{c}}-\tau_{\mathrm{p}}}}-1\right)\le 1, i \in \mathcal{W}
				\end{alignat}
			\end{subequations}
			
			Let $ \tilde{\eta}_{\mathrm{A},i}^{\star} $, $ \tilde{\eta}_{\mathrm{B},i}^{\star} $, $ \gamma_{i}^{\star} $, $ \gamma_{\mathrm{A},i}^{\star} $, $ \gamma_{\mathrm{B},i}^{\star} $, $ i \in \mathcal{W} $.\\
			3. If $ \max_{i}|\tilde{\eta}_{\mathrm{A},i}^{\star}-\tilde{\eta}_{\mathrm{A},i}^{\circ}|<\epsilon$ and/or $ \max_{i}|\tilde{\eta}_{\mathrm{B},i}^{\star}-\tilde{\eta}_{\mathrm{B},i}^{\circ}|<\epsilon$ and/or $ \max_{i}| \gamma_{i}^{\star}- \gamma_{i}^{\circ}|<\epsilon$ and/or $ \max_{i}| \gamma_{\mathrm{A},i}^{\star}- \gamma_{\mathrm{A},i}^{\circ}|<\epsilon$ and/or $ \max_{i}| \gamma_{\mathrm{B},i}^{\star}- \gamma_{\mathrm{B},i}^{\circ}|<\epsilon$ $ \rightarrow $ Stop. Otherwise, go to step $ 4 $.\\
			4. \textit{Update initial values}. Set $ k= k +1 $, $ \tilde{\eta}_{\mathrm{A},i}^{\circ}=\tilde{\eta}_{\mathrm{A},i}^{\star} $, $ \tilde{\eta}_{\mathrm{B},i}^{\circ}=\tilde{\eta}_{\mathrm{B},i}^{\star} $, $ \gamma_{i}^{\circ}=\gamma_{i}^{\star}$, $ \gamma_{\mathrm{A},i}^{\circ}=\gamma_{\mathrm{A},i}^{\star}$, $ \gamma_{\mathrm{B},i}^{\circ}=\gamma_{\mathrm{B},i}^{\star}$, and go to step $ 2 $.
		\end{algorithm}

		\section{Numerical Results}\label{Numerical} 
		This section depicts the analytical results provided by means of Theorem~\ref{theoremTotalSE} and Propositions~\ref{PropositionPEproof}-\ref{PropositionPhaseC} that illustrate the performance of a multi-pair two-way CF mMIMO system. For the sake of comparison, we have accounted for a conventional two-way collocated massive MIMO architecture employing DF as described by \cite{Kong2018}. Also, our analytical results are accompanied by Monte Carlo simulations by means of $ 10^{3} $ independent channel realizations, in order to verify them and show their tightness. 
		
		\begin{table}
			\caption{Parameters Values for Numerical Results~}
			\begin{center}
				\begin{tabulary}{\columnwidth}{ | c | c | }\hline
					{\bf Description} &{\bf Values}\\ \hline
					Number of APs& $M=200$\\ \hline
					Number of Antennas/AP & $N=3$\\ \hline
					Number user pairs& $W=5$\\ \hline
					Carrier frequency & $f_{0} = 2~\mathrm{GHz}$\\ \hline
					Power per pilot symbol & $\bar{p}_{\mathrm{p}}=100~\mathrm{mW}$\\ \hline
					Uplink transmit power &
					${\bar{p}_{\mathrm{u}}}=100~\mathrm{mW}$ \\ \hline
					Path loss exponent & $\al=4$ \\ \hline
					Communication bandwidth & $W_{\mathrm{c}} = 20~\mathrm{MHz}$\\ \hline		
					Coherence bandwidth & $B_{\mathrm{c}}=200~\mathrm{KHz}$ 
					\\ \hline
					Coherence time & $T_{\mathrm{c}}=1~\mathrm{ms}$
					\\ \hline
					Duration of uplink training & $\tau_{\mathrm{p}}=10$ samples
					\\ \hline
					Boltzmann constant & $\kappa_{\mathrm{B}}=1.381\times 10^{-23}~\mathrm{J/K}$ \\ \hline
					Noise temperature & $T_{0}= 290~\mathrm{K}$ \\ \hline
					Noise figure & ${N_\mathrm{F}}=9~\mathrm{dB}$ \\ \hline
				\end{tabulary}\label{ParameterValues1} 
			\end{center}
		\end{table}

		\subsection{Simulation Setup}
		Unless otherwise stated, the following
		set of parameters is used during the simulations. In particular, we consider $ M=200 $ APs and $ W=5 $ user pairs uniformly distributed in an area of $ 1 \times 1~\mathrm{km}^{2}$. Each AP is equipped with $ N=3 $ antennas. Note that the area edges are wrapped around to avoid the boundary effects. Also, we assume that the coherence time and bandwidth are $T_{\mathrm{c}}=1~\mathrm{ms}$ and $B_{\mathrm{c}}=200~\mathrm{kHz}$, respectively, which means that the coherence block consists of $ 200 $ channel uses. The orthogonality among pilots requires at least $ \tau_{\mathrm{p}}=2W=10 $. Moreover, we assume that ${p}_{\mathrm{p}}$, ${{p}_{\mathrm{u}}}$, and ${{p}_{\mathrm{r}}}$ correspond to the normalized powers, obtained by dividing $\bar{p}_{\mathrm{p}}=\bar{p}_{\mathrm{u}}$, and $\bar{p}_{\mathrm{r}}$ by the noise power ${N_\mathrm{P}} = W_{\mathrm{c}} $ $\mbox{\footnotesize{$\times$}}$ $ \kappa_{\mathrm{B}}$ $\mbox{\footnotesize{$\times$}}$ $ T_{0}$ $\mbox{\footnotesize{$\times$}}$ ${N_\mathrm{F}}$.
		The various parameters are found in Table~\ref{ParameterValues1}. Power control in terms of maximizing the sum SE is considered only in Subsection \ref{Powerallocation}. Hence, without any power control, we assume that in the uplink, all users transmit with full power, i.e., $ \eta_{\mathrm{A},i}=\eta_{\mathrm{B},i}=1, \forall i$. Similarly, in the downlink, all APs transmit with full power, which means $ \eta_{mi}=\left(N\sum_{i=1}^{W}\left( \phi_{\mathrm{B},mi}+ \phi_{\mathrm{A},mi}\right) \right)^{-1}$ by satisfying \eqref{prPower1}. Also, the transmit powers during the MAC and BC phases are assumed equal, i.e., $ 2 W {p}_{\mathrm{u}} ={p}_{\mathrm{r}}$.

		We take into account for \cite[Remark 4]{Bjoernson2020}, and thus, we consider the 3GPP Urban Microcell model in~\cite[Table B.1.2.1-1]{3GPP2017} as a more appropriate benchmark for CF mMIMO systems than the established model presented initially in~\cite{Ngo2017} because of two main reasons: i) although CF mMIMO systems are more likely suggested for shorter distances, the model in~\cite{Ngo2017} assumes that shadowing is met for users found $ 50\mathrm{m} $ further from an AP, and ii) the COST-Hata model, used in~\cite{Ngo2017} is suitable for macro-cells with APs being at least
		$ 1\mathrm{km} $ far from the users and at least $ 30\mathrm{m} $ above the ground while the CF setting suggests APs found at lower height and being very close to the users. Specifically, the large-scale fading coefficient, described by this mode for a $ 2~\mathrm{GHz}$ carrier frequency, is given by
		\begin{align}
			\al_{\mathrm{X},mk}[dB]=-30.5-36.7\log_{10}\left(\frac{d_{\mathrm{X},mk}}{1~\mathrm{m}}\right)+F_{\mathrm{X},mk},\nn
		\end{align}
		where $ d_{\mathrm{X},mk} $ expresses the distance between AP $ m $ and user $k $ while $ F_{\mathrm{X},mk}\sim\mathcal{CN}\left(0,4^{2}\right) $ describes the shadow fading. In addition, the shadowing terms between different users are assumed to be correlated as $\EE\{F_{\mathrm{X},mk}F_{\mathrm{X},ij}\}=4^{2}2^{-\delta_{\mathrm{X},kj}/9} $ only when $ m=i $, where $ \delta_{\mathrm{X},kj} $ is the distance between users $ k $ and $j$. 
		
		\subsection{Demonstration of basic properties}
		Initially, we assume that $ p_{\mathrm{p}}= p_{\mathrm{u}} $ as well as $ p_{\mathrm{r}}=2 W p_{\mathrm{u}} $, which mean that users transmit equal power during the training and data transmission phases. 
		
		Fig. \ref{Fig2} presents the sum SE versus $ p_{\mathrm{u}} $ with varying number of APs along with Monte-Carlo simulations verifying the analytical expressions since the lines almost coincide. Also, we notice that SE saturates at high SNR due to the inter-user interference, as expected. Moreover, SE increases with the increasing number of APs $ M $.

		Fig \ref{Fig1} depicts the sum SE versus the number of APs $ M $. Also, we have considered the scenario of genie receivers at the users during the BC phase. In other words, we have assumed that the corresponding receivers are aware of instantaneous CSI and not just its statistics. Since the gap between the two lines is small, the downlink channel hardens and no extra training is required. In addition, for the sake of comparison, we have included the collocated scenario with a base station at the center of the area being the relay node and having $ M N $ antennas, $ \al_{\mathrm{X},mk}=\al_{\mathrm{X}k} $, $ \eta_{k}=M \eta_{mk} $, $ \forall m $, and in general, all the parameters equal across the index $ m $ \cite{Vu2019}. It can be seen that the CF mMIMO relay setting outperforms the collocated layout because the diversity against path-loss and shadow fading is exploited. Moreover, we have considered the conventional orthogonal scheme where the transmission of each pair takes place at different time slots or frequency bands. As the number of APs increases, the two-way CF MIMO performs better because the effect from the inter-user interference decreases. Thus, when $ M $ is low, the orthogonal scheme performs better, but the mMIMO system behaves better as $ M $ increases and channels become orthogonal, which means a mitigation of the interference. Compared to Fig. \ref{Fig1}, which assumes $ W=5 $ user pairs, Fig. \ref{Fig11} shows the sum SE in the case of $ W=20$ user pairs. In the latter figure, the outperformance of the CF mMIMO setting over the collocated scenario is more pronounced because of its concomitant advantages. Hence, we observe that the performance gap at $ M=400 $ AP is $ 8\% $ and $ 39\% $, when $ W=5 $ and $ W=20 $, respectively. 		
		\subsection{Power-scaling laws}
		Herein, we verify Propositions \ref{PropositionPEproof}-\ref{PropositionPhaseC} providing the power-scaling laws, also denoted as asymptotic results that correspond to the scenarios A-C mentioned earlier. In addition, we elaborate on the resultant power savings with comparison to the analytical exact results provided by Theorem \ref{theoremTotalSE}.
		
		\begin{figure}[!h]
			\begin{center}
				\includegraphics[width=0.9\linewidth]{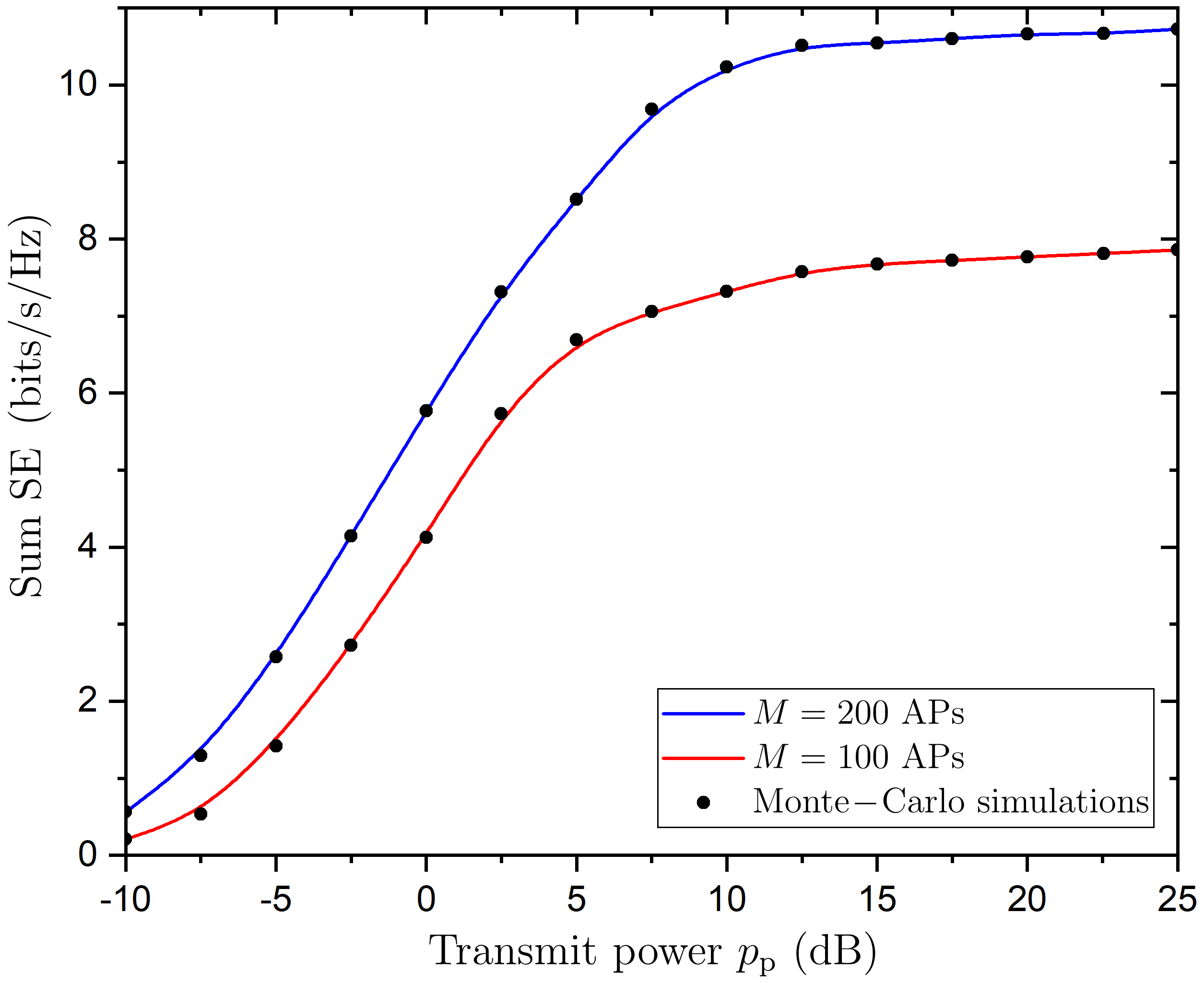}
				\caption{\footnotesize{Sum SE per versus the uplink transmit power $ p_{\mathrm{p}} =p_{\mathrm{u}} $ for varying number of APs $M$ with validation by Monte-Carlo simulations ($ N=3 $,  $ W=5 $, and $ p_{\mathrm{r}}=2 W p_{\mathrm{u}} $).}}
				\label{Fig2}
			\end{center}
		\end{figure}
		\begin{figure}[!h]
			\subfigure[]{
				\centering
				\includegraphics[width=0.9\linewidth]{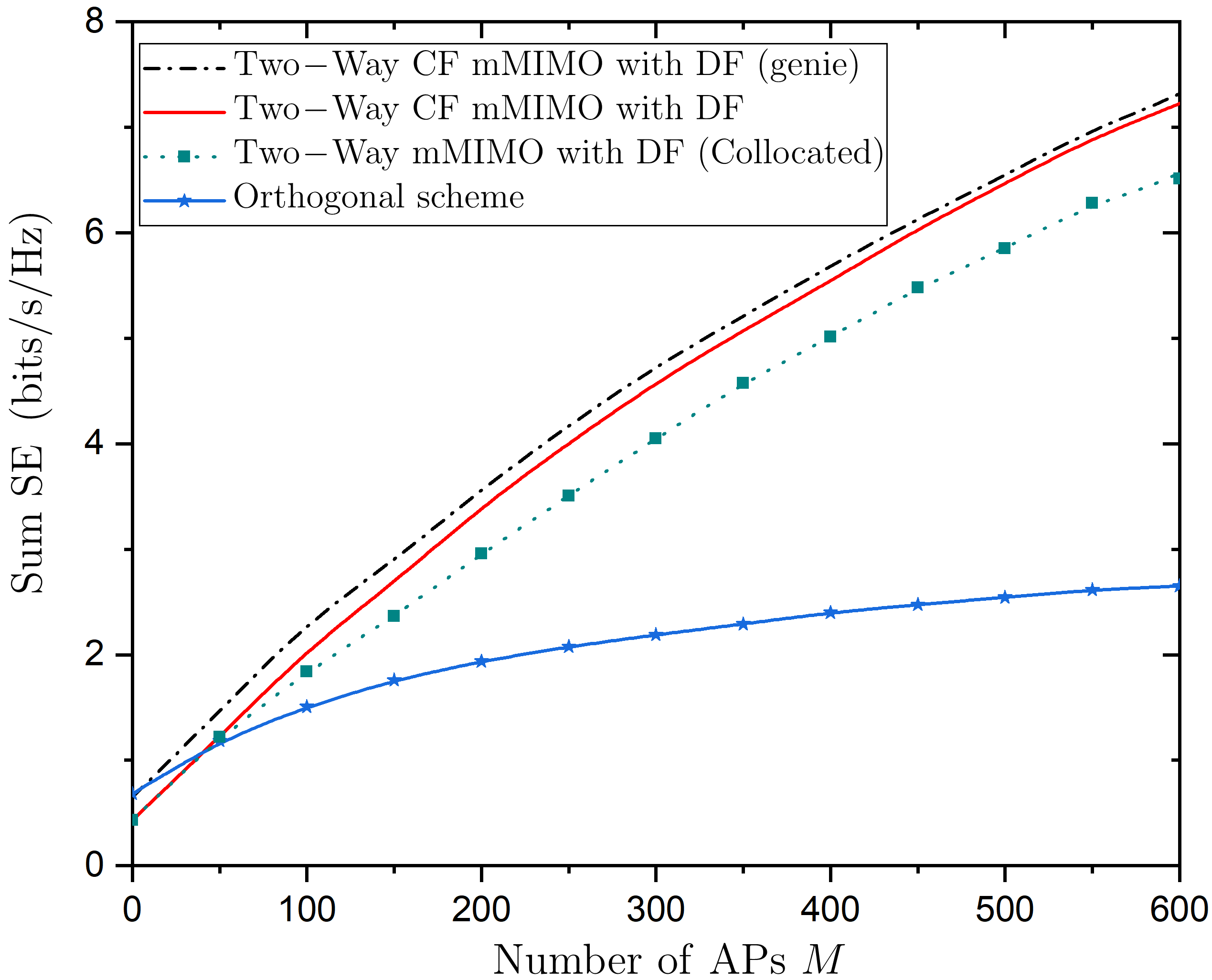}	\label{Fig1}}
			
			\subfigure[]{
				\centering
				\includegraphics[width=0.9\linewidth]{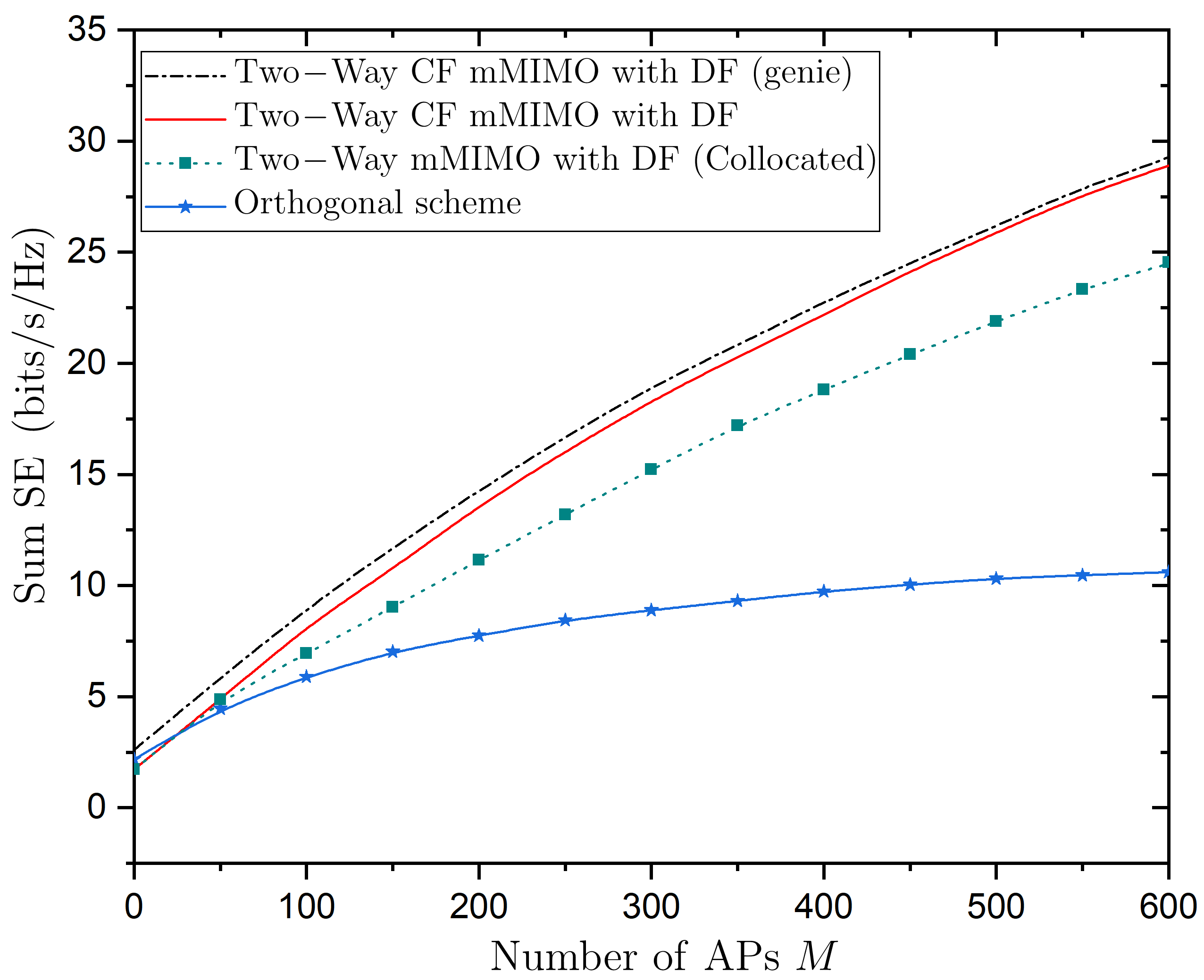}	\label{Fig11}}
			\caption{\footnotesize{Sum SE versus the number of APs $ M $ for different scenarios ($ N=3 $, $ p_{\mathrm{p}}= p_{\mathrm{u}} $, and $ p_{\mathrm{r}}=2 W p_{\mathrm{u}} $), when (a) $ W=5 $ and (b) $ W=20 $ user pairs.}}
			
		\end{figure}
		
		Fig~\ref{Fig3} sheds light into Scenario A by depicting the sum SE with respect to the number of APs $ M $ for varying scaling in terms of the parameter $ \al $. In general, we observe that the asymptotic results approach the exact curves as $ M \to \infty $. When $ \al=0.7 \left(0< \al<1\right)$, $ R_{i} \to \infty$, while if $ \al=1 $, the asymptotic SE saturates and approaches the analytical result. The third group of curves corresponds to $ \al=1.4>1 $. In such case, $ R_{i} $ tends to zero.

		Figs. \ref{Fig4} and \ref{Fig5} illustrate the properties regarding the power savings of Scenario B described by Proposition \ref{PropositionPhaseB}. Especially, in any of the cases i) $ \beta=1 $ and $ 0<\gamma<1 $, ii) $ 0<\beta<1$ and $ \gamma=1 $, and iii) $ \beta=\gamma=1 $, we show in Fig. \ref{Fig4} that the asymptotic results converge to specific values and approach the exact results in the large number of APs regime ($ M \to \infty $) as described by Corollaries 3, 4. In the upper set of plots of Fig. \ref{Fig5}, we observe that when $ \beta $ or $ \gamma $ is greater than one, which means that the transmit power of the MAC or BC phase is cut down too much, the sum SE approaches zero as $ M \to \infty $. In fact, the larger the parameter, being greater than one, the faster the decrease of the SE to zero. Furthermore, when both the transmit powers of users and APs are reduced tolerably such that $ \beta<1$ and $ \gamma<1 $, the sum SE $ R_{i} $ increases without bound as the lower set of lines of Fig. \ref{Fig5} reveals.
		\begin{figure}[!h]
			\begin{center}
				\includegraphics[width=0.95\linewidth]{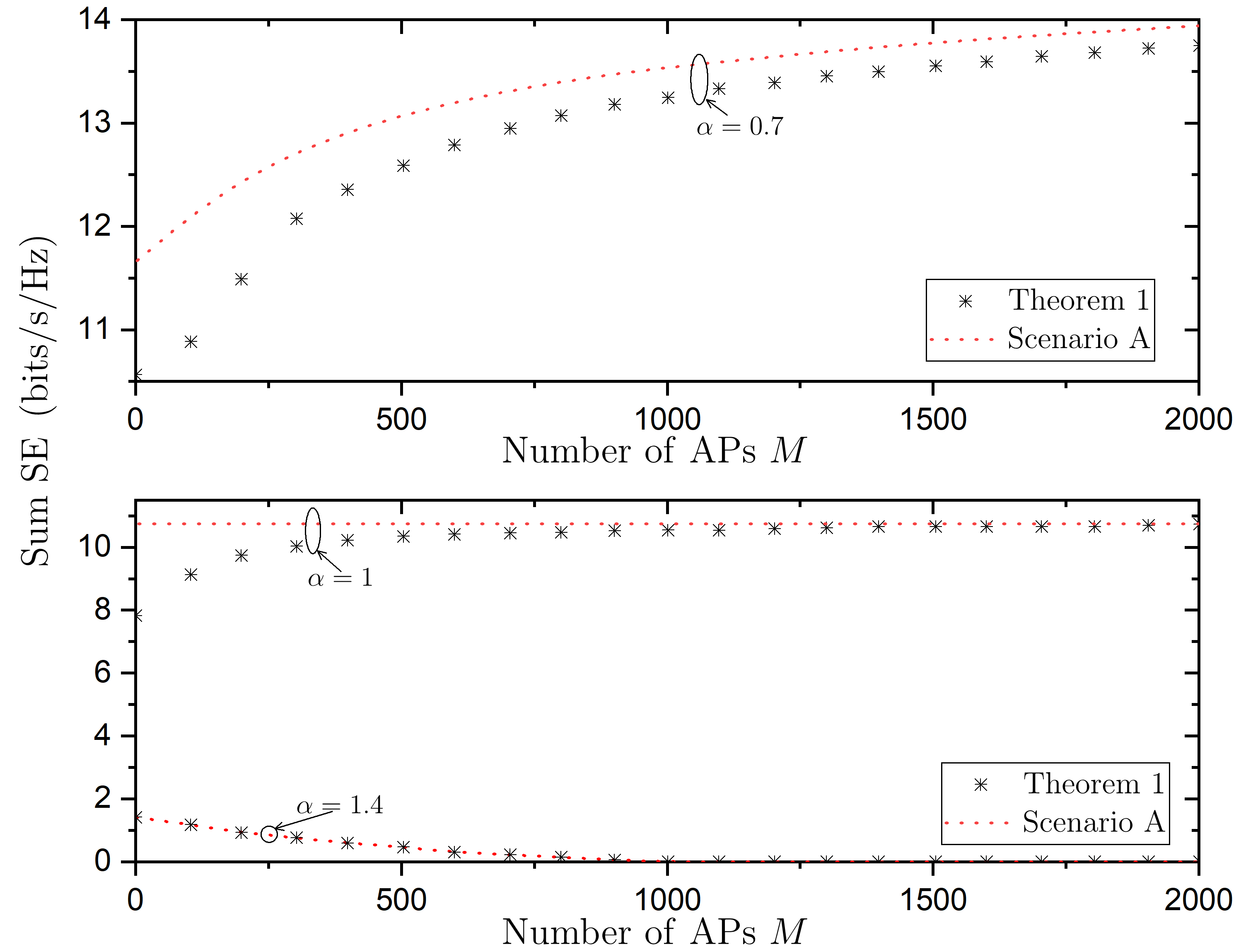}
				\caption{\footnotesize{Sum SE versus the number of APs $M$ by means of asymptotic (Scenario A) and exact analysis (Theorem \ref{theoremTotalSE}) for $ N=3 $,  $ W=5 $, $ p_{\mathrm{p}}= {E_{\mathrm{p}}}/{M^{\al} }$ with $ E_{\mathrm{p}}= 10~\mathrm{dB} $.}}
				\label{Fig3}
			\end{center}
		\end{figure}
		
		Fig. \ref{Fig6} represents Scenario C describing the interplay between the pilot symbol power and and the users/relay powers. The sums $ \al+\beta $ and $ \al+\gamma $ determine the behavior of the sum SE in the large number of APs limit. Hence, if we set $ \al=1.1 $, $ \beta= 1.2 $, $ \gamma= 0.4$ and $ \al=0.9 $, $ \beta= 1.4 $, $ \gamma= 0.6$, we observe that both lines converge to zero as $ M \to \infty $ since, in both cases, we have $ \al+\beta =2.3$ and $ \al+\gamma=1.5$. Notably, the two lines converge to each other (their gap decreases) as $ M $ increases, i.e., the asymptotic sum SE is the same because the two sums are kept identical. Furthermore, the line with $ \al=0.9 $ provides better SE for finite number of APs although the transmit relay power is cut down more because the channel is estimated with higher quality. The middle set of lines demonstrates that the sum SE grows without bound when both $ \al+\beta <1$ and $ \al+\gamma<1$ simultaneously. The third subfigure illustrates that if any of the two following conditions are satisfied, the sum SE approaches a non-zero limit. Specifically, if i) $ \al+\beta =1$ and $ \beta> \gamma\ge 0 $ or ii) $ \al+\gamma=1 $ and $ \gamma> \beta \ge 0 $, then $ R_{i} $ saturates.
		
		\begin{figure}[!h]
			\begin{center}
				\includegraphics[width=0.95\linewidth]{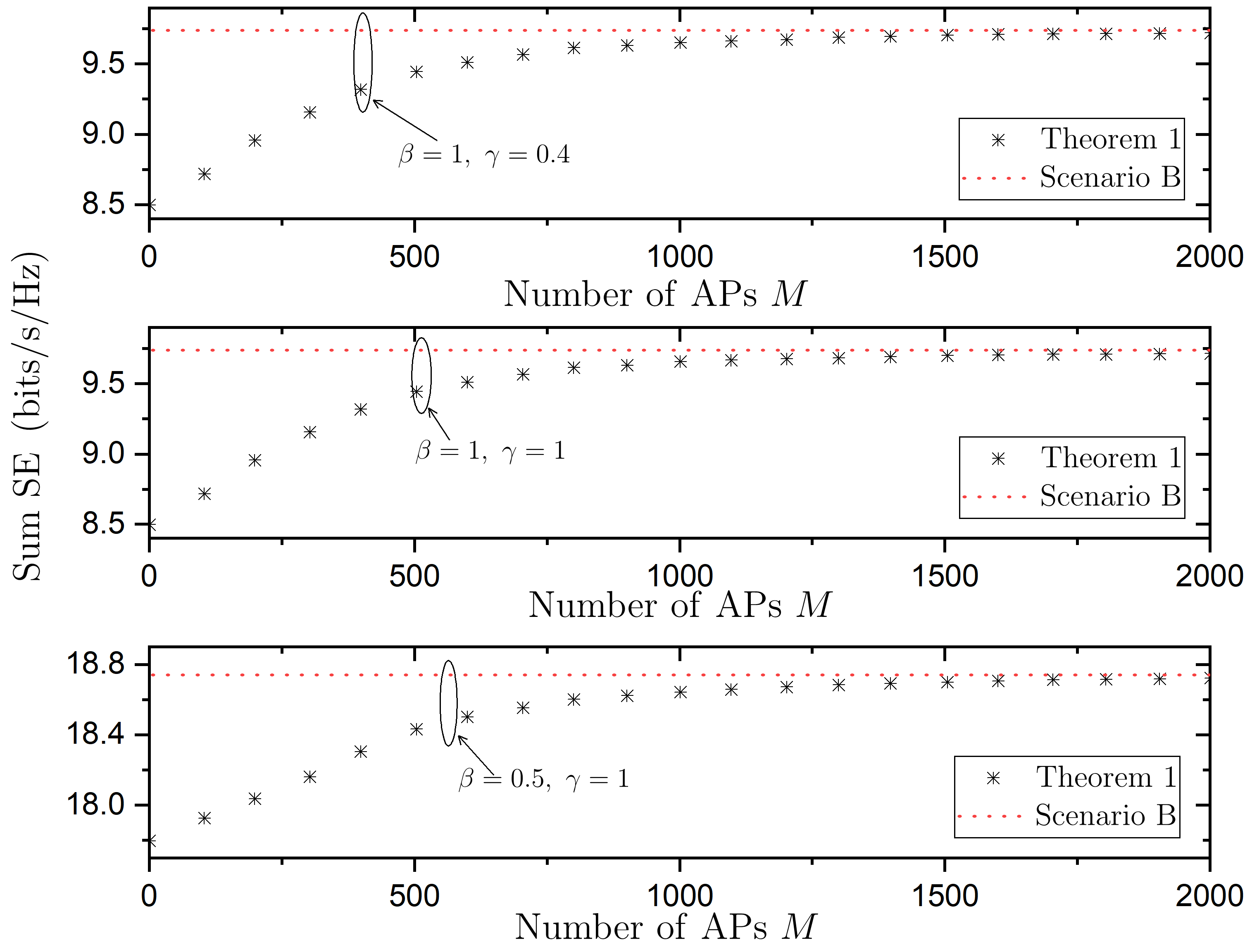}
				\caption{\footnotesize{Sum SE versus the number of APs $M$ by means of asymptotic (Scenario B) and exact analysis (Theorem \ref{theoremTotalSE}) for $ N=3 $, $ W=5 $, $ p_{\mathrm{u}}= {E_{\mathrm{u}}}/{M^{\beta} }$ with $ E_{\mathrm{u}}= 10~\mathrm{dB} $, and $ p_{\mathrm{r}}= {E_{\mathrm{r}}}/{M^{\gamma} }$ with $ E_{\mathrm{r}}= 10~\mathrm{dB} $ (finite limits).}}
				\label{Fig4}
			\end{center}
		\end{figure}
		\begin{figure}[!h]
			\begin{center}
				\includegraphics[width=0.95\linewidth]{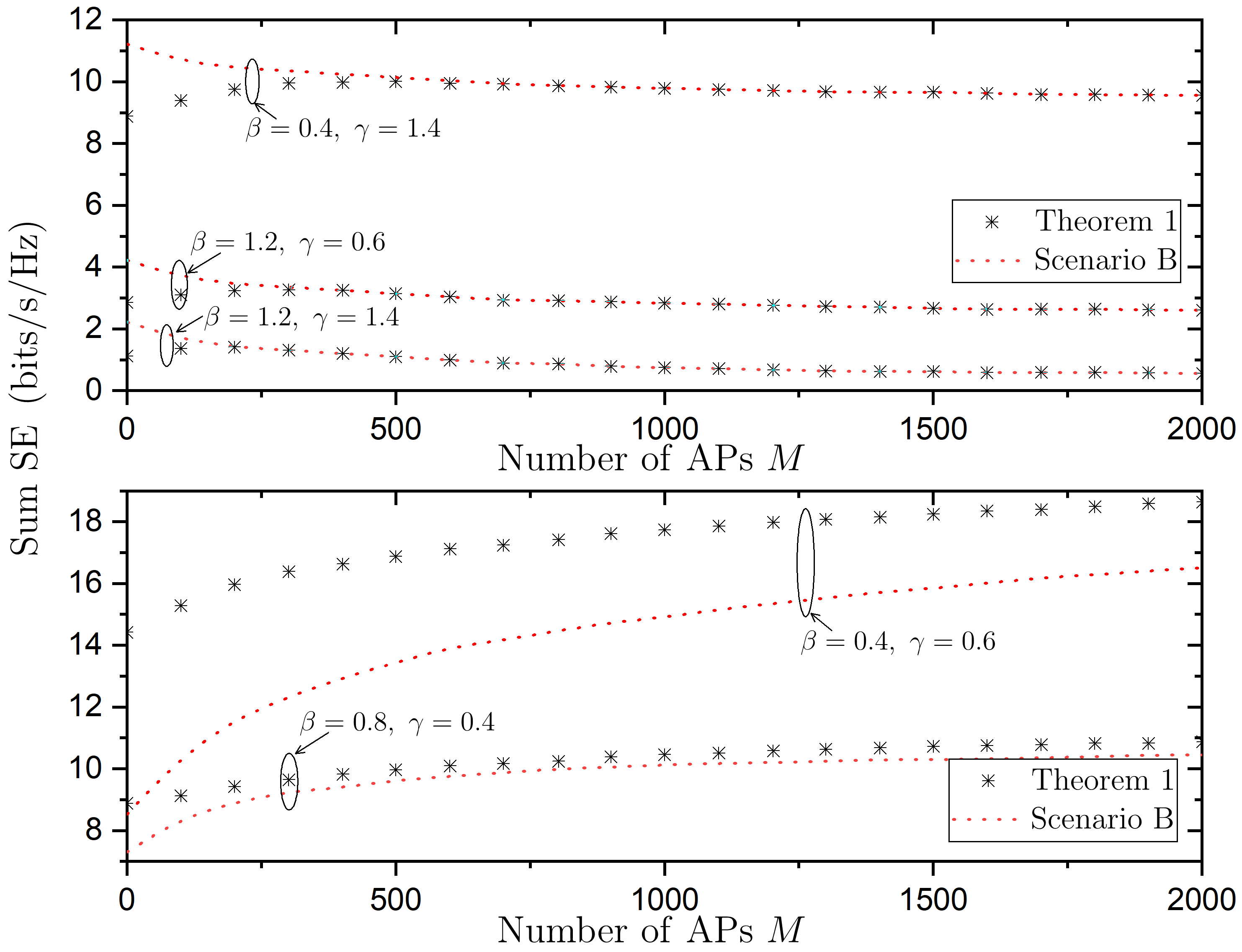}
				\caption{\footnotesize{Sum SE versus the number of APs $M$ by means of asymptotic (Scenario B) and exact analysis (Theorem \ref{theoremTotalSE}) for $ N=3 $,  $ W=5 $, $ p_{\mathrm{u}}= {E_{\mathrm{u}}}/{M^{\beta} }$ with $ E_{\mathrm{u}}= 10~\mathrm{dB} $, and $ p_{\mathrm{r}}= {E_{\mathrm{r}}}/{M^{\gamma} }$ with $ E_{\mathrm{r}}= 10~\mathrm{dB} $ (zero and unbounded limits).}}
				\label{Fig5}
			\end{center}
		\end{figure}
		\begin{figure}[!h]
			\begin{center}
				\includegraphics[width=0.95\linewidth]{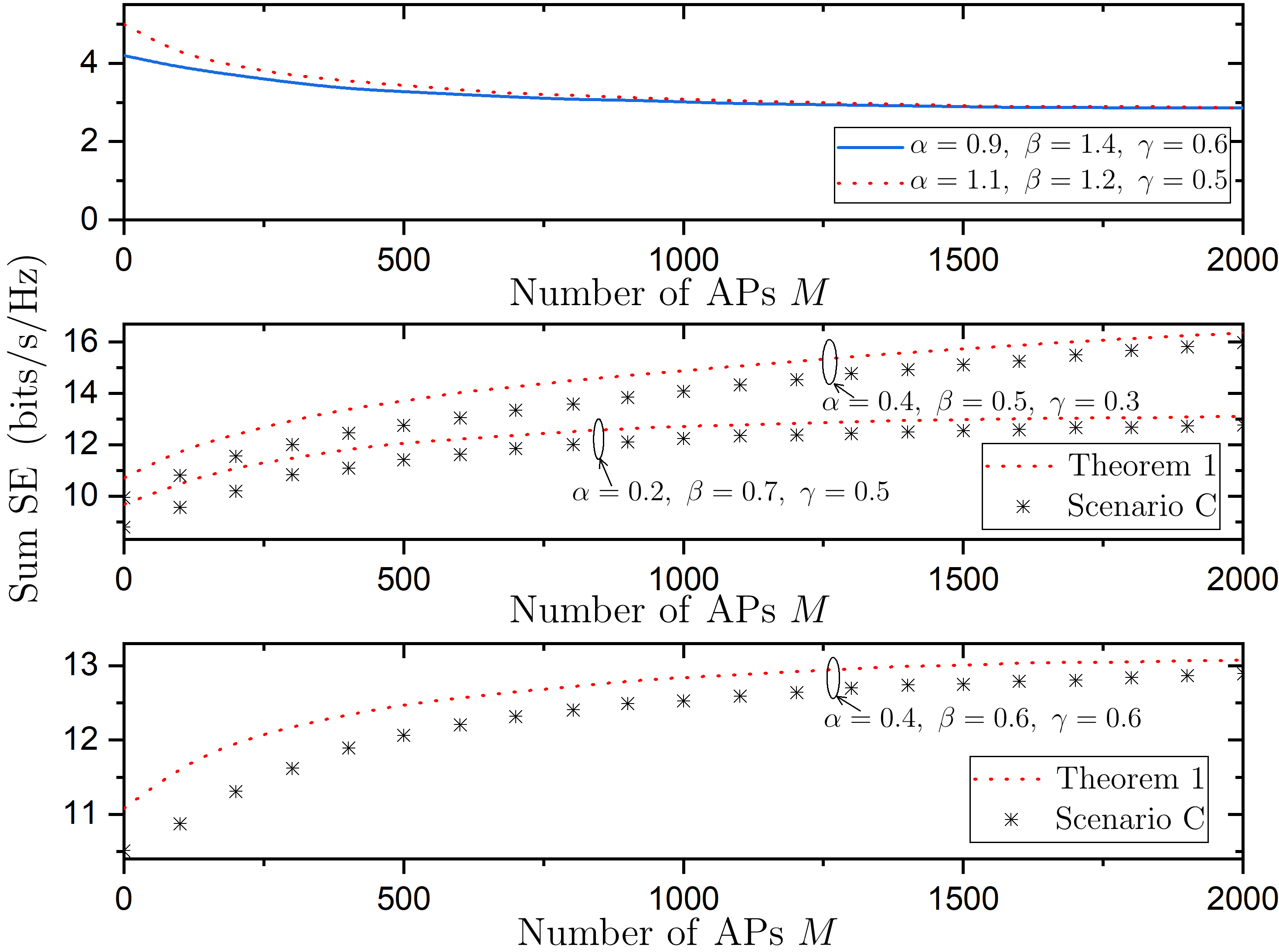}
				\caption{\footnotesize{Sum SE versus the number of APs $M$ by means of asymptotic (Scenario C) and exact analysis (Theorem \ref{theoremTotalSE}) for $ N=3 $, $ W=5 $, $ p_{\mathrm{p}}= {E_{\mathrm{p}}}/{M^{\al} }$ with $ E_{\mathrm{p}}= 10~\mathrm{dB} $, $ p_{\mathrm{u}}= {E_{\mathrm{u}}}/{M^{\beta} }$ with $ E_{\mathrm{u}}= 10~\mathrm{dB} $, and $ p_{\mathrm{r}}= {E_{\mathrm{r}}}/{M^{\gamma} }$ with $ E_{\mathrm{r}}= 10~\mathrm{dB} $ (zero, unbounded, and finite limits).}}
				\label{Fig6}
			\end{center}
		\end{figure}
		\begin{figure}[!h]
			\begin{center}
				\includegraphics[width=0.9\linewidth]{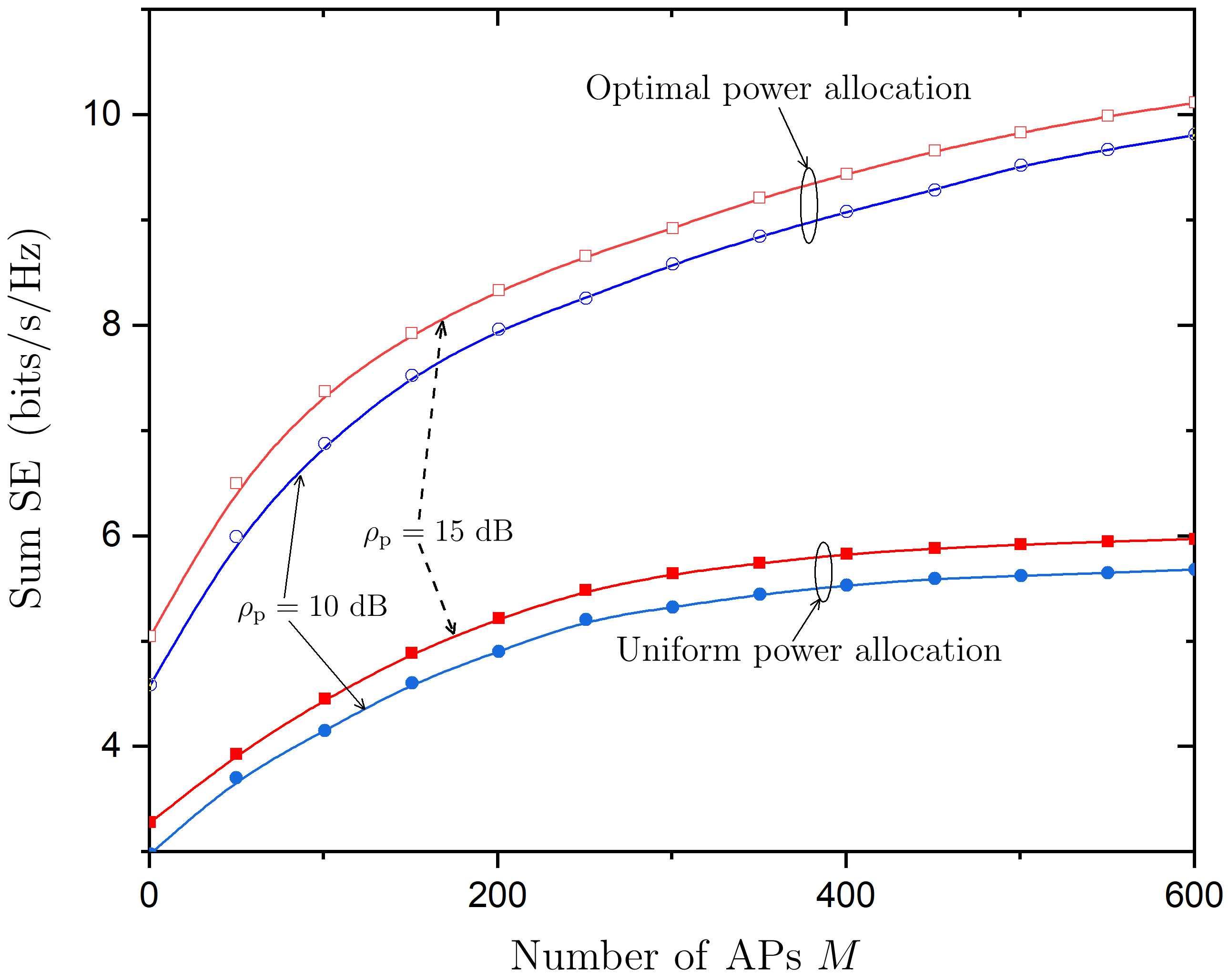}
				\caption{\footnotesize{Sum SE versus the number of APs $M$ by means of uniform (Theorem 1) and optimal (Algorithm 1) power allocations for $ p_{\mathrm{p}}= 10~\mathrm{dB} $ (blue lines) and $ p_{\mathrm{p}}= 15~\mathrm{dB} $ (red lines).}}
				\label{Fig7}
			\end{center}
		\end{figure}

		\subsection{Power allocation}\label{Powerallocation}
		Fig. \ref{Fig7} shows the performance of the two-way CF mMIMO system with optimal power allocation using Algorithm 1 when $ M=200 $ APs and with varying channel estimation accuracy by means of $ p_{\mathrm{p}} $ while $ \theta=1.1 $ to achieve good accuracy with reasonable convergence time. Also, we assume that the total power budget is $ P=10~\mathrm{dB} $. Moreover, we consider uniform power allocation by using Theorem \ref{theoremTotalSE} for the sake of comparison. Notably, the optimal power allocation performs better than uniform power allocation, and it results in an improvement of $ 63.2\% $ and $ 59.3\% $ when $ p_{\mathrm{p}} =10~dB$ and $ p_{\mathrm{p}} =15~dB$, respectively. In other words, a better channel estimation barely affects the sum SE.
		\section{Conclusion}\label{Conclusion}
		This paper investigated the sum SE of a multipair two-way HD relaying system assisted by a CF mMIMO architecture employing MR processing and accounting for imperfect CSI. Contrary to the common collocated mMIMO layout, the distributed CF mMIMO design achieves higher performance. Furthermore, power-scaling laws, achieving to scale the transmit powers of the users and APs while maintaining the desired SE, were obtained. Also, the trade-offs regarding these laws was examined. Finally, we performed an optimal power allocation concerning the transmit powers of the APs and users during the data transmission phase towards the improvement of the SE with comparison to uniform power allocation.
		\begin{appendices}
			\section{Proof of Theorem~\ref{theoremTotalSE}}\label{TotalSEproof}
			
			%
			We start with the derivation of $ \gamma_{i}^\text{MAC} $. The desired signals of $ \mathrm{T}_{\mathrm{A},i} $ and $ \mathrm{T}_{\mathrm{B},i} $ in \eqref{gammaMac} are written as
			\begin{align}
				\EE\Big\{\sum_{m=1}^{M}\!\!\left(\!\hatvh^{\H}_{mi}\bh_{mi}+\hatvg^{\H}_{mi}\bh_{mi}\!\right)\!\!\!\Big\}&\!=\!N
				\!\!\sum_{m=1}^{M}\!\!\phi_{\mathrm{A},mi},\label{MAC1}\\
				\EE\Big\{\sum_{m=1}^{M}\!\!\left(\!\hatvh^{\H}_{mi}\bg_{mi}+\hatvg^{\H}_{mi}\bg_{mi}\!\right)\!\!\!\Big\}&\!=\!N
				\!\!\sum_{m=1}^{M}\!\!\phi_{\mathrm{B},mi},\label{MAC2}
			\end{align}
			since $ \bh_{mi} $ and $ \bg_{mi} $ are independent.
			We continue with the derivations of $ \mathrm{EE}_{\mathrm{A},i}^\text{MAC} $, $ \mathrm{EE}_{\mathrm{B},i}^\text{MAC} $, $ \mathrm{IUI}_{i}^\text{MAC} $, and $ \mathrm{N}_{i}^\text{MAC} $. Specifically, we have
			\begin{align}
				\mathrm{EE}_{\mathrm{A},i}^\text{MAC}&=\eta_{\mathrm{A},i}\!\left(\!\var\Big\{\sum_{m=1}^{M}\hatvh^{\H}_{mi}\bh_{mi}\Big\}+\var\Big\{\sum_{m=1}^{M}\hatvg^{\H}_{mi}\bh_{mi}\Big\}\!\!\right)\!,\label{MAC31}
			\end{align}
			where \eqref{MAC31} is obtained because the variance of a sum of independent RVs is equal to the sum of the variances.
			The first term of \eqref{MAC31} is obtained as
			\begin{align}
				&\var\Big\{\sum_{m=1}^{M}\hatvh^{\H}_{mi}\bh_{mi}\Big\}=	\sum_{m=1}^{M} \EE\Big\{\Big|\hat{\bh}_{mi}^{\H}{\bh}_{mi}-\EE\Big\{\hat{\bh}_{mi}^{\H}{\bh}_{mi}\Big|^{2}\Big\}\Big\}\label{MAC34}\\
				&=\sum_{m=1}^{M} \left( \EE\Big\{\Big|\hat{\bh}_{mi}^{\H}{\bh}_{mi}\Big|^{2}\Big\}-\Big|\EE\{\hat{\bh}_{mi}^{\H}\hat{\bh}_{mi}\}\Big|^{2}\right)\nn\\
				&=\sum_{m=1}^{M} \left( \EE\Big\{\Big|\|\hat{\bh}_{mi}\|^{2}+\hat{\bh}_{mi}^{\H}\tilde{\bh}_{mi}\Big|^{2}\Big\}-N^{2}\phi_{\mathrm{A},mi}^{2}\right)\label{MAC39}\\
				&=\!\sum_{m=1}^{M}\!\! \left(\EE\Big\{\|\hat{\bh}_{mi}\|^{4}\Big\}\!+\! \EE\Big\{|\hat{\bh}_{mi}^{\H}\tilde{\bh}_{mi}|^{2}\Big\}\!-\!N^{2}\phi_{\mathrm{A},mi}^{2}\right)\label{MAC35}\\
				&=\!N\!\sum_{m=1}^{M}\!\! \left(\left(N+1\right)\phi^{2}_{\mathrm{A},mi}\!+\!\phi_{\mathrm{A},mi}{e}_{\mathrm{A},mi}\!-\!N^{2}\phi_{\mathrm{A},mi}^{2}\right)\label{MAC36}\\
				&=N\sum_{m=1}^{M} \al_{\mathrm{A},mi}\phi_{\mathrm{A},mi},\label{MAC38}
			\end{align}
			where \eqref{MAC34} follows again because the variance of a sum of independent RVs is equal to the sum of the variances. In \eqref{MAC39}, we have considered that $ \tilde{\bh}_{mi} $ is independent of $\hatvh_{mi} $ and has zero mean. The identity $\EE\{\|\hat{\bh}_{mi}\|^{4}\}=N\left(N+1 \right) \phi^{2}_{\mathrm{A},mi}$ has been used in \ref{MAC36}, and \eqref{MAC38} follows after some algebraic manipulations since $ e_{\mathrm{A},mi}=\al_{\mathrm{A},mi}- \phi_{\mathrm{A},mi} $.
			More easily, the second term of \eqref{MAC31} is given by
			\begin{align}
				\var\Big\{\sum_{m=1}^{M}\hatvg^{\H}_{mi}\bh_{mi}\Big\}=	N\sum_{m=1}^{M} \al_{\mathrm{A},mi}\phi_{\mathrm{B},mi}\label{MAC381}.
			\end{align}	
			Hence, $ 	\mathrm{EE}_{\mathrm{A},i}^\text{MAC} $ becomes my means of \eqref{MAC38} and \eqref{MAC381}
			\begin{align}
				\mathrm{EE}_{\mathrm{A},i}^\text{MAC}=N\eta_{\mathrm{A},i}\sum_{m=1}^{M} \al_{\mathrm{A},mi}\left(\phi_{\mathrm{A},mi}+\phi_{\mathrm{B},mi}\right)\label{MAC382}.
			\end{align}
			In the same way, we derive 
			\begin{align}
				\mathrm{EE}_{\mathrm{B},i}^\text{MAC}=N\eta_{\mathrm{B},i}\sum_{m=1}^{M} \al_{\mathrm{B},mi}\left(\phi_{\mathrm{A},mi}+\phi_{\mathrm{B},mi}\right)\label{MAC383}.
			\end{align}
			The term, describing the inter-user interference, is obtained as
			\begin{align}
				&\mathrm{IUI}_{i}^\text{MAC}\!=\!\sum_{j\ne i}^{W}\!\!\left(\! \eta_{\mathrm{A},j}
				\sum_{m=1}^{M}\!\!\left(\EE\{|\hatvh^{\H}_{mi}\bh_{mj}|^{2} \}+\EE\{|\hatvg^{\H}_{mi}\bh_{mj}|^{2}\}\!\right)\!\!\right)\nn\\
				&+ \sum_{j\ne i}^{W}\!\! \left(\eta_{\mathrm{B},j} \sum_{m=1}^{M}\left(\EE\{|\hatvh^{\H}_{mi}\bg_{mj}|^{2}\}+
				\EE\{|\sum_{m=1}^{M}\hatvg^{\H}_{mi}\bg_{mj}|^{2}\}\!\right)\!\!\right)\nn\\
				&=\sum_{j\ne i}^{W}\left( \eta_{\mathrm{A},j}
				\sum_{m=1}^{M}\left(\phi_{\mathrm{A},mi}\al_{\mathrm{A},mj}+\phi_{\mathrm{B},mi}\al_{\mathrm{B},mj}\right)\right)\nn\\
				&+\sum_{j\ne i}^{W} \left(\eta_{\mathrm{B},j} \sum_{m=1}^{M}\left(\phi_{\mathrm{A},mi}\al_{\mathrm{B},mj}+
				\phi_{\mathrm{B},mi}\al_{\mathrm{A},mi}\right)\right)\nn\\
				&=\!N\!\sum_{j\ne i}^{W}\!\sum_{m=1}^{M}\!\!\left( \eta_{\mathrm{A},j}\al_{\mathrm{A},mj}\!+\! \eta_{\mathrm{B},j}\al_{\mathrm{B},mj}\right)\!\! \left(\phi_{\mathrm{A},mi}\!+\!\phi_{\mathrm{B},mi}\right)\!,\!\!\!\label{MAC384}
			\end{align}
			where we have applied the property $ \EE\{|X+Y|^{2}\}=\EE\{|X|^{2}\}+\EE\{|Y|^{2}\} $, holding between two independent random variables $ X $ and $ Y $ with $ \EE\{X\}=0 $. Next, the noise term becomes
			\begin{align}
				\mathrm{N}_{i}^\text{MAC}&=\frac{1}{p_{\mathrm{u}}}\sum_{m=1}^{M}\left( \phi_{\mathrm{A},mi}+ \phi_{\mathrm{B},mi}\right).\label{MAC385}
			\end{align}
			Substitution of \eqref{MAC1}, \eqref{MAC2},\eqref{MAC382}, \eqref{MAC383}, \eqref{MAC384}, and \eqref{MAC385} into \eqref{gammaMac} and \eqref{received8} provides $ R_{i}^\text{MAC} $ as well as $ R_{\mathrm{X},i}^\text{MAC} $ for $ \mathrm{X}\in\{\mathrm{A},\mathrm{B}\} $.\\
			The proof continues with the derivation of $ R_{\mathrm{A},i}^\text{BC} $ by means of the computation of \eqref{BC1}-\eqref{BC5}. Specifically, regarding the desired signal, we have
			\begin{align}
				&\EE\Big\{\sum_{m=1}^{M}\!\!\sqrt{p_{\mathrm{d},m}\eta_{\mathrm{B},mi}}\bh_{mi}^{\T}\hat{\bh}_{mi}^{*}\Big\}\!=\!\EE\Big\{\!\sum_{m=1}^{M}\!\!\sqrt{p_{\mathrm{d},m}\eta_{\mathrm{B},mi}}\|\hat{\bh}_{mi}\|^{2}\Big\}\nn\\
				&\!+\!\EE\Big\{\sum_{m=1}^{M}\sqrt{p_{\mathrm{d},m}\eta_{\mathrm{B},mi}}\tilde{\bh}_{mi}^{\T}\hat{\bh}_{mi}^{*}\Big\}\nn\\
				&\!=\!N\sum_{m=1}^{M}\sqrt{p_{\mathrm{d},m}\eta_{\mathrm{B},mi}}\phi_{\mathrm{A},mi},\label{BC7}
			\end{align}
			where we have used the independence between $ \hatvh_{mi} $ and $ \tilde{\bh}_{mi} $.
			Also, $ \mathrm{BU}_{A_{i}}^\text{BC} $ and $ \mathrm{BU}_{B_{i}}^\text{BC} $ are obtained by following the same procedure with \eqref{MAC38}. In particular, we have
			\begin{align}
				\mathrm{BU}_{A_{i}}^\text{BC}
				&=N \sum_{m=1}^{M}\eta_{\mathrm{B},mi}\al_{\mathrm{A},mi}\phi_{\mathrm{A},mi},\label{BC8}\\
				\mathrm{BU}_{B_{i}}^\text{BC}&=N \sum_{m=1}^{M}\eta_{\mathrm{A},mi}\al_{\mathrm{A},mi}\phi_{\mathrm{B},mi}.\label{BC9}
			\end{align}
			Next, regarding $ \mathrm{IUI}_{A_{i}}^\text{BC} $, we have
			\begin{align}
				& \EE\Big\{\Big|\sum_{m=1}^{M}\eta_{\mathrm{A},mj}^{1/2}\bh_{mi}^{\T}\hat{\bg}_{mj}^{*}\Big|^{2}\Big\}= \sum_{m=1}^{M}\eta_{\mathrm{A},mj}\EE\{|\bh_{mi}^{\T}\hat{\bg}_{mj}^{*}|^{2}\}\nn\\
				&=N \sum_{m=1}^{M}\eta_{\mathrm{A},mj}\al_{\mathrm{A},mi}\phi_{\mathrm{B},mj},\label{BC10}
			\end{align}
			where in the first equation, we have used again that $ \EE\{|X+Y|^{2}\}=\EE\{|X|^{2}\}+\EE\{|Y|^{2}\} $.
			The last term, $ \mathrm{IUI}_{B_{i}}^\text{BC} $, is obtained similarly as
			\begin{align}
				\mathrm{IUI}_{B_{i}}^\text{BC} =N \sum_{m=1}^{M}\eta_{\mathrm{B},mj}\al_{\mathrm{A},mi}\phi_{\mathrm{A},mj}.\label{BC11}
			\end{align}
			By using \eqref{BC7}-\eqref{BC11} and \eqref{prPower}, we obtain $ R_{\mathrm{A},i}^\text{BC} $ and conclude the proof since $ R_{\mathrm{B},i}^\text{BC} $ can be derived in the same fashion.

			
			
		\end{appendices}
		\bibliographystyle{IEEEtran}
		
		\bibliography{mybib}
		
		
	\end{document}